\newcommand{\cmark}{\ding{51}}%
\newcommand{\xmark}{\ding{55}}%
\newcounter{example}
\newcounter{boxedd}
\definecolor{eggshell}{rgb}{0.94, 0.92, 0.84}
\definecolor{floralwhite}{rgb}{1.0, 0.98, 0.94}
\newmdenv[%
    style=example,
    settings={\global\refstepcounter{example}},
    frametitlefont={\bfseries\small Example bug:\quad},
]{example}
\setlist[itemize]{noitemsep, leftmargin=1em}
\setlist[enumerate]{noitemsep, leftmargin=1em}
\newcommand{\nnum}[1]{\sepnum{.}{,}{}{#1}}
\newcommand{\prim}[1]{\text{{\sffamily{\tt #1}}}}
\newcommand{\primsem}[1]{\text{{\sffamily{\scriptsize {\tt #1}}}}}
\setlist[itemize]{noitemsep, leftmargin=1em}
\setlist[enumerate]{noitemsep, leftmargin=1em}
\definecolor{dkred}{RGB}{87,10,10}
\definecolor{burntorange}{rgb}{0.8, 0.33, 0.0}
\definecolor{OrangeRed}{rgb}{1.0, 0.27, 0.0}
\definecolor{ForestGreen}{rgb}{0.0, 0.27, 0.13}
\definecolor{groovyblue}{HTML}{0000A0}
\definecolor{groovygreen}{HTML}{008000}
\definecolor{darkgray}{rgb}{.4,.4,.4}
\lstdefinelanguage{Groovy}[]{Java}{
  keywordstyle=\color{groovyblue}\bfseries,
  stringstyle=\color{groovygreen}\ttfamily,
  keywords=[3]{each, findAll, groupBy, collect, inject, eachWithIndex},
  morekeywords={def, as, in, use},
  moredelim=[is][\textcolor{darkgray}]{\%\%}{\%\%},
  moredelim=[il][\textcolor{darkgray}]{§§}
}
\lstdefinelanguage{Kotlin}{
  comment=[l]{//},
  emph={delegate, filter, first, firstOrNull,  lazy, map, mapNotNull, println, return@},
  emphstyle={\color{OrangeRed}},
  identifierstyle=\color{black},
  keywords={abstract, actual, as, as?, break, by, class, companion, continue, data, do, dynamic, else, enum, expect, false, final, for, fun, get, if, import, in, interface, internal, is, null, object, override, package, private, public, return, set, super, suspend, this, throw, true, try, typealias, val, var, vararg, when, where, while,
  inner, open},
  morecomment=[s]{/*}{*/},
  morestring=[b]",
  morestring=[s]{"""*}{*"""},
  ndkeywords={@Deprecated, @JvmField, @JvmName, @JvmOverloads, @JvmStatic, @JvmSynthetic, Array, Byte, Double, Float, Int, Integer, Iterable, Long, Runnable, Short, String},
  ndkeywordstyle={\color[rgb]{0,0,0}\bfseries},
  sensitive=true,
  stringstyle={\color[rgb]{0,0,1}\ttfamily},
}
\bfseries\color[rgb]{0, 0, 0},
\ttfamily\color[rgb]{0.133,0.545,0.133},
\ttfamily\color[rgb]{0,0,1},
\newcommand{\tool}{{\sc thalia}}
\newcommand{\heph}{{\sc hephaestus}}
\newcommand{\ir}{API-IR}
\newcommand{\graph}{API graph}
\newcommand{\empirical}[1]{#1}
\newcommand{\point}[1]{\par\smallskip\noindent{\textbf{#1:}} }
\newcommand{\libaname}{com.fasterxml.jackson.core:jackson-core\xspace}
\newcommand{\libanodes}{\nnum{6775}\xspace}
\newcommand{\libaedges}{\nnum{10158}\xspace}
\newcommand{\libapolym}{\nnum{16}\xspace}
\newcommand{\libamethods}{\nnum{1442}\xspace}
\newcommand{\libafields}{\nnum{54}\xspace}
\newcommand{\libacons}{\nnum{82}\xspace}
\newcommand{\libatypecon}{\nnum{4}\xspace}
\newcommand{\libatypes}{\nnum{98}\xspace}
\newcommand{\libainh}{\nnum{3.80}\xspace}
\newcommand{\libasig}{\nnum{2.47}\xspace}
\newcommand{\libbname}{com.google.guava:guava\xspace}
\newcommand{\libbnodes}{\nnum{10235}\xspace}
\newcommand{\libbedges}{\nnum{14469}\xspace}
\newcommand{\libbpolym}{\nnum{975}\xspace}
\newcommand{\libbmethods}{\nnum{4363}\xspace}
\newcommand{\libbfields}{\nnum{188}\xspace}
\newcommand{\libbcons}{\nnum{0}\xspace}
\newcommand{\libbtypecon}{\nnum{214}\xspace}
\newcommand{\libbtypes}{\nnum{412}\xspace}
\newcommand{\libbinh}{\nnum{3.31}\xspace}
\newcommand{\libbsig}{\nnum{2.68}\xspace}
\newcommand{\libcname}{org.apache.commons:commons-lang3\xspace}
\newcommand{\libcnodes}{\nnum{8559}\xspace}
\newcommand{\libcedges}{\nnum{11514}\xspace}
\newcommand{\libcpolym}{\nnum{272}\xspace}
\newcommand{\libcmethods}{\nnum{2643}\xspace}
\newcommand{\libcfields}{\nnum{307}\xspace}
\newcommand{\libccons}{\nnum{193}\xspace}
\newcommand{\libctypecon}{\nnum{93}\xspace}
\newcommand{\libctypes}{\nnum{211}\xspace}
\newcommand{\libcinh}{\nnum{2.80}\xspace}
\newcommand{\libcsig}{\nnum{2.60}\xspace}
\newcommand{\libdname}{org.apache.logging.log4j:log4j-api\xspace}
\newcommand{\libdnodes}{\nnum{7207}\xspace}
\newcommand{\libdedges}{\nnum{10805}\xspace}
\newcommand{\libdpolym}{\nnum{49}\xspace}
\newcommand{\libdmethods}{\nnum{1771}\xspace}
\newcommand{\libdfields}{\nnum{66}\xspace}
\newcommand{\libdcons}{\nnum{122}\xspace}
\newcommand{\libdtypecon}{\nnum{12}\xspace}
\newcommand{\libdtypes}{\nnum{134}\xspace}
\newcommand{\libdinh}{\nnum{3.77}\xspace}
\newcommand{\libdsig}{\nnum{4.05}\xspace}
\newcommand{\libename}{org-assertj-assertj-core\xspace}
\newcommand{\libenodes}{\nnum{11321}\xspace}
\newcommand{\libeedges}{\nnum{16169}\xspace}
\newcommand{\libepolym}{\nnum{633}\xspace}
\newcommand{\libemethods}{\nnum{5250}\xspace}
\newcommand{\libefields}{\nnum{179}\xspace}
\newcommand{\libecons}{\nnum{0}\xspace}
\newcommand{\libetypecon}{\nnum{151}\xspace}
\newcommand{\libetypes}{\nnum{656}\xspace}
\newcommand{\libeinh}{\nnum{4.41}\xspace}
\newcommand{\libesig}{\nnum{2.67}\xspace}
\newcommand{\libfname}{org.clojure:clojure\xspace}
\newcommand{\libfnodes}{\nnum{9032}\xspace}
\newcommand{\libfedges}{\nnum{12532}\xspace}
\newcommand{\libfpolym}{\nnum{1}\xspace}
\newcommand{\libfmethods}{\nnum{2676}\xspace}
\newcommand{\libffields}{\nnum{466}\xspace}
\newcommand{\libfcons}{\nnum{169}\xspace}
\newcommand{\libftypecon}{\nnum{2}\xspace}
\newcommand{\libftypes}{\nnum{624}\xspace}
\newcommand{\libfinh}{\nnum{3.85}\xspace}
\newcommand{\libfsig}{\nnum{3.23}\xspace}
\newcommand{\libgname}{org.mockito:mockito-core\xspace}
\newcommand{\libgnodes}{\nnum{5783}\xspace}
\newcommand{\libgedges}{\nnum{8048}\xspace}
\newcommand{\libgpolym}{\nnum{98}\xspace}
\newcommand{\libgmethods}{\nnum{510}\xspace}
\newcommand{\libgfields}{\nnum{9}\xspace}
\newcommand{\libgcons}{\nnum{0}\xspace}
\newcommand{\libgtypecon}{\nnum{27}\xspace}
\newcommand{\libgtypes}{\nnum{136}\xspace}
\newcommand{\libginh}{\nnum{3.39}\xspace}
\newcommand{\libgsig}{\nnum{2.28}\xspace}
\newcommand{\libhname}{groovy-stdlib\xspace}
\newcommand{\libhnodes}{\nnum{14817}\xspace}
\newcommand{\libhedges}{\nnum{22303}\xspace}
\newcommand{\libhpolym}{\nnum{461}\xspace}
\newcommand{\libhmethods}{\nnum{10982}\xspace}
\newcommand{\libhfields}{\nnum{1108}\xspace}
\newcommand{\libhcons}{\nnum{1159}\xspace}
\newcommand{\libhtypecon}{\nnum{164}\xspace}
\newcommand{\libhtypes}{\nnum{1276}\xspace}
\newcommand{\libhinh}{\nnum{4.21}\xspace}
\newcommand{\libhsig}{\nnum{2.46}\xspace}
\newcommand{\libiname}{kotlin-stdlib\xspace}
\newcommand{\libinodes}{\nnum{6020}\xspace}
\newcommand{\libiedges}{\nnum{8986}\xspace}
\newcommand{\libipolym}{\nnum{205}\xspace}
\newcommand{\libimethods}{\nnum{3808}\xspace}
\newcommand{\libifields}{\nnum{1108}\xspace}
\newcommand{\libicons}{\nnum{438}\xspace}
\newcommand{\libitypecon}{\nnum{70}\xspace}
\newcommand{\libitypes}{\nnum{395}\xspace}
\newcommand{\libiinh}{\nnum{5.47}\xspace}
\newcommand{\libisig}{\nnum{2.37}\xspace}
\newcommand{\libjname}{scala-stdlib\xspace}
\newcommand{\libjnodes}{\nnum{3933}\xspace}
\newcommand{\libjedges}{\nnum{5928}\xspace}
\newcommand{\libjpolym}{\nnum{390}\xspace}
\newcommand{\libjmethods}{\nnum{2439}\xspace}
\newcommand{\libjfields}{\nnum{688}\xspace}
\newcommand{\libjcons}{\nnum{246}\xspace}
\newcommand{\libjtypecon}{\nnum{93}\xspace}
\newcommand{\libjtypes}{\nnum{331}\xspace}
\newcommand{\libjinh}{\nnum{4.23}\xspace}
\newcommand{\libjsig}{\nnum{2.49}\xspace}
\newcommand{\libkname}{other\xspace}
\newcommand{\libknodes}{\nnum{6900}\xspace}
\newcommand{\libkedges}{\nnum{9974}\xspace}
\newcommand{\libkpolym}{\nnum{43}\xspace}
\newcommand{\libkmethods}{\nnum{1330}\xspace}
\newcommand{\libkfields}{\nnum{150}\xspace}
\newcommand{\libkcons}{\nnum{97}\xspace}
\newcommand{\libktypecon}{\nnum{16}\xspace}
\newcommand{\libktypes}{\nnum{204}\xspace}
\newcommand{\libkinh}{\nnum{3.32}\xspace}
\newcommand{\libksig}{\nnum{2.37}\xspace}
\newcommand{\avgname}{Avg\xspace}
\newcommand{\avgnodes}{\nnum{7052}\xspace}
\newcommand{\avgedges}{\nnum{10192}\xspace}
\newcommand{\avgpolym}{\nnum{71}\xspace}
\newcommand{\avgmethods}{\nnum{1563}\xspace}
\newcommand{\avgfields}{\nnum{177}\xspace}
\newcommand{\avgcons}{\nnum{112}\xspace}
\newcommand{\avgtypecon}{\nnum{23}\xspace}
\newcommand{\avgtypes}{\nnum{227}\xspace}
\newcommand{\avginh}{\nnum{3.38}\xspace}
\newcommand{\avgsig}{\nnum{2.4}\xspace}
\newcommand{\gconfirmed}{\nnum{38}\xspace}
\newcommand{\gfix}{\nnum{20}\xspace}
\newcommand{\gwont}{\nnum{2}\xspace}
\newcommand{\gdupl}{\nnum{2}\xspace}
\newcommand{\gcrash}{\nnum{8}\xspace}
\newcommand{\gucte}{\nnum{47}\xspace}
\newcommand{\gurb}{\nnum{6}\xspace}
\newcommand{\gcpi}{\nnum{1}\xspace}
\newcommand{\ggenerator}{\nnum{32}\xspace}
\newcommand{\gsoundness}{\nnum{9}\xspace}
\newcommand{\ginference}{\nnum{21}\xspace}
\newcommand{\gtotal}{\nnum{62}\xspace}
\newcommand{\kconfirmed}{\nnum{9}\xspace}
\newcommand{\kfix}{\nnum{1}\xspace}
\newcommand{\kwont}{\nnum{1}\xspace}
\newcommand{\kdupl}{\nnum{0}\xspace}
\newcommand{\kcrash}{\nnum{3}\xspace}
\newcommand{\kucte}{\nnum{7}\xspace}
\newcommand{\kurb}{\nnum{0}\xspace}
\newcommand{\kcpi}{\nnum{1}\xspace}
\newcommand{\kgenerator}{\nnum{8}\xspace}
\newcommand{\ksoundness}{\nnum{0}\xspace}
\newcommand{\kinference}{\nnum{3}\xspace}
\newcommand{\ktotal}{\nnum{11}\xspace}
\newcommand{\sconfirmed}{\nnum{8}\xspace}
\newcommand{\sfix}{\nnum{1}\xspace}
\newcommand{\swont}{\nnum{1}\xspace}
\newcommand{\sdupl}{\nnum{1}\xspace}
\newcommand{\scrash}{\nnum{4}\xspace}
\newcommand{\sucte}{\nnum{7}\xspace}
\newcommand{\surb}{\nnum{0}\xspace}
\newcommand{\scpi}{\nnum{0}\xspace}
\newcommand{\sgenerator}{\nnum{7}\xspace}
\newcommand{\ssoundness}{\nnum{1}\xspace}
\newcommand{\sinference}{\nnum{3}\xspace}
\newcommand{\stotal}{\nnum{11}\xspace}
\newcommand{\tconfirmed}{\nnum{55}\xspace}
\newcommand{\tfix}{\nnum{22}\xspace}
\newcommand{\twont}{\nnum{4}\xspace}
\newcommand{\tdupl}{\nnum{3}\xspace}
\newcommand{\tcrash}{\nnum{15}\xspace}
\newcommand{\tucte}{\nnum{61}\xspace}
\newcommand{\turb}{\nnum{6}\xspace}
\newcommand{\tcpi}{\nnum{2}\xspace}
\newcommand{\tgenerator}{\nnum{47}\xspace}
\newcommand{\tsoundness}{\nnum{10}\xspace}
\newcommand{\tinference}{\nnum{27}\xspace}
\newcommand{\treal}{\nnum{77}\xspace}
\newcommand{\ttotal}{\nnum{84}\xspace}
\newcommand{\parameterizedclass}{\nnum{58}\xspace}
\newcommand{\parameterizedtype}{\nnum{58}\xspace}
\newcommand{\overloading}{\nnum{22}\xspace}
\newcommand{\llambda}{\nnum{10}\xspace}
\newcommand{\parameterizedfunction}{\nnum{45}\xspace}
\newcommand{\sam}{\nnum{30}\xspace}
\newcommand{\staticmethod}{\nnum{2}\xspace}
\newcommand{\wildcardtype}{\nnum{19}\xspace}
\newcommand{\functionreference}{\nnum{21}\xspace}
\newcommand{\boundedtypeparameter}{\nnum{15}\xspace}
\newcommand{\arraytype}{\nnum{7}\xspace}
\newcommand{\inheritance}{\nnum{8}\xspace}
\newcommand{\subtyping}{\nnum{7}\xspace}
\newcommand{\variabletypeinference}{\nnum{2}\xspace}
\newcommand{\typeargumentinference}{\nnum{23}\xspace}
\newcommand{\recursiveupperbound}{\nnum{3}\xspace}
\newcommand{\variableargument}{\nnum{5}\xspace}
\newcommand{\primitivetype}{\nnum{3}\xspace}
\newcommand{\bridgemethod}{\nnum{1}\xspace}
\newcommand{\defaultmethod}{\nnum{1}\xspace}
\newcommand{\conditionals}{\nnum{4}\xspace}
\newcommand{\innerclass}{\nnum{3}\xspace}
\newcommand{\accessmodifier}{\nnum{2}\xspace}
\newcommand{\nullabletype}{\nnum{2}\xspace}
\newcommand{\operator}{\nnum{1}\xspace}
\newcommand{\groovylocavg}{\nnum{13}\xspace}
\newcommand{\groovysizeavg}{\nnum{1.6}\xspace}
\newcommand{\scalalocavg}{\nnum{11}\xspace}
\newcommand{\scalasizeavg}{\nnum{1.6}\xspace}
\newcommand{\kotlinlocavg}{\nnum{11}\xspace}
\newcommand{\kotlinsizeavg}{\nnum{1.4}\xspace}
\newcommand{\groovylinemissed}{\nnum{1118}\xspace}
\newcommand{\groovybranchmissed}{\nnum{5395}\xspace}
\newcommand{\groovyfunctionmissed}{\nnum{173}\xspace}
\newcommand{\scalabranchmissed}{\nnum{22334}\xspace}
\newcommand{\scalafunctionmissed}{\nnum{654}\xspace}
\newcommand{\kotlinlinemissed}{\nnum{2567}\xspace}
\newcommand{\kotlinbranchmissed}{\nnum{17548}\xspace}
\newcommand{\kotlinfunctionmissed}{\nnum{581}\xspace}
\begin{document}
\sloppy

\title{Extended Paper: API-driven Program Synthesis for Testing Static Typing Implementations}

\author{Thodoris Sotiropoulos}
\affiliation{%
  \institution{ETH Zurich}
  \country{Switzerland}
}
\email{theodoros.sotiropoulos@inf.ethz.ch}

\author{Stefanos Chaliasos}
\affiliation{%
  \institution{Imperial College London}
  \country{UK}
}
\email{s.chaliasos21@imperial.ac.uk}

\author{Zhendong Su}
\affiliation{%
  \institution{ETH Zurich}
  \country{Switzerland}
}
\email{zhendong.su@inf.ethz.ch}

\begin{abstract}

We introduce a novel approach
for testing static typing implementations
based on the concept of
{\it API-driven program synthesis}.
The idea is to synthesize type-intensive but
small and well-typed programs
by leveraging and combining
{\it application programming interfaces (APIs)}
derived from existing software libraries.
Our primary insight is backed up by real-world evidence:
a significant number of compiler typing bugs are caused by
small test cases that employ APIs
from the standard library of the language under test.
This is attributed to the inherent complexity
of the majority of these APIs,
which often exercise a wide range of sophisticated
type-related features.
The main contribution of our approach
is the ability to produce small client programs
with increased feature coverage,
without bearing the burden of generating 
the corresponding well-formed API definitions from scratch.
To validate diverse aspects of static typing procedures
(i.e., soundness, precision of type inference),
we also enrich our API-driven approach
with fault-injection and semantics-preserving
modes, along with their corresponding test oracles.

We evaluate our implemented tool, \tool,
on testing the static typing implementations
of the compilers for three popular languages,
namely, Scala, Kotlin, and Groovy.
\tool~has uncovered~$\ttotal$ typing bugs
($\treal$ confirmed and $\tfix$ fixed),
most of which are triggered by
test cases featuring APIs that rely
on parametric polymorphism,
overloading,
and higher-order functions.
Our comparison with state-of-the-art shows 
that~\tool~yields test programs with
distinct characteristics,
offering additional and complementary benefits.

\end{abstract}

\begin{CCSXML}
<ccs2012>
   <concept>
       <concept_id>10011007.10011006.10011041</concept_id>
       <concept_desc>Software and its engineering~Compilers</concept_desc>
       <concept_significance>500</concept_significance>
       </concept>
   <concept>
       <concept_id>10011007.10011074.10011099.10011102.10011103</concept_id>
       <concept_desc>Software and its engineering~Software testing and debugging</concept_desc>
       <concept_significance>500</concept_significance>
       </concept>
 </ccs2012>
\end{CCSXML}

\ccsdesc[500]{Software and its engineering~Compilers}
\ccsdesc[500]{Software and its engineering~Software testing and debugging}

\keywords{compiler bugs, compiler testing,
          static typing, API, library, enumeration}

\maketitle

\section{Introduction}
\label{sec:intro}

Type safety is a fundamental property
contributing to the correct
execution of computer programs.
{\it Static typing},
an integral process
in every statically-typed programming language
implementation (typically part of the compiler),
verifies that a source program is type-correct
and type-safe based on a {\it type system}.
A type system lies at the heart
of the design of a language---it outlines
a set of rules regarding the language's types,
and how these types can be used and combined~\cite{types-and-langs}.
Language designers and researchers strive
to build sound type systems~\cite{taming-wildcard,type-protocol,sound-concurrency},
which guarantee type safety,
i.e., being able to identify
all potential type errors
during compilation.

At the same time,
modern languages continuously evolve
with new (and often sophisticated)
features to provide users
with a smoother programming experience.
However,
integrating these new features
into a language poses
significant challenges and considerations
for implementing sound and practical type systems.
Indeed,
recent work~\cite{typing-study,hephaestus} has shown
that the static typing implementations of
well-established compilers
suffer from a substantial number of bugs~\cite{java-scala-unsoundness}.
These implementation flaws lead to
(1) reliability and security ramifications
on the programs compiled with the faulty compilers,
and (2) degradation of the programmers’ experience and productivity.
In particular,
compiler typing bugs
typically cause
frustrating rejections of well-typed programs,
dangerous acceptances of erroneous, type-unsafe programs,
or annoying crashes
and compilation performance issues.

In spite of the sharp rise of compiler
testing research,
improving the reliability of compilers' type checkers
has been until recently a neglected problem.
Indeed,
most of the existing
compiler testing techniques
target optimizing compilers~\cite{csmith,emi,skeletal,shader,yarpgen,yarpgen2}.
As shown in the empirical work of~\citet{typing-study},
detecting typing bugs
demands new compiler testing methods,
because typing bugs exhibit
distinct characteristics
(e.g., symptoms, root causes)
compared to optimization bugs.

Currently,
there are only a few approaches 
to validating static typing implementations.
Focusing on Rust's type checker,
\citet{rust} have introduced a fuzzing approach
based
on constraint logic programming.
More recently,
\citet{hephaestus} have developed~\heph,
which produces type-intensive programs written
in an intermediate language (IR) that supports
parametric polymorphism and type inference.
This high-level IR allows
targeting multiple languages
(i.e., Java, Groovy, and Kotlin).

All the aforementioned approaches rely on
\emph{generative compiler testing}:
constructing programs entirely from scratch
based on the syntactic and semantic rules
of a language under test.
However,
generative compiler testing involves a significant limitation:
it is unable to test features beyond
what the underlying code generator can handle
and synthesize.
Therefore,
the utility of generative techniques
can easily saturate~\cite{saturation},
producing programs that exhibit the same programming idioms,
unless new language constructs are implemented
in the program generator.

\point{Approach}
We introduce a novel approach
for testing static typing implementations 
based on the concept of \emph{API-driven program synthesis}.
Our approach is motivated by
the findings of a recent
empirical study~\cite{typing-study}:
around one third
of compiler typing bugs
are triggered by
small test cases that employ
{\it application programming interfaces (APIs)}
from the standard library of the language being tested.
This finding is well-justified,
considering the inherent complexity
of the included API definitions
(e.g., functions, variables, types),
which heavily rely on
advanced type-related features
such as parametric polymorphism.
Building upon this finding,
our approach leverages the abundance of existing APIs,
and produces small (yet complex) client programs,
\emph{without} the burden of producing
the corresponding well-formed definitions from scratch.
Our approach exploits the fact that
although libraries are pre-compiled,
a compiler still performs type checking on
every client program that refers
to components of a library API.
This ensures that the features provided by the API
are used in a type-safe manner.

At the high level,
our approach works as follows.
The input is an API,
which is modeled as an~\emph{\graph},
a structure that captures
the dependencies and relations among
API components.
Then,
our approach proceeds with the concept of~\emph{API enumeration}:
for every component $d$ (i.e., method/field) of the input API,
it systematically explores
all unique, type-safe usages of $d$
w.r.t. the signature of $d$.
Invoking an API component
through different typing patterns
lets us exercise the implementation
of many type-related operations in the compiler
(e.g., subtyping rules).
The outcome of the API enumeration process
is a finite set of {\it abstract typed expressions}.
An abstract typed expression
is encoded as a sequence of types
that represents which specific types are combined
to employ a certain API entity.
In turn,
our approach concretizes
each abstract typed expression
by replacing each type in the sequence
with an {\it inhabitant}~\cite{type-inhabitant}
that re-uses code from the input library.
To synthesize an inhabitant,
the approach consults the~\graph~to
identify sequences of method calls or field accesses
that match the given type.
By default,
our method
yields well-typed client programs.
Consequently,
a compiler is expected to
successfully
accept these generated programs.
Failures to do so
indicate potential bugs
in the compiler.

Our method is also equipped with two modes
that allow the discovery of type inference
and soundness bugs respectively.
Specifically,
the first mode produces well-typed programs
with omitted type information.
To do so,
when encountering a polymorphic call,
our method provides no explicit type arguments
whenever these can be inferred by
the surrounding context.
The second mode
produces ill-typed programs
by enumerating all those abstract-typed expressions
that employ a specific API component in a way
that violates typing rules.
This mode enables the detection of soundness bugs
by finding cases where
the compiler mistakenly
accepts ill-typed programs.

\point{Results}
Our implementation,
which we call~\tool,\footnote{In Greek mythology,
Thalia was a nymph daughter of the smithing god Hephaestus.}
produces programs written
in three popular languages:
Scala, Groovy, and Kotlin.
In our evaluation,
we collected a large corpus of popular APIs taken
from Maven's central software repository.
Based on the collected APIs,
\tool~produced small client programs
that were able to trigger~$\ttotal$ unique bugs,
of which~$\treal$ have been either confirmed
or fixed.
Our two modes
helped identify~$\tinference$
type inference bugs,
and~$\tsoundness$ bugs triggered by wrongly-typed code.
When comparing our work with the state-of-the-art
framework~\heph~\cite{hephaestus},
we find that,
despite producing test programs
one order of magnitude smaller than prior work,
\tool~is able to detect at least~\empirical{42}~bugs missed by it.
Furthermore,
\tool's programs exercise previously unexplored compiler regions,
leading to an increase in \heph' code coverage
by up to $\empirical{9\%}$ (\empirical{\nnum{\kotlinlinemissed}}) for lines of code,
\empirical{10\%} (\empirical{\nnum{\kotlinbranchmissed}}) for branches,
and~\empirical{8\%} (\empirical{\nnum{\kotlinfunctionmissed}}) for functions.

\point{Contributions}
Our work makes the following contributions.
\begin{itemize}
\item The notion of API enumeration,
    which systematically examines API component usages
    via diverse valid (or invalid) typing patterns
    (aka abstract-typed expressions).
    This allows the exploration of complex type-related functionalities
    in the compiler,
    making our approach suitable for testing.
\item A novel API-driven program synthesis
    approach for producing small yet complex
    client programs based on well-typed (or ill-typed)
    abstract-typed expressions via API enumeration.
\item An open-source implementation called~\tool,
    which is able to produce client code
    written in three different languages:
    Scala, Kotlin, and Groovy.
\item An extensive evaluation of~\tool~covering several aspects,
    including
    bug-finding capability,
    characteristics of test cases,
    code coverage,
    and comparison with state-of-the-art.
    Overall,
    \tool~uncovered~$\ttotal$ new faults triggered
    in three widely-used compilers.
    These failures were caused by a plethora of language features
    available in the input APIs.
\end{itemize}

\point{Availability}
\tool\ is available as open-source software
under the GNU General Public License v3.0
at~\url{https://github.com/hephaestus-compiler-project/thalia}.
The research artifact is available
at~\url{https://zenodo.org/record/8425071}.

\section{Background and Illustrative Examples}
\label{sec:background}

This section
presents the background
and motivation of our API-driven approach
by discussing the limitations of
the existing techniques in generative compiler testing,
especially those of~\heph,
and how we address them.

\point{Limitations of Hephaestus}
\heph~\cite{hephaestus} is the state-of-the-art tool
for finding compiler typing bugs.
It adopts a generative process similar to that of Csmith~\cite{csmith}.
Specifically,
\heph\ creates a bunch of random class/method/field definitions
that respect the syntax and the semantics of a target language
(in this case~\heph' IR).
This generative process involves three major limitations:
\begin{itemize}
\item The generated programs contain~\emph{only}
a limited set of features. 
This means that the program generator is unable
to exercise a certain feature, if this feature is not
supported by the implementation.
Extending the program generator with new features
is technically hard,
and often only works for a specific
language.
This is because the program generator needs to apply
a set of language-specific semantic checks
to ensure the validity of
the generated programs.
For example,
creating a well-formed class that
implements multiple interfaces
requires checking that the interfaces
have no conflicting method signatures.

\item Even after using diverse configurations,
the generative process easily comes to
a saturation point~\cite{saturation},
because the resulting programs are somewhat biased in
generating programs that exhibit the same programming idioms.

\item To test a new language (e.g., Rust),
someone typically needs to engineer a new program generator.
\end{itemize}

\begin{figure}[t]
\centering
\begin{subfigure}{1\linewidth}
\begin{lstlisting}[language=scala]
import java.util.LinkedList
import com.google.common.collect.LinkedHashMultiset
def test(): Unit {
    val x: Tuple1[LinkedHashMultiset[String]] = ???
    val res: Any = x._1.retainAll(new LinkedList())
}
\end{lstlisting}
\vspace{-3mm}
\caption{Test case written in Scala.}
\label{fig:scala-test-case}
\end{subfigure}
\hfill
\begin{subfigure}{1\linewidth}
\begin{lstlisting}[language=java]
package com.google.common.collect;
public class LinkedHashMultiset<T> extends AbstractMultiSet<T> {  }
class AbstractMultiSet<T> implements MultiSet<T> {
  public void retailAll(Collection<?> p) {}
}
\end{lstlisting}
\vspace{-2mm}
\caption{Definition of the LinkedHashMultiset API in the~\href{https://github.com/google/guava}{guava} library.}
\label{fig:scala-ex-def}
\end{subfigure}
\vspace{-4mm}
\caption{\href{https://github.com/lampepfl/dotty/issues/17391}{DOTTY-17391}:
This program triggers a crash in Dotty.
The program exercises the~\href{https://github.com/google/guava}{guava} API.
}
\label{fig:scala-ex}
\vspace{-4mm}
\end{figure}

\point{The benefits and power of APIs}
To tackle these limitations of~\heph,
we introduce a complementary approach
that relies on real-world~\emph{APIs},
which are well-engineered,
expressive,
and utilize diverse language features.
Rather than generating large programs
consisting of random definitions,
we synthesize small but intricate client programs
that invoke components (e.g., method)
from a given API.
This allows us to
(1) effortlessly exercise a rich set of features
without the added complexity of creating complex
definitions from scratch,
and (2) combine individual features in interesting and unexpected ways.
In contrast to
conventional generative processes
(see~\heph),
API-based test programs
provide the following unique benefits:
\begin{itemize}

\item {\it Rich feature coverage for free}:
As APIs are typically designed for use in a wide variety of contexts,
they often combine diverse sophisticated and demanding language features,
such as bounded polymorphism,
subtyping,
or overloading.
Given that their interfaces are also likely to be well-tested and widely used,
they are free from type errors.
Therefore, a feature-agnostic and potent generation process can be achieved
by creating test programs centered around APIs
without creating these complex and well-typed API definitions ourselves.
Furthermore,
the huge variety of real-world APIs
avoids,
or at least delays,
the saturation of the generation process.

\item {\it Applicability}:
APIs are ubiquitous in mainstream languages.
Generating type-intensive programs for
a new language mostly requires the collection of its APIs
(see Section~\ref{general}---Generalizability).

\item {\it Efficient testing}:
Generating small programs improves the throughput of testing,
as the compilers under test require considerably
less time to process small inputs than large ones.

\item {\it Simple test-case reduction}:
Test-case reduction is highly important
for compiler testing campaigns~\cite{c-reduce}.
By construction,
API-driven program synthesis
yields small, self-contained programs
that require minimal test-case reduction.

\end{itemize}

Consider the following two bugs found by
API-driven programs synthesized by our tool.

\point{Internal compiler error in the Scala 3 compiler}
Figure~\ref{fig:scala-test-case}
shows a valid program
that triggers a crash in the compiler of Scala 3,
also known as Dotty.
The program imports
the {\tt LinkedHashMultiset} class
from \href{https://github.com/google/guava}{\tt com.google.guava:guava}
(line 2),
a popular library
that extends the Java collections framework.
On line 4,
the code uses the Scala standard library
to create
a tuple containing an element
of type {\tt LinkedHashMultiset<String>}.
Next,
by accessing this element,
the program invokes the method {\tt retainAll}
found in the {\tt LinkedHashMultiset} class.

Figure~\ref{fig:scala-ex-def}
shows the {\tt LinkedHashMultiset} class
as defined in the {\tt guava} library.
The class inherits {\tt retainAll} from base class
{\tt AbstractMultiSet}.
Notably,
{\tt AbstractMultiSet} involves a {\tt default}
access modifier (line 3),
meaning that the class is inaccessible
from code outside the package
{\tt com.google.common.collect}.
However,
its method {\tt retainAll} features
a {\tt public} access modifier (line 4),
which suggests that the method can be accessed
by any public subclass of {\tt AbstractMultiSet},
including {\tt LinkedHashMultiset}.
Although {\tt retainAll}
belongs to the API of {\tt LinkedHashMultiset},
Dotty mistakenly treats the method as inaccessible,
which in turn leads to a compiler crash
while typing the method call on line 5 (Figure~\ref{fig:scala-test-case}).
Surprisingly,
replacing the receiver expression
{\tt x.\_1} with any other expression
of type {\tt LinkedHashMultiset<String>}
successfully compiles the program.

\begin{figure}
\centering
\begin{tikzpicture}[thick, scale=0.82,
every node/.style={scale=0.8},
font=\small\sffamily, node distance={0.4cm},
main/.style = {draw=none, rectangle}] 
\node[main](0) at (1,0) {
\begin{lstlisting}[language=Java,linewidth=5cm, xleftmargin=0cm]
import org.apache.commons.lang3.ArrayUtils;
class Test {
  void test() {
    byte x = new byte[0];
    byte[] res = ArrayUtils.removeAll(x, 0);
  }
}
\end{lstlisting}
};
\node[main] (1) at (7, 0.3) {
    \begin{lstlisting}[language=Java, numbers=none]
    package org.apache.commons.lang3.ArrayUtils;
    class ArrayUtils {
      static boolean[] removeAll(boolean[] array, int... indices)
      static byte[] removeAll(byte[] array, int... indices)
      static short[] removeAll(short[] array, int... indices)
      static int[] removeAll(int[] array, int... indices)
      static long[] removeAll(long[] array, int... indices)
      static float[] removeAll(float[] array, int... indices)
      static double[] removeAll(double[] array, int... indices)
      static char[] removeAll(char[] array, int... indices)
      static <T> T[] removeAll(T[] array, int... indices)
    }
    \end{lstlisting}

}; 
\draw[->] (1.6,-0.25) -- (2.8,1.1);
\draw (3,1) rectangle (11.1,1.4);
\end{tikzpicture}
\vspace{-3mm}
\caption{\href{https://issues.apache.org/jira/browse/GROOVY-11053}{GROOVY-11053}:
A well-typed program rejected by the Groovy compiler.
The program exercises the \href{https://github.com/apache/commons-lang}{apache-commons-lang3} library.}
\label{fig:groovy-test-case}
\vspace{-3mm}
\end{figure}

\point{Unexpected method ambiguity error in {\tt groovyc}}
Figure~\ref{fig:groovy-test-case}
shows another bug
where {\tt groovyc} erroneously rejects
a well-typed program.
The code first defines an array of {\tt byte} (line 4),
and then calls a static method
from the {\tt ArrayUtils} API of
the~\href{https://github.com/apache/commons-lang}{\tt org.apache.commons:commons-lang3} library
to remove the element of the array at position 0 (line 5).

The API of {\tt ArrayUtils} is quite complex:
it provides nine overloaded methods all named {\tt removeAll}.
Eight of them operate on primitive arrays,
while the remaining one is a polymorphic method
operating on arrays of reference types.
When a compiler encounters a call
to an overloaded method,
it chooses to invoke the most specific one
based on a set of criteria,
such as the types of the provided arguments.
Although it is clear from the context
that the code intends to call the
method variant that takes an array of bytes,
a bug in {\tt groovyc}'s method resolution
causes the compiler to accidentally reject the program
with an error of the form:
\textit{``reference to method removeAll is ambiguous.
Cannot choose between candidate methods.''}
The root cause of the failure
lies in the way variable arguments
are matched against formal parameter types
related to primitive types.

As the two examples illustrate,
although the invoked APIs seem simple
at a first glance,
it is in fact challenging for the compilers
to correctly handle these APIs.
Indeed,
the {\tt guava} API exercises access modifiers
and complex inheritance scenarios,
while the {\tt apache-commons-lang3} API
has many overloaded methods
mixed with plenty of type-related features
(e.g., primitive types,
array types,
and variable arguments).
Interestingly,
API components might hide other typing features.
For example,
although calling method {\tt retainAll} in Figure~\ref{fig:scala-test-case}
knows nothing about what an access modifier stands for,
or the inheritance chain of the receiver,
the corresponding test case triggers
a compiler bug associated with these ``hidden'' features.

\section{API-driven Program Synthesis}
\label{sec:approach}

\begin{figure}[t]
\includegraphics[scale=0.44]{./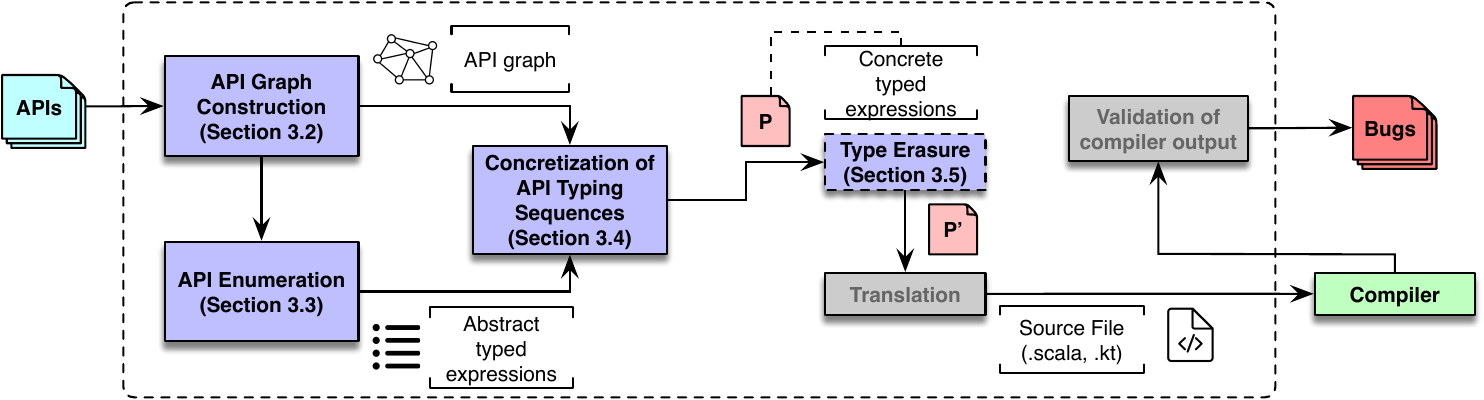}
\vspace{-3mm}
\caption{The API-driven program synthesis approach
for finding compiler typing bugs.}
\label{fig:arch}
\vspace{-4mm}
\end{figure}

Figure~\ref{fig:arch} gives an overview
of our API-driven program synthesis approach.
The process begins with a corpus of APIs
extracted from either the standard library of the language being tested
or its third-party libraries.
These APIs are represented in an~\emph{\graph}
which captures the dependencies and type signature
of each API component (Section~\ref{sec:api-graph}).
Next,
our approach performs \emph{API enumeration}
which systematically explores all possible invocations
of the encompassed API components
through unique typing patterns (Section~\ref{sec:api-enumeration}).
A typing pattern is
a sequence of types
that corresponds to~\emph{abstract typed expressions}.
Conceptually,
an abstract typed expression represents
a combination of types
used to invoke a particular API entity (e.g, method).
An abstract expression can be
either well-typed or ill-typed
with regards to the type signature of
the corresponding API component.
Then,
the approach yields well-typed or ill-typed programs
by concretizing each abstract-typed expression
into a concrete one written
in a reference language called~\ir.
To do so,
it examines the~\graph~to find type inhabitants 
by enumerating the set of paths
that reach a specific type node
via standard graph reachability algorithms
(Section~\ref{sec:concretization}).

As an optional step,
our approach employs
the type erasure process
whose purpose is to remove the type arguments
of polymorphic calls,
while maintaining the type correctness of the
original expression (Section~\ref{sec:type-erasure}).
The insight is that
constructing polymorphic calls without explicit type arguments
helps
exercise the implementations of type inference operations.
The final step is to employ
language-specific translators that
transform an expression in~\ir~
into a source file written in an actual
language (e.g., Scala).
Ultimately,
the generated source files are given
as input to the compiler under test,
whose output is checked against
the given oracle for potential bugs.
In particular,
we expect the compiler to accept
the programs derived from well-typed typing sequences,
and reject those
that come from ill-typed ones.
We now describe
our approach in detail.

\subsection{Preliminary Definitions}
\label{sec:preliminary-defs}

Figure~\ref{fig:ir} presents the core language (called~\ir)
that we use as a base to explain our approach.
\ir~is a basic language
that supports three major programming constructs:
classes, functions, and fields/variables.
\footnote{In
our implementation, the core language is extended with
more sophisticated features,
such as lambdas, or wildcard types.}
\ir~is also equipped with parametric polymorphism
in the style of Java-like generics.
The~\ir~is designed with simplicity in mind,
ensuring that the principles of our approach
(discussed in subsequent sections)
can be adapted to any language compatible with~\ir.
In what follows,
we use the vector notation
to represent a sequence of elements.
For example,
$\overline{t}$ means a sequence of $t$ elements.

\ir~is parameterized by an API denoted as $\Lambda \in API$.
An API is a set
consisting of (polymorphic) classes,
(polymorphic) functions,
and field/variables.
A class defined as $\prim{class}\ \mathcal{C}\prim{<}\overline{\alpha}\prim{>}\ \prim{extends}\ \overline{t}: \overline{v}$
defines a sequence of formal type variables $\overline{\alpha}$,
and extends a sequence of types $\overline{t}$.
The body of a class represented by $\overline{v}$ contains
other API members,
including nested class definitions.
A function $\prim{fun}\ m\prim{<}\overline{\alpha}\prim{>}(\overline{x: t_1}): t_2$
with name $m$:
(1) introduces a set of type variables $\overline{\alpha}$,
(2) takes a sequence of formal parameters,
and (3) outputs a value of type $t_2$.
A class or a function
is considered~\emph{polymorphic} only when
the sequence $\overline{\alpha}$ in their syntax is not empty.
A program in~\ir~is a sequence of expressions
that use components from the given API $\Lambda$.
An expression is either a constant of type $t \in \textit{Type}$
denoted as $\textit{constant}(t)$,
a local variable definition,
a field access,
or a function call.
The expression $\epsilon$ represents
an empty expression
used to model the invocation of top-level methods
and fields,
e.g., methods with no explicit receiver.

The type system in~\ir~is nominal.
The set of types
includes the usual $\bot$ and $\top$ types,
or a type variable $\phi: t$ with an upper bound $t$.
A class type $\mathcal{T}\prim{<}\overline{\alpha}\prim{>}: \overline{t}$
is labeled with a name $\mathcal{T}$,
a sequence of type variables $\overline{\alpha}$,
and a sequence of supertypes $\overline{t}$.
A class type is derived from a class $c \in \textit{Class}$
based on function $\textit{type}: \textit{Class} \longrightarrow \textit{Type}$
shown in Figure~\ref{fig:ir}.
When the sequence $\overline{\alpha}$ is not empty,
the class type $\mathcal{T}\prim{<}\overline{\alpha}\prim{>}: \overline{t}$
is referred to a~\emph{type constructor}.
The type system of~\ir~also features
a type instance
($N\prim{<}\overline{t}\prim{>}$)
that instantiates a type constructor $N$
with a sequence of type arguments $\overline{t}$.
\begin{figure}[t]
\scriptsize
\center
\begin{subfigure}{0.48\linewidth}
\begin{bnf*}
    \bnfprod{$\Lambda \in \textit{API}$}
    {\bnftd{$\overline{v}$}}\\[-0.5mm]
    \bnfprod{$v \in \textit{APIMember}$}
    {\bnftd{$c$} \bnfor \bnftd{$d$}}\\[-0.5mm]
    \bnfprod{$c \in \textit{Class}$}
    {\bnftd{\primsem{class} $\mathcal{C}\primsem{<}\overline{\alpha}\primsem{>}\ \primsem{extends}\ \overline{t}:\ \overline{v}$}}\\[-0.5mm]
    \bnfprod{$d \in \textit{Def}$}
    {\bnftd{\primsem{fun} $m\primsem{<}\overline{\alpha}\primsem{>}(\overline{x: t}): t$} \bnfor \bnftd{\primsem{var} $f: t$}}\\[-0.5mm]
    \bnfprod{$p \in \textit{Program}$}
    {\bnftd{$\overline{e}$}}\\[-0.5mm]
    \bnfprod{$e \in \textit{Expr}$}
    {\bnftd{$\epsilon$} \bnfor \bnftd{$\textit{constant}(t)$}
    \bnfor \bnftd{$e.f$}}\\[-0.5mm]
    \bnfmore{\bnfor \bnftd{\primsem{local var} $x: t = e$}}\\[-0.5mm]
    \bnfmore{\bnfor \bnftd{$e.m\primsem{<}\overline{t}\primsem{>}(\overline{e})$}}\\[-0.5mm]
    \bnfprod{$f, x \in {\it VariableName}$}
    \bnftd{\text{the set of field names in lib $\Lambda$}}\\[-0.5mm]
    \bnfprod{$m \in {\it MethodName}$}
    \bnftd{\text{the set of method names in lib $\Lambda$}}\\[-0.5mm]
    \bnfprod{$\mathcal{C} \in {\it ClassName}$}
    \bnftd{\text{the set of class names in lib $\Lambda$}}\\[-0.5mm]
\end{bnf*}
\vspace{-8mm}
\caption{Syntax}
\end{subfigure}
\hspace{2mm}
\begin{subfigure}{0.48\linewidth}
\begin{bnf*}
    \bnfprod{$t \in {\it Type}$}
    {\bnftd{$\top$}\bnfor\bnftd{$\bot$} \bnfor \bnftd{$\alpha$} \bnfor
    \bnftd{$N$} \bnfor
    \bnftd{$N\primsem{<}\overline{t}\primsem{>}$}}\\[-0.5mm]
    \bnfprod{$\alpha \in {\it TypeVariable}$}
    {\bnftd{$\phi: t$}}\\[-0.5mm]
    \bnfprod{$N \in \textit{ClassType}$}
    {\bnftd{$\mathcal{T}\primsem{<}\overline{\alpha}\primsem{>}$}: \overline{t}}\\[-0.5mm]
    \bnfprod{$\phi \in \textit{TypeVarName}$}
    {\bnftd{\text{the set of type variable names in lib $\Lambda$}}}\\[-0.5mm]
    \bnfprod{$\mathcal{T} \in {\it TypeName}$}
    {\bnftd{\text{the set of type names in lib $\Lambda$}}}
\end{bnf*}
\caption{Types}
\end{subfigure}
\begin{subfigure}{\linewidth}
\centering
\vspace{5mm}
\begin{minipage}{0.45\linewidth}
\centering
\begin{align*}
    &\textit{UpBound}: \textit{TypeVariable} \longrightarrow \textit{Type}\\
    &\textit{UpBound}(\phi: t) = t\\
&\textit{type}: \textit{Class} \longrightarrow \textit{Type}& \\
&\textit{type}(\primsem{class}\ \mathcal{C}\primsem{<}\overline{\alpha}\primsem{>})\ \primsem{extends}\ \overline{t}: \overline{d}) = \mathcal{C}\primsem{<}\overline{\alpha}\primsem{>}: \overline{t}\\
    &\textit{typeName}: \textit{Type} \longrightarrow \textit{TypeName}\\
    &\textit{typeName}(\mathcal{T}\primsem{<}\overline{\alpha}\primsem{>}: \overline{t}) = \mathcal{T}\\
    &\textit{typeName}(N\primsem{<}\overline{t}\primsem{>}) = \textit{typeName}(N)\\
\end{align*}
\end{minipage}
\hspace{5mm}
\begin{minipage}{0.45\linewidth}
\centering
\begin{align*}
    &\textit{className}: \textit{Class} \longrightarrow \textit{ClassName}\\
    &\textit{className}(\primsem{class }\mathcal{C}\primsem{<}\overline{\alpha}\primsem{>}\primsem{ extends }\overline{t}: \overline{v}) = \mathcal{C}\\
    &\textit{fields}: \textit{Class} \longrightarrow \textit{Def}\\
    &\textit{fields}(\primsem{class }\mathcal{C}\primsem{<}\overline{\alpha}\primsem{>}\primsem{ extends }\overline{t}: \overline{v}) = \{d\ |\ d \in \overline{v} \land d \textit{ is a field}\}\\
    &\textit{methods}: \textit{Class} \longrightarrow \textit{Def}\\
    &\textit{methods}(\primsem{class }\mathcal{C}\primsem{<}\overline{\alpha}\primsem{>}\primsem{ extends }\overline{t}: \overline{v}) = \{d\ |\ d \in \overline{v} \land d \textit{ is a function}\}\\
\end{align*}
\end{minipage}
\caption{Auxiliary functions operating on types and classes.}
\end{subfigure}
\begin{subfigure}{\linewidth}
\vspace{3mm}
\begin{mathpar}
\inferrule[field]{
    c \in \Lambda\\
    c \text{ is a class}\\
    \textit{typeName(t)} = \textit{className}(c)\\\\
    \primsem{var f}: t \in \textit{fields}(c)\\
    \langle t', \sigma\rangle = \textit{decompose}(t)
}
{
    \textit{Field}(\Lambda, t, f) = \langle \primsem{var f}: t, \sigma\rangle
}
\hva \and
\inferrule[field-inheritance]{
    c \in \Lambda\\
    c = \primsem{class }\mathcal{C}\primsem{<}\overline{\alpha}\primsem{>} \primsem{ extends }\overline{t}: \overline{v}\\\\
    \textit{typeName(t)} = \textit{className}(c)\\
    \primsem{var f}: t \not\in \textit{fields}(c) \\\\
    t' \in \overline{t}\\
    s = \textit{Field}(\Lambda, t', f)
}
{
    \textit{Field}(\Lambda, t, f) = s
}
\hva \and
\inferrule[method]{
    c \in \Lambda\\
    c \text{ is a class}\\
    \textit{typeName(t)} = \textit{className}(c)\\\\
    \primsem{fun } m\primsem{<}\overline{\alpha}\primsem{>}(\overline{x: p}): t \in \textit{methods}(c)\\\\
    \langle t', \sigma\rangle = \textit{decompose}(t)
}
{
    \textit{Method}(\Lambda, t, m) = \langle \primsem{fun } m\primsem{<}\overline{\alpha}\primsem{>}(\overline{x: p}): t , \sigma\rangle
}
\hva \and
\inferrule[method-inheritance]{
    c \in \Lambda\\
    c = \primsem{class }\mathcal{C}\primsem{<}\overline{\alpha}\primsem{>} \primsem{ extends }\overline{t}: \overline{v}\\\\
    \textit{typeName(t)} = \textit{className}(c)\\
    \primsem{fun } m\primsem{<}\overline{\alpha}\primsem{>}(\overline{x: p}): t \not\in \textit{methods}(c)\\\\
    t' \in \overline{t}\\
    s = \textit{Method}(\Lambda, t', f)
}
{
    \textit{Method}(\Lambda, t, m) = s
}
\hva \and
\end{mathpar}
\vspace{-4mm}
\caption{Lookup functions}
\end{subfigure}
\hfill
\begin{subfigure}{\linewidth}
\begin{mathpar}
\inferrule[epsilon]{
}
{
    \Lambda \vdash \epsilon: \bot
}
\hva \and
\inferrule[constant]{
}
{
    \Lambda \vdash \textit{constant(t)}: t
}
\hva \and
\inferrule[field]{
    \Lambda \vdash e: r\\
    \textit{Field}(\Lambda, r, f) = \langle\primsem{var } f: t, \sigma\rangle
}
{
    \Lambda \vdash e.f: \sigma t
}
\hva \and
\inferrule[method call]{
    \Lambda \vdash e_1: r\\
    \Lambda \vdash \overline{e_2}: \overline{p'} \\\\
    \textit{Method}(\Lambda, r, m) = \langle\primsem{fun } m\primsem{<}\overline{\alpha}\primsem{>}(\overline{x: p}): t, \sigma'\rangle \\
    \sigma = \sigma' \cup [\overline{\alpha} \mapsto \overline{t}]\\
    \overline{p'} <: \sigma \overline{p}
}
{
    \Lambda \vdash e_1.m\primsem{<}\overline{t}\primsem{>}(\overline{e_2}): \sigma t
}
\hva \and
\inferrule[local var]{
    \Lambda \vdash e: t'\\
    t' <: t
}
{
    \Lambda \vdash \primsem{local var } x: t = e: \bot
}
\hva \and
\end{mathpar}
\caption{Typing rules of~\ir.}
\end{subfigure}
\vspace{-4mm}
\caption{The syntax and the types in the~\ir.}
\label{fig:ir}
\end{figure}

Finally,
the type system of~\ir~follows
the usual typing rules
using the judgment $\Lambda \vdash e: t$
as shown in Figure~\ref{fig:ir}.
In our setting,
the given API $\Lambda$ acts as our global typing environment,
and the type of an empty expression is $\bot$
($\Lambda \vdash \epsilon: \bot$).
In the text and examples,
we use the shorthand
(1) $\phi$ for $\phi: \top$,
(2) $\mathcal{T}: \overline{t}$
for $\mathcal{T}\prim{<}\emptyset\prim{>}: \overline{t}$,
(3) $\mathcal{T}$ for $\mathcal{T}: \top$,
and (4) $\mathcal{T}\prim{<}\overline{t_2}\prim{>}$
for $(\mathcal{T}\prim{<}\overline{\alpha}\prim{>}: \overline{t_1})\prim{<}\overline{t_2}\prim{>}$.

Our language also defines
the usual type substitution and type unification operations.
A type substitution $\sigma \in \Sigma$ is a mapping
that replaces all occurrences of
a type variable $\alpha$ with a given type $t$.
We use the symbol $\sigma t$ to denote
the application of a substitution $\sigma$ on type $t$.
In this work,
type unification
(\textit{unify}: $\textit{Type} \times \textit{Type} \longrightarrow \Sigma$)
takes two types ($t_1, t_2 \in \textit{Type}$)
and identifies
a substitution $\sigma$ so that
the type $\sigma t_2$ is equal
subtype of $t_1$.
In the following,
the symbol $<:$ indicates
the subtype relation,
and $\textit{UpBound}(\alpha)$
gives the upper bound of a type variable $\alpha$
as shown in Figure~\ref{fig:ir}.

%

\begin{definition}[Validity of type substitution]
\label{def:validity-sub}
A type substitution $\sigma \in \Sigma$ is called~\emph{valid}
when $\forall \alpha \in \textit{Dom}(\sigma).\ \sigma(\alpha) <: \textit{UpBound}(\alpha)$.
\end{definition}
This definition expresses
that a type substitution is considered valid,
when every type variable in the substitution
is instantiated with a type that respects
the upper bound of the type variable.
For example,
the substitution $s = [\alpha\mapsto t_1]$
is valid when $\alpha$ has the upper bound $\top$,
because $t_1 <: \top$.
On the contrary,
the substitution is invalid when
$\alpha$ is bounded to type $t_2$,
and $t_1$ is not a subtype of $t_2$.

\begin{definition}[Subsumption]
\label{def:subsumption}
Consider two type substitutions $\sigma_1, \sigma_2 \in \Sigma$.
We say that substitution $\sigma_1$ subsumes $\sigma_2$,
denoted as $\sigma_1 \sqsubseteq \sigma_2$,
when $\forall \alpha \in \textit{Dom}(\sigma_1).\ \sigma_1(\alpha) = \sigma_2(\alpha)$.
\end{definition}
The subsumption relation holds between
two type substitutions,
when all type variables in a substitution $\sigma_1$
are instantiated with exactly the~\emph{same} type
as in another substitution $\sigma_2$.
The subsumption relation is reflexive,
and for an empty substitution $\epsilon$,
we have
$\forall \sigma \in \Sigma. \epsilon \sqsubseteq \sigma$.

\begin{definition}[Type decomposition]
\label{def:decomposition}
Type decomposition ($\textit{Type} \longrightarrow \Sigma \times \textit{Type}$)
is an operation that decomposes a given type $t_1 \in \textit{Type}$
into a substitution $\sigma$ and another type $t_2 \in \textit{Type}$,
so that $\sigma t_2 = t_1$.
It is defined as:
\begin{align*}
    \textit{decompose(t)} &= \langle[\overline{\alpha} \mapsto \overline{t_2}], \mathcal{T}\prim{<}\overline{\alpha}\prim{>}: \overline{t_1}\rangle & \text{if } t = (\mathcal{T}\prim{<}\overline{\alpha}\prim{>}: \overline{t_1})\prim{<}\overline{t_2}\prim{>}\\
    \textit{decompose}(t) &= \langle\epsilon, t\rangle & \text{otherwise}
\end{align*}
\end{definition}
In essence,
type decomposition allows us to decompose
a type instance $N\prim{<}\overline{t}\prim{>}$ into
(1) the type constructor $N$,
and (2) the type substitution
that replaces all formal type variables in $N$
with the provided type arguments $\overline{t}$.
For example,
consider the type constructor $N = \mathcal{T}\prim{<}\alpha\prim{>}: \top$
and its type instance $t = N\prim{<}t_2\prim{>}$.
In this example,
$\textit{decompose}(t)$ returns $\langle[\alpha \mapsto t_2], N\rangle$.
For a non polymorphic type $t$,
type decomposition simply returns an empty substitution
and the input type $t$.

\subsection{API Graph}
\label{sec:api-graph}

We define an API directed graph~as $G = (V, E)$,
where $V$ is the set of nodes
corresponding to either
a type $t \in \textit{Type}$
or an API definition $d \in \textit{Def}$
(i.e., a method or a field),
$E \subseteq V \times V \times L$
is the set of edges
whereas $L = \Sigma$
is the set of edge labels (representing
the set of valid type substitutions).
To construct the~\graph~
we examine a given API $\Lambda\in \textit{API}$
and proceed as follows.
\begin{itemize}
    \item Iterate over the set of classes in topological order with regards to
    their inheritance chain.
    Convert every class $c$ into a type $t$
    based on function $\textit{type}$
    defined in Figure~\ref{fig:ir},
    and add the resulting type $t$ into the graph.
    Then,
    we iterate over each member (i.e., a function or a field)
    $d \in \textit{Def}$\ belonging to class $c$ and proceed as follows.
\item Add node $d$ to the~\graph.
\item Add edge $t \overset{\epsilon}{\rightarrow} d$ to the~\graph,
    if $d$ is an~\emph{instance} method or a field of the class $c$
    and $t = \textit{type(c)}$.
\item Add edge $d \overset{\sigma}{\rightarrow} r'$ to the~\graph,
    if $r$ is the return type of $d$ and
    $\textit{decompose}(r) = \langle\sigma, r'\rangle$
    according to Definition~\ref{def:decomposition}.
\end{itemize}

\begin{figure}[t]
\begin{subfigure}{0.33\textwidth}
\centering
\begin{lstlisting}[language=java, basicstyle=\ttfamily\tiny]
class Utils {
  static <X> List<X> createList();
}
class List<T> {
  List(int size);
  boolean add(T elem);
  Set<T> toSet();
}
class Set<E> {
  Set(int size);
  boolean add(E elem);
  List<E> toList();
}
\end{lstlisting}
\end{subfigure}
\begin{subfigure}{.66\linewidth}
\centering
\includegraphics[scale=0.32]{./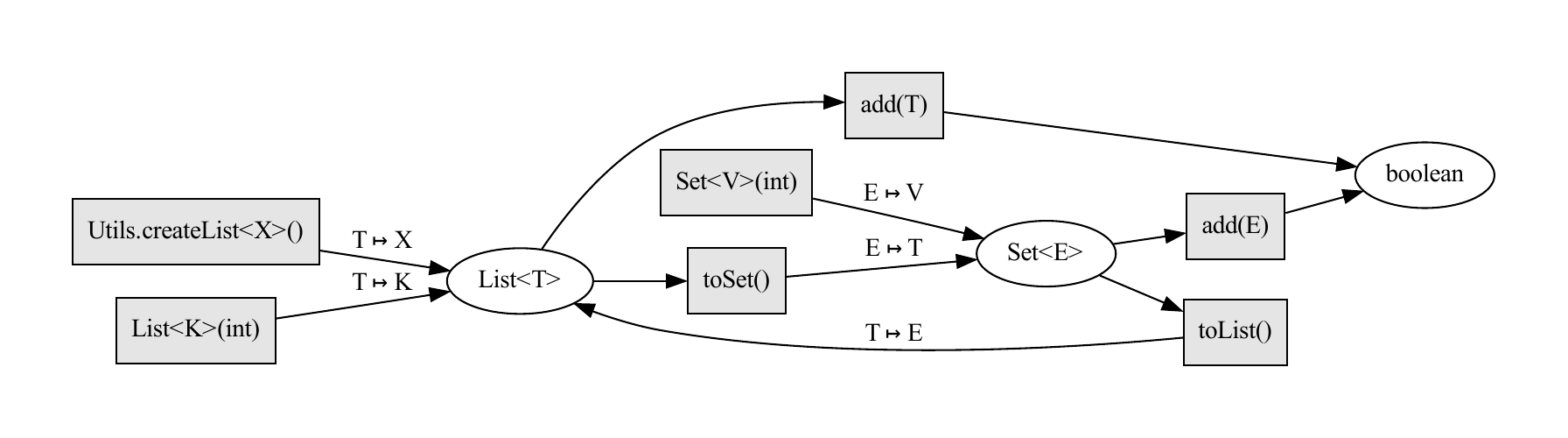}
\end{subfigure}
\vspace{-4mm}
\caption{An example API written in Groovy and its~\graph.
Oval nodes represent types,
while rectangular nodes denote definitions (e.g., methods).}
\label{fig:example-api}
\vspace{-4mm}
\end{figure}

Conceptually,
the set of edges determines the following
relationships.
The edge $t \overset{\epsilon}{\rightarrow} d$ indicates
that the definition $d$
is applied to a value of type $t$ (receiver type).
The edge $d \overset{\sigma}{\rightarrow} r'$ denotes
that the application of the API definition $d$
returns the type given by $\sigma r'$.
For example,
consider a method $m$
whose return type is {\tt List<String>}.
In this scenario,
we add the following edge to the~\graph:
$m \overset{\alpha \mapsto \texttt{String}}{\rightarrow} \texttt{List}\prim{<}\alpha\prim{>}$,
where the target node
of the edge is the type constructor {\tt List}.
In a similar manner,
when the return type of method $m$
is a non-polymorphic type,
such as {\tt Int},
$\textit{decopose(\texttt{Int})}$ yields $\langle \epsilon, \texttt{Int}\rangle$.
Therefore,
we add the edge $m \overset{\epsilon}{\rightarrow} \texttt{Int}$
to the graph.
In our examples,
we use the shorthand $v_1 \rightarrow v_2$
for $v_1 \overset{\epsilon}{\rightarrow} v_2$.

The~\graph\ is directly inspired by
the {\it signature graph} introduced
in the program synthesis work of~\citet{jungloid}.
In a similar manner to our~\graph,
the signature graph encodes
API components as unary functions
taking one input type
(receiver type),
and producing an output type (return type).
The intuition behind this
is the identification of chains of method calls/field accesses
(e.g., {\tt x.f1.m1().f2...})
that lead to a specific type $t_{\textit{out}}$.
This can be achieved by
simply querying the graph for paths
between an input type $t_{\textit{in}}$
and the target type $t_{\textit{out}}$
using standard graph reachability algorithms,
such as Dijkstra's algorithm.

However,
the presence of polymorphic components complicates
this straightforward approach.
Our~\graph\ addresses this
by refining the signature graph:
we label edges with type substitutions,
allowing us to handle parametric polymorphism
without compromising the size of the~\graph. 
Our detailed approach for identifying chains
of method calls/field accesses 
in the presence of parametric polymorphism
is explained later in Section~\ref{sec:concretization}.

\point{Example}
Figure~\ref{fig:example-api} shows an
example API (written in Groovy) and its~\graph.
The~\graph~contains seven definitions
(i.e., methods) depicted with gray color.
Three definitions have no incoming edges,
as their application does not require
a receiver object.
Two of them
(i.e., $\texttt{List<K>(int)}$,
$\texttt{Set<V>(int)}$)
stand for the constructors
of the generic classes {\tt List} and {\tt Set}
respectively (lines 4, 9),
while one of them is for the
static polymorphic method of class {\tt Utils}.
Finally,
the example~\graph~includes three types
represented by oval nodes,
two of which denote the type constructors
{\tt List} and {\tt Set},
while one type node corresponds to {\tt boolean}.


\subsection{API Enumeration Problem Formulation}
\label{sec:api-enumeration}

Having presented the notion of~\graph~(Section~\ref{sec:api-graph}),
we now formulate the problem of API enumeration.
API enumeration systematically explores
all the unique typing combinations
that can be used to
invoke a particular API component,
such as a function or a field.
The intuition is that
invoking an API component
through different typing patterns
lets us exercise the implementation
of many type-related operations in the compiler,
including,
subtyping rules,
or method resolution.
In what follows,
we use an~\graph~as our typing environment.
In fact,
an~\graph~can help type expressions
that use API components,
because the type signatures of
every API entity is included in $G$.
For example,
hereafter,
$G \vdash e.f: t$ means
that the type of the field access $e.f$ is $t$.
This typing process is 
based on the type signature of field $f$
found in $G$.

We first introduce~\emph{abstract typed expressions},
an abstraction over the domain of expressions defined
in~\ir~(Figure~\ref{fig:ir}).
An abstract typed expression is given by:
\begin{bnf*}
    \bnfprod{$\hat{e} \in \hat{\textit{Expr}}$}
    {\bnftd{$[t]$} \bnfor \bnftd{$[t].f$}
    \bnfor \bnftd{$[t_1].m\prim{<}\overline{t}\prim{>}(\overline{[t_2]})$}
    \bnfor \bnftd{$\prim{local var}\ x: t = \hat{e}$}}
\end{bnf*}
The notation $[t]$ represents
an~\emph{inhabitant} of type $t$~\cite{type-inhabitant},
that is
any concrete expression $e \in \textit{Expr}$
whose type is $t$.
An abstract typed expression hides
the contents and the value of a concrete expression $e$,
and considers only the type of $e$.
For example,
the abstract expression $[t].f$
indicates a field $f$ accessed
through~\emph{any} expression of type $t$.
We encode every abstract typed expression $\hat{e}$
using \emph{typing sequences}.
A typing sequence succinctly captures
the types found within the holes
of abstract typed expressions.
A typing sequence is denoted by the symbol $\llbracket\rrbracket$:
\begin{align*}
    \llbracket[t]\rrbracket &\rightarrow \langle t\rangle\\
    \llbracket[t].f\rrbracket &\rightarrow \langle t, \bot\rangle\\
    \llbracket[t_1].m\prim{<}\overline{t}\prim{>}(\overline{[t_2]})\rrbracket &\rightarrow \langle t_1, \overline{t_2}\rangle \\
    \llbracket\prim{local var}\ x: t = \hat{e}\rrbracket &\rightarrow \llbracket\hat{e}\rrbracket \boldsymbol{\cdot} t
\end{align*}
In the preceding rules,
the symbol $\boldsymbol{\cdot}$ means
appending an element to the end of a sequence.
A singleton sequence consisting of type $t$
represents an inhabitant of $t$.
When a typing sequence contains more than two types,
the first element of the sequence stands for
the receiver type of a field access/method call,
and the remaining elements correspond to the
parameter types of the application (if any).
Finally,
in the case of local variable definitions,
the final element of the typing sequence
is the expected type of
the entire abstract expression
found on the right-hand side.

We define a concretization function $\gamma$
that allows us to map a typing sequence
and an API definition $d$
into a set of concrete expressions written in~\ir\
under a given type substitution $\sigma$.

\begin{definition}
\label{def:concrete}
Let $\gamma: G, \textit{Def}, \textit{Type} \times \dots \times \textit{Type} \times \Sigma \longrightarrow \mathcal{P}(\textit{Expr})$ such that:
\begin{align*}
    \gamma(G, \bot, \langle t \rangle, \sigma) &= \{e \ |\ G \vdash e: t\} \\
    \gamma(G, \prim{var}\ f: t, \langle t, \bot\rangle, \sigma) &=
        \{e.f\ |\ G \vdash e: t\} \\
\gamma(G, \prim{fun}\ m\prim{<}\overline{\alpha}\prim{>}(\overline{x: p}): t, \langle r, \overline{p'}\rangle, \sigma) &=
 \left\{ e_1.m\prim{<}\overline{t}\prim{>}(\overline{e_2})\ \middle| \ 
        \begin{aligned}
            & G \vdash e_1 : r, G \vdash \overline{e_2}: \overline{p'}, \\
            & \overline{t} = (\sigma(\alpha_i))_{i = 1}^{n}
        \end{aligned}
\right\}\\
    \gamma(G, d, s \boldsymbol{\cdot} t, \sigma) &= \{\prim{local var}\ x: t = e\ |\ e \in \gamma(G, d, s, \sigma)\}\\
\end{align*}
\end{definition}
\noindent
The function $\gamma$
concretizes abstract typing sequences and definitions
into specific expressions that are typed under
a given~\graph\ and a type substitution.
When no API entity is provided to $\gamma$
(i.e., its second parameter is $\bot$),
the function returns
all the inhabitants of type $t$.
This translates to every expression
$e \in \textit{Expr}$
that satisfies $G \vdash e: t$.
Conversely,
if an API component $d$ is provided,
$\gamma$ yields
all the possible expressions that
invoke the definition $d$ with respect to the given typing sequence $s$.
For example,
consider the field $\prim{var}\ f: t$
and the typing sequence $s = \langle r, \bot\rangle$.
In this case,
the function $\gamma$ gives
all accesses of field $f$
via every inhabitant of type $r$.
When encountering a polymorphic function,
$\gamma$ maps every formal type variable of the function
into actual type arguments attached to the resulting method calls
based on the provided type substitution,
in particular $(\sigma(\alpha_i))_{i=1}^{n}$.
Later,
in Section~\ref{sec:concretization},
we present an under-approximation of $\gamma$
called $\hat{\gamma}$,
introduced to address the practical concerns
associated with exhaustively enumerating
all potential expressions in $\gamma$.

\begin{definition}[API typing sequence]
\label{def:typing-seq}
Given an API component $d \in \textit{Def}$,
the sequence $s_d$ is
called an~\emph{API typing sequence} of $d$
when there is \emph{an abstract typed expression} $\hat{e}$,
such that (1) $\llbracket \texttt{local var}\ x: t = \hat{e} \rrbracket = s_d$,
and (2) the abstract expression $\hat{e}$ invokes the component $d$.
We say that an expression $e \in \textit{Expr}$~\emph{realizes}
the API typing sequence $s_d$ under the substitution $\sigma$,
if $e \in \gamma(G, d, s_d, \sigma)$.
\end{definition}

\vspace{-2mm}
Based on Definition~\ref{def:typing-seq},
an API typing sequence $s_d$
reveals two key details:
First,
\emph{how} a set of types
are combined together
to invoke and use a certain API component $d$
(represented by all elements of $s_d$ except the last).
Second,
\emph{what} is the expected type
that derives from the usage of $d$
(indicated by the last element of $s_d$).

\point{Example}
Consider two typing sequences
for the method $d = \texttt{add(T)}$
defined in Figure~\ref{fig:example-api} (line~6):
$s_{d_1} = \langle \bot, \texttt{int}, \texttt{boolean} \rangle$
and $s_{d_2} = \langle \texttt{List<Int>}, \texttt{int}, \texttt{boolean} \rangle$.
Listing~\ref{lst:example-listing} contains three expressions
that realize $s_{d_1}$ and $s_{d_2}$.
The first expression (although type incorrect)
realizes $s_{d_1}$, because the first element of $s_{d_1}$
(denoted as $s_{d_1}\downarrow_1$) is $\bot$,
which represents the absence of receiver.
The last two expressions of the listing realize the
same API typing sequence $s_{d_2}$,
because the type of both {\tt new List<Int>(10)}
and {\tt Utils.<Int>createList()} is
$s_{d_2}\downarrow_1 = \texttt{List<Int>}$.
\begin{lstlisting}[language=java, mathescape=true, caption={Expressions that realize typing sequences of method {\tt List.add(T)}.}, label={lst:example-listing}]
boolean x = add(1);
boolean y = Utils.<Int>createList().add(1);
boolean z = new List<Int>(10).add(1);
\end{lstlisting}

\begin{definition}[API signature]
API signature ($G \times \textit{Def} \longrightarrow 
\textit{Type} \times \dots \times
\textit{Type}$) is a function
that maps an~\graph~$G = (V, E)$
and one API definition $d \in V$
to a typing sequence as follows:
\begin{align*}
    \textit{sig(G, \texttt{var} x: t)} &=  \langle r, \bot, t\rangle& \text{if } r\overset{l}{\rightarrow} d \in E\\
    \textit{sig(G, \texttt{var} x: t)} &=  \langle \bot, \bot, t\rangle& \text{if } d\ \text{has no incoming edges in } G\\
\textit{sig}(G, \texttt{fun}\ m\prim{<}\overline{\alpha}\prim{>}(\overline{x: p}): t) &=  \langle r, \overline{p}, t\rangle & \text{if } r\overset{l}{\rightarrow} d \in E\\
\textit{sig}(G, \texttt{fun}\ m\prim{<}\overline{\alpha}\prim{>}(\overline{x: p}): t) &=  \langle \bot, \overline{p}, t\rangle & \text{if } d\ \text{has no incoming edges in } G\\
\end{align*}
\end{definition}
\noindent
For example,
the API signature of the {\tt Utils.createList()} method
of Figure~\ref{fig:example-api} (line 2) is
$\langle\bot,\bot, \texttt{List}\prim{<}X\prim{>}\rangle$.
The method is static
and takes no parameters.
That is why
the first element (receiver type)
and the second element (parameter type)
of the resulting sequence are $\bot$.

\begin{definition}[Well-typed API typing sequence]
\label{def:well-typed}
Given an~\graph~$G$,
a definition $d \in \textit{Def}$,
and a type substitution $\sigma \in \Sigma$,
we say that an API typing sequence 
$s_d = \langle r, \overline{p}, t\rangle$
is~\emph{well-typed}
under $G$ and $\sigma$,
if $\textit{sig(G, d)} = (r', \overline{p'}, t')$
and $r <: \sigma r', \overline{p} <: \sigma \overline{p'}$,
and $t >: \sigma t'$.
\end{definition}
In essence,
this definition captures precisely the notion that 
given an API component $d$,
a typing sequence $s_d$ is well-typed under $\sigma$,
if $s_d$ describes a usage of $d$ with:
(1) a receiver whose type $r$
is a subtype of the
formal receiver type
extracted from the signature of $d$
($r <: \sigma r'$),
and (2) arguments (if present) whose types are subtypes
of the formal parameter types of $d$.
Finally,
the expected type of $d$'s application
should be any supertype of $d$'s formal return type.

\begin{theorem}
\label{prop:well-typed}
Consider an API component $d \in \textit{Def}$,
one well-typed typing sequence $s_d$ of $d$,
and a substitution $\sigma$.
The programs derived from $s_d$,
that is $\gamma(G, d, s_d, \sigma)$,
are well-typed.
\end{theorem}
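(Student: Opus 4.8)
The plan is to show that every concrete expression $e \in \gamma(G, d, s_d, \sigma)$ admits a typing derivation in~\ir, by unfolding $\gamma$ along the structure of $s_d$ and building the derivation bottom-up. First I would unpack the hypotheses. Since $s_d$ is an API typing sequence of $d$, \cref{def:typing-seq} provides an abstract typed expression $\hat e$ that invokes $d$ with $\llbracket\prim{local var}\ x: t = \hat e\rrbracket = s_d$; writing $s_d = \langle r, \overline p, t\rangle$ as in \cref{def:well-typed}, this means $\llbracket\hat e\rrbracket = \langle r, \overline p\rangle$ and the trailing $t$ is the declared type of the enclosing local-variable binding. By the last clause of \cref{def:concrete}, every $e \in \gamma(G, d, s_d, \sigma)$ then has the form $\prim{local var}\ x: t = e'$ with $e' \in \gamma(G, d, \langle r, \overline p\rangle, \sigma)$, so by the \textsc{local var} rule of \cref{fig:ir} it suffices to produce a type $t''$ with $G \vdash e' : t''$ and $t'' <: t$. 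I would also record what \cref{def:well-typed} gives: writing $\textit{sig}(G, d) = \langle r', \overline{p'}, t'\rangle$, we have $r <: \sigma r'$, $\overline p <: \sigma\overline{p'}$, and $\sigma t' <: t$; and by the definition of the API signature together with the \graph~construction, $r'$ is either $\bot$ (when $d$ has no incoming edge) or $\textit{type}(c)$ for the class $c$ that declares $d$, while $\overline{p'}, t'$ are the declared parameter and return types of $d$.

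Next I would case-split on the form of $d$. If $d = \prim{var}\ f: t'$ is a field, then, by the field clause of \cref{def:concrete}, $e' = e_1.f$ with $G \vdash e_1 : r$, and I would close the case by applying the \textsc{field} typing rule: its one remaining premise is that $\textit{Field}(G, r, f)$ is defined and equals $\langle\prim{var}\ f: t', \sigma_f\rangle$, after which $e_1.f$ is typed at $\sigma_f t'$. If $d = \prim{fun}\ m\prim{<}\overline\alpha\prim{>}(\overline{x: p_m}): t'$ is a method, then $e' = e_1.m\prim{<}\overline{t_a}\prim{>}(\overline{e_2})$ where $G \vdash e_1 : r$, $G \vdash \overline{e_2} : \overline p$, and the type arguments are pinned by \cref{def:concrete} to $\overline{t_a} = (\sigma(\alpha_i))_i$; I would apply the \textsc{method call} rule, whose outstanding premises are (i) $\textit{Method}(G, r, m) = \langle d, \sigma_1\rangle$ for some $\sigma_1$, and (ii) $\overline p <: \sigma^{\star}\overline{p_m}$, where $\sigma^{\star} = \sigma_1 \cup [\overline\alpha \mapsto \overline{t_a}]$; the rule then types $e'$ at $\sigma^{\star} t'$.

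The core of the argument, shared by both cases, is discharging the lookup premise and relating the substitution the lookup returns to the ambient $\sigma$. For the lookup, I would induct on the length of the inheritance chain witnessing $r <: \sigma r'$: the base case $r = \sigma r'$ is handled by the \textsc{field}/\textsc{method} lookup rule of \cref{fig:ir}, since $\textit{typeName}(r) = \textit{className}(c)$ and $f \in \textit{fields}(c)$ (resp. $m \in \textit{methods}(c)$); the inductive step is handled by \textsc{field-inheritance}/\textsc{method-inheritance}, each of which takes one step toward $\sigma r'$ along a declared supertype. For the substitutions, I would prove an auxiliary lemma: the substitution accumulated along this chain, extended on $d$'s own type variables by $[\overline\alpha \mapsto \overline{t_a}] = [\overline\alpha \mapsto \sigma\overline\alpha]$, agrees with $\sigma$ on every type variable occurring in $t'$ (and in $\overline{p_m}$, in the method case). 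This uses \cref{def:decomposition}---each descent through a type constructor records precisely the type arguments supplied at that constructor---together with the fact that, in the nominal type system of~\ir~(with invariant generics), the witness $r <: \sigma r'$ pins down the instantiation of $c$'s type variables uniquely, so composing the per-constructor substitutions along that witness recovers exactly $\sigma$ on those variables. Given this agreement, the field case closes because $\sigma_f t' = \sigma t' <: t$; the method case closes because $\sigma^{\star} t' = \sigma t' <: t$ (the \textsc{local var} side condition) and $\overline p <: \sigma\overline{p_m} = \sigma^{\star}\overline{p_m}$ (premise (ii)).

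I expect the substitution-agreement lemma and the treatment of method shadowing to be the main obstacles. The \textsc{method-inheritance} rule as stated does not itself thread the receiver's type arguments through to the recursive lookup, and it halts at the first ancestor of $r$ that re-declares $m$; strictly, one must therefore either read the clause ``$\hat e$ invokes $d$'' of \cref{def:typing-seq} as already guaranteeing that resolution on $r$ lands on $d$ (so that no such re-declaration occurs between $r$ and $c$), or additionally appeal to well-formedness of overriding to argue that a re-declared variant still has a signature compatible enough for premises (i)--(ii) and for the \textsc{local var} side condition. The degenerate case $r' = \bot$ (a top-level member with no incoming edge) is handled directly by the \textsc{epsilon} rule, which supplies $G \vdash \epsilon : \bot$ in place of the inheritance-chain induction.
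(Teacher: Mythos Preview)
Your proposal is correct and is essentially a detailed unfolding of what the paper asserts in a single sentence: the paper's own proof reads, in full, ``This follows straightforwardly from Definition~\ref{def:concrete} and the typing rules of~\ir.'' You take the same route---case-split on the shape of $d$, invoke the clause of $\gamma$ that already guarantees the receiver and argument sub-expressions have the required types, and then apply the matching \textsc{field}/\textsc{method call}/\textsc{local var} rule---but you actually carry it out rather than merely pointing at it.

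The interesting difference is that your write-up surfaces work the paper's one-liner suppresses: the inheritance-chain induction needed to discharge the $\textit{Field}/\textit{Method}$ lookup premise when $r$ is a strict subtype of the declaring class, and the substitution-agreement lemma needed to identify $\sigma_f$ (resp.\ $\sigma^\star$) with $\sigma$ on the relevant type variables so that the final \textsc{local var} side condition $\sigma t' <: t$ goes through. Your observations about \textsc{method-inheritance} not threading type arguments and about shadowing are legitimate technical concerns with the formalism as stated; the paper simply does not engage with them, treating the lookup functions as if they transparently recover the intended substitution. So your plan is not a different argument but a more honest accounting of the same one, and the obstacles you flag are gaps in the paper's presentation rather than in your reasoning.
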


\begin{proof}
    This follows straightforwardly from Definition~\ref{def:concrete}
    and the typing rules of~\ir.
\end{proof}

\point{Example}
Consider again the method $d = \texttt{add}(T)$
defined in the {\tt List} class
of our example API
(Figure~\ref{fig:example-api}, line 6).
Also,
consider a substitution $\sigma = [T \mapsto {\tt Int}]$.
The API typing sequence
$s_d =\langle \texttt{List<Int>}, \texttt{Int}, \texttt{boolean}\rangle$
is well-typed under $\sigma$,
because
(1) $\texttt{List<Int>} <: [T \mapsto \texttt{Int}]\texttt{List}\prim{<}T\prim{>}$,
(2) $\texttt{Int} <: [T \mapsto \texttt{Int}] T$,
and (3) $\texttt{boolean} >: [T \mapsto \texttt{Int}] \texttt{boolean}$.
Hence,
all the concrete expressions that come from
$s_d$,
such as the last two expressions of Listing~\ref{lst:example-listing},
are well-typed.

\begin{definition}[API enumeration]
Consider an~\graph~$G$
and an API component $d$ included in $G$.
For every type substitution $\sigma \in \Sigma$,
API enumeration computes exhaustively
a set of~\emph{well-typed} API typing sequences $S_d$
under $\sigma$
such that for every $s_d \in S_d$,
we obtain a program $p \in \gamma(G, d, s_d, \sigma)$.
\end{definition}

The algorithm of
API enumeration first enumerates~\emph{all} valid
type substitutions that instantiate the type variables
of an API definition.
Given that in the presence of polymorphic types,
substitutions are infinite
(consider {\tt List<List<List<\dots{>}{>}{>}}),
API enumeration can be set to produce
all type substitutions up to a certain depth.
This is done by instantiating each type variable
with every concrete type found in a given API,
including type instances. 
For example,
when it comes to enumerating all type substitutions of length two,
the result set includes a type of the form {\tt List<List<Int{>}{>}},
but not {\tt List<List<List<Int{>}{>}{>}}
(as the depth in this case is three).
Based on a substitution $\sigma$
and a library component $d$
with signature $\langle r, p_1\dots p_n, t\rangle$,
API enumeration generates
the Cartesian product of
sets $R, P_1,\dots P_n, T$,
where $R$ is the set consisting of subtypes of $\sigma r$,
$P_i$ is the set consisting of subtypes of $\sigma p_i$
(with $1 \le i \le n$),
and $T$ is the set containing the supertypes of $\sigma t$.
API enumeration is exponential in terms of
the number of types included in the signature of $d$.
For example,
consider a substitution $\sigma$
and an API signature that specifies four types:
$\langle r, p_1, p_2, t\rangle$.
Each of $\sigma r$,
$\sigma p_1$,
and $\sigma p_2$ contains ten subtypes,
while there are also ten supertypes of $\sigma t$.
API enumeration gives $10^4 = \nnum{10000}$
well-typed typing sequences under $\sigma$.

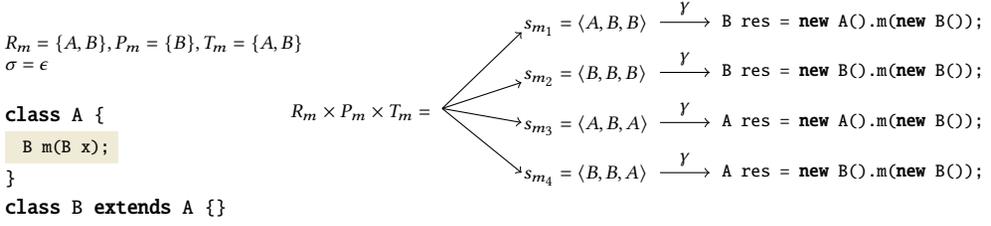
\begin{figure}
\scriptsize
\begin{tikzpicture}[x=0.75pt,y=0.75pt,yscale=-1,xscale=1]
\draw (8,110) node [anchor=north west][inner sep=0.75pt]   [align=left]
{
$R_m = \{A, B\}, P_{m} = \{B\}, T_m = \{A, B\}$\\
$\sigma = \epsilon$\\\\
\begin{lstlisting}[language=java, numbers=none, escapeinside={<@}{@>}]
class A {
<@\colorbox{eggshell}{\scriptsize\texttt{  B m(B x);}}@>
}
class B extends A {}
\end{lstlisting}
};
\draw (190,150) node [align=left]
{
$R_m \times P_{m} \times T_m = $
};
\draw (270,100) node [anchor=north west][inner sep=0.75pt]   [align=right]
{$s_{m_1} = \langle A, B, B\rangle$};
\draw (270,125) node [anchor=north west][inner sep=0.75pt]   [align=right] 
{$s_{m_2} = \langle B, B, B\rangle$};
\draw (270,150) node [anchor=north west][inner sep=0.75pt]   [align=right]
{$s_{m_3} = \langle A, B, A\rangle$};
\draw (270,175) node [anchor=north west][inner sep=0.75pt]   [align=right]
{$s_{m_4} = \langle B, B, A\rangle$};
\draw (370,95) node [anchor=north west][inner sep=0.75pt]   [align=right]
{
\begin{lstlisting}[language=java, numbers=none, basicstyle=\ttfamily\scriptsize]
B res = new A().m(new B());
\end{lstlisting}
};
;
\draw (370,120) node [anchor=north west][inner sep=0.75pt]   [align=right]
{
\begin{lstlisting}[language=java, numbers=none, basicstyle=\ttfamily\scriptsize]
B res = new B().m(new B());
\end{lstlisting}
};
\draw (370,145) node [anchor=north west][inner sep=0.75pt]   [align=right]
{
\begin{lstlisting}[language=java, numbers=none, basicstyle=\ttfamily\scriptsize]
A res = new A().m(new B());
\end{lstlisting}
};
\draw (370,170) node [anchor=north west][inner sep=0.75pt]   [align=right]
{
\begin{lstlisting}[language=java, numbers=none, basicstyle=\ttfamily\scriptsize]
A res = new B().m(new B());
\end{lstlisting}
};
\draw[->] (230, 148) -- (270, 110);
\draw[->] (230, 148) -- (270, 135);
\draw[->] (230, 148) -- (270, 155);
\draw[->] (230, 148) -- (270, 180);
\draw[->] (340, 105) -- (365, 105) node[midway, above] {$\gamma$};
\draw[->] (340, 130) -- (365, 130) node[midway, above] {$\gamma$};
\draw[->] (340, 155) -- (365, 155) node[midway, above] {$\gamma$};
\draw[->] (340, 180) -- (365, 180) node[midway, above] {$\gamma$};
\end{tikzpicture}
\caption{Enumerating well-typed API typing sequences
for method {\tt B m(B x)} using substitution $\sigma = \epsilon$.
The set $R_m$ contains the subtypes of the formal receiver type of {\tt m},
$P_{m}$ contains the subtypes of the first formal parameter type of {\tt m},
while $T_m$ contains the supertypes of the return type of {\tt m}.
Using the function $\gamma$,
each typing sequence leads to programs
that invoke method {\tt m} via a unique typing pattern.
This,
in turn,
triggers various type-related operations in the compiler.
}
\label{fig:api-enumeration-ex}
\vspace{-3mm}
\end{figure}

\point{Example}
Figure~\ref{fig:api-enumeration-ex} illustrates
the concept of API enumeration for
the non-polymorphic method {\tt B m(B x)}.
Based on the signature of method {\tt m},
we compute the sets $R_m$, $P_{m}$,
and $T_m$,
and then we take their Cartesian product.
Each element of this product
corresponds to a well-typed API typing sequence,
which eventually leads to concrete programs
that invoke method {\tt m}
through a unique typing combination.
For example,
the type of the receiver in the first program
is {\tt A},
while the type of the receiver in the second program
is type {\tt B}.
API enumeration for polymorphic components
follows a similar process
by enumerating well-typed API sequences
under every valid type substitution.

\point{Enumerating ill-typed API typing sequences}
An API typing sequence $s_d$ of
a definition $d$
is considered ill-typed
when at least one enclosing type of the sequence is~\emph{incompatible}
with the signature of $d$.
Consider an~\graph~$G$,
a definition $d$,
and a substitution $\sigma$.
An API typing sequence $s_d = \langle r,\overline{p},t \rangle$
is ill-typed under $G$ and $\sigma$,
if $\textit{sig}(G, d) = \langle r', \overline{p'}, t' \rangle$
and $r \not{<:} \sigma r'$ and $r \not{>:} \sigma r'$,
or $\overline{p} \not{<:} \sigma \overline{p'}$,
or $t \not{>:} \sigma t'$.
Notice that a receiver type
is incompatible when
it is neither a supertype nor a subtype
of the formal receiver type.
This happens
to avoid constructing a well-typed sequence
due to inheritance.
In a similar manner to Theorem~\ref{prop:well-typed},
the programs derived from ill-typed API typing sequences
are ill-typed.
Consider again the method $d = \texttt{List.add(T)}$
of Figure~\ref{fig:example-api}.
Under the substitution $\sigma = [T \mapsto {\tt Int}]$,
the sequence
$s_d =\langle \texttt{List<Int>}, \texttt{String}, \texttt{boolean}\rangle$
is ill-typed,
because $\texttt{String} \not{<:} [T \mapsto \texttt{Int}] T$.
An expression that realizes this sequence
is deemed ill-typed:
\begin{lstlisting}[language=Java,numbers=none]
boolean x = Utils.<Int>createList("str value") // String is incompatible to Int
\end{lstlisting}

\subsection{Concretization of API Typing Sequences}
\label{sec:concretization}

Every API typing sequence that
arises from API enumeration
is converted into a concrete test program.
We present a method that~\emph{under-approximates}
the concretization function $\gamma$
presented in Definition~\ref{def:concrete}.
To identify type inhabitants,
our method examines the~\graph~and
enumerates all paths that
reach a certain type $t$.
Each path represents an inhabitant
consisting of a chain of method calls
or field accesses.
In the presence of parametric polymorphism,
some paths are infeasible,
as the underlying type variables found
in these paths are instantiated with
incompatible types
that do not ultimately lead to a given type $t$.
We explain the details on how
we determine the feasibility of each path
and properly instantiate the corresponding type variables.

\begin{figure}
\begin{minipage}{.43\textwidth}
\begin{lstlisting}[language=java]
class Utils {
  static <X, Y> Map<X, Y> mapOf();
  static Map<String, String> mapOfStrs();
}
class Map<K, V> {
  Set<K> keySet();
}
class Set<E> {}
\end{lstlisting}
\end{minipage}
\begin{minipage}{.56\textwidth}
\center
\includegraphics[scale=0.36]{./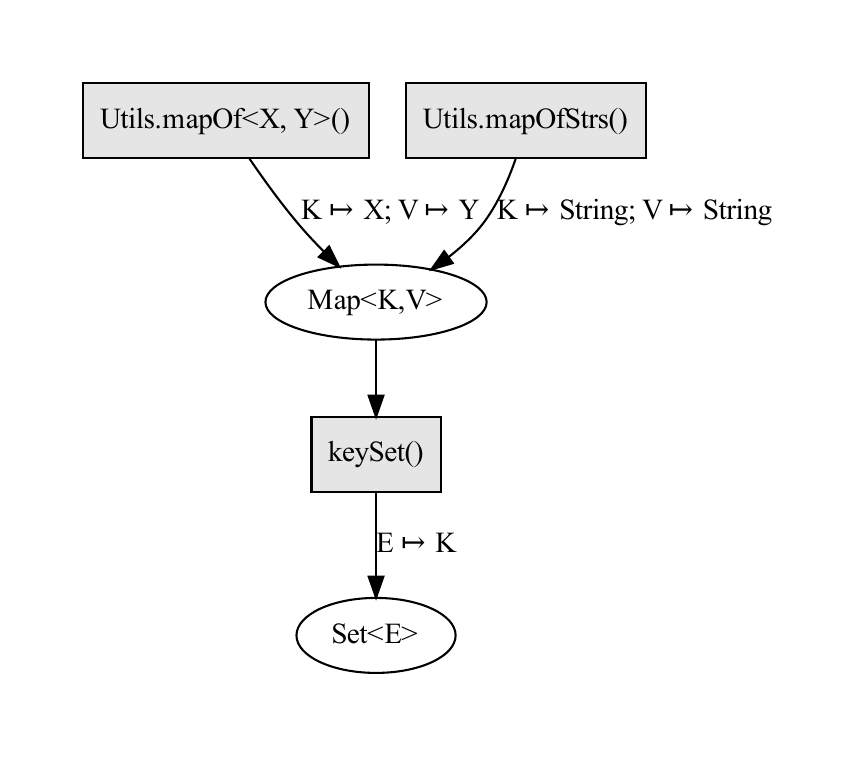}
\end{minipage}
\vspace{-5mm}
\caption{An API relying on parametric polymorphism and its~\graph.}
\label{fig:ex-with-param}
\vspace{-3mm}
\end{figure}

\point{Finding type inhabitants via API graph reachability}
In the simplest scenario,
where parametric polymorphism is not present,
finding inhabitants of a type $t$ is given by
standard graph reachability algorithms
that compute the set of paths that reach node $t$~\cite{jungloid}.
However,
parametric polymorphism complicates the process,
making the simple
path search yield incorrect results.
To illustrate this,
consider the API and its corresponding~\graph~of
Figure~\ref{fig:ex-with-param}.
Now assume that we want to find an inhabitant
of polymorphic type {\tt Set<Int>}.
The na\"ive path search
identifies two potential paths
that reach the type constructor {\tt Set}:
path (1) and path (2).
The expression {\tt mapOfStrs().keySet()},
which originates from path (1)
${\tt Utils.mapOfStrs()} \rightarrow {\tt Map}\prim{<}K, V\prim{>}
\rightarrow {\tt keySet()} \rightarrow {\tt Set}\prim{<}E\prim{>}$,
leads to an incompatible type {\tt Set<String>}.
In contrast,
path (2)
${\tt Utils.mapOf}\prim{<}X,Y\prim{>}{\tt ()} \rightarrow {\tt Map}\prim{<}K,V\prim{>}
\rightarrow {\tt keySet()} \rightarrow {\tt Set}\prim{<}E\prim{>}$
involves a polymorphic function
({\tt Utils.mapOf}).
The expression from path (2) is correctly
identified as
an inhabitant of {\tt Set<Int>}
\emph{only} if the type variable $X$
defined in {\tt Utils.mapOf}
is instantiated with type {\tt Int}.
To identify infeasible paths and properly
instantiate type variables of polymorphic components,
we now present a refined path search approach
that deals with parametric polymorphism.

Every path in an~\graph~may consist of
polymorphic API components.
Every polymorphic component introduces
some type variables.
In the above example,
path (1) introduces type variables $K, V$
and $E$.
Similarly,
path (2) introduces type variables,
$X, Y, K, V$, and $E$.
The labels 
(aka substitutions---recall Section~\ref{sec:api-graph})
found on top of each edge indicate
instantiations of the corresponding type variables.
However,
a path may also
contain free type variables
whose instantiation is not given by such substitutions.
For example,
path (1) of Figure~\ref{fig:ex-with-param}
contains no free type variables,
as both type variables $K$ and $V$
are instantiated with {\tt String}
while type variable $E$ is instantiated with
type $K$.
On the other hand,
path (2) involves two free type variables,
namely, $X$ and $Y$.
These type variables do not participate
in the left-hand side of a substitution.
From now onwards,
given a path $p$,
we use the notation $\textit{TypeVar(p)}$
to obtain the set of type variables of path $p$.
To take the set of free type variables
of a path $p$,
we use the symbol $\textit{FV}(p)$.

\begin{definition}
Let a type $t$ and its decomposition
$\textit{decompose}(t) = \langle\sigma, t'\rangle$.
Given an~\graph~$G$,
we say that a path $p$~\emph{forms}
an inhabitant of type $t$,
when
(1) the path $p$ leads to target node $t'$,
and (2) there is a
\emph{a valid} substitution $\sigma'$,
such that $\forall\alpha\in\textit{TypeVar(p)}.\ \alpha \in \textit{Dom}(\sigma')$
and $\sigma \sqsubseteq \sigma'$.
Our goal is then to find the set of
paths $P$
so that every path $p \in P$
forms an inhabitant of type $t$.
\end{definition}
The definition above summarizes the problem
of path search,
which aims to find type inhabitants even
in the presence of polymorphic types.
Given a target type $t$,
we initially decompose it
into a substitution $\sigma$ and a type $t'$
according to Definition~\ref{def:decomposition}.
Then,
we search the~\graph~$G$ to find the set of paths
that reach node $t'$.
A path $p$ forms a type inhabitant
if there is a~\emph{valid} substitution $\sigma'$
(recall Definition~\ref{def:validity-sub})
that includes every type variable in $\textit{TypeVar(p)}$,
and subsumes the original substitution $\sigma$
according to Definition~\ref{def:subsumption}
($\sigma \sqsubseteq \sigma'$).
Conceptually,
the substitution $\sigma$
that we obtain after performing the type decomposition on $t$,
\emph{constrains} the instantiation
of some type variables found in path $p$.
Therefore,
we need to ensure
that the substitution $\sigma'$ contains
compatible assignments
for the type variables included in $\sigma$.
If no such valid substitution $\sigma'$ exists,
then we consider the path as~\emph{infeasible}:
it cannot yield an
expression of the target type $t$.

\point{Example}
Consider again the~\graph~of Figure~\ref{fig:ex-with-param}
and the problem of finding inhabitants of type {\tt Set<Int>}.
This type is decomposed into the
substitution $\sigma = [E \mapsto {\tt Int}]$
and the type constructor {\tt Set}\prim{<}E\prim{>}.
Again,
there are two paths
that reach the type constructor {\tt Set}.
Path (1) is infeasible
as the only substitution $\sigma$
that stems from this path
is $\sigma' = [K \mapsto {\tt String}, V\mapsto {\tt String}, E \mapsto {\tt String}$],
which is incompatible with $\sigma$,
because $\sigma(E) \neq \sigma'(E)$.
On the other hand,
path (2) is feasible
because there exists a substitution $\sigma'$
such that $\sigma \sqsubseteq \sigma'$.
Specifically,
we have $\sigma' = [X \mapsto {\tt Int}, Y\mapsto {\tt String}, K \mapsto {\tt Int}, Y \mapsto {\tt String}, E \mapsto {\tt Int}]$,
as
$\sigma(E) = \sigma'(E) = {\tt Int}$.

\setlength{\textfloatsep}{1pt}
\begin{algorithm}[t]
\small
\SetKwProg{Pn}{fun}{=}{}
\SetKwFunction{PATH}{find\_API\_paths}
\DontPrintSemicolon
\SetKwInOut{Input}{Input}
\SetKwInOut{Output}{Output}
\SetKwFor{Case}{case}{$\Rightarrow$}{}%
\SetKwFor{Switch}{match}{with}{}%
\SetInd{0.1em}{1em}
\Pn{\PATH{$G, t$}}{
    $\langle\sigma,t'\rangle \gets \textit{decompose(t)}$\;
    $\textit{paths} \gets \{\langle\langle t \rangle, \bot\rangle\}$\;
    \For{$p \in \textit{Paths}(G, t')$}{
        $\textit{sub} \gets \textit{gather all substitutions of path p}$\;
        $\sigma' \gets \textit{perform constant propagation on } \sigma\ \textit{based on } \textit{subs}$\;
        $t_2 \gets \sigma't'$\;
        $\sigma_1 \gets \textit{unify}(t, t_2)$\;
        \lIf{$\sigma_1 = \epsilon$}{
            \Continue
        }
        $\sigma_2 \gets \forall\alpha\in\textit{FV}(p).\ \textit{instantiate } \alpha\: \: \:  \textit{if } \alpha \not\in \sigma_1$\;
        $\sigma' \gets \sigma_1 \cup \sigma_2$\;
        $\textit{paths} \gets \textit{paths} \cup \{\langle p, \sigma'\rangle\}$\;
    }
    \Return{$\textit{paths}$}
}
\caption{Algorithm for finding paths that form inhabitants of type $t$}
\label{alg:find-paths}
\end{algorithm}

\begin{figure}[t]
\centering
\begin{subfigure}{0.45\linewidth}
\includegraphics[scale=0.32]{./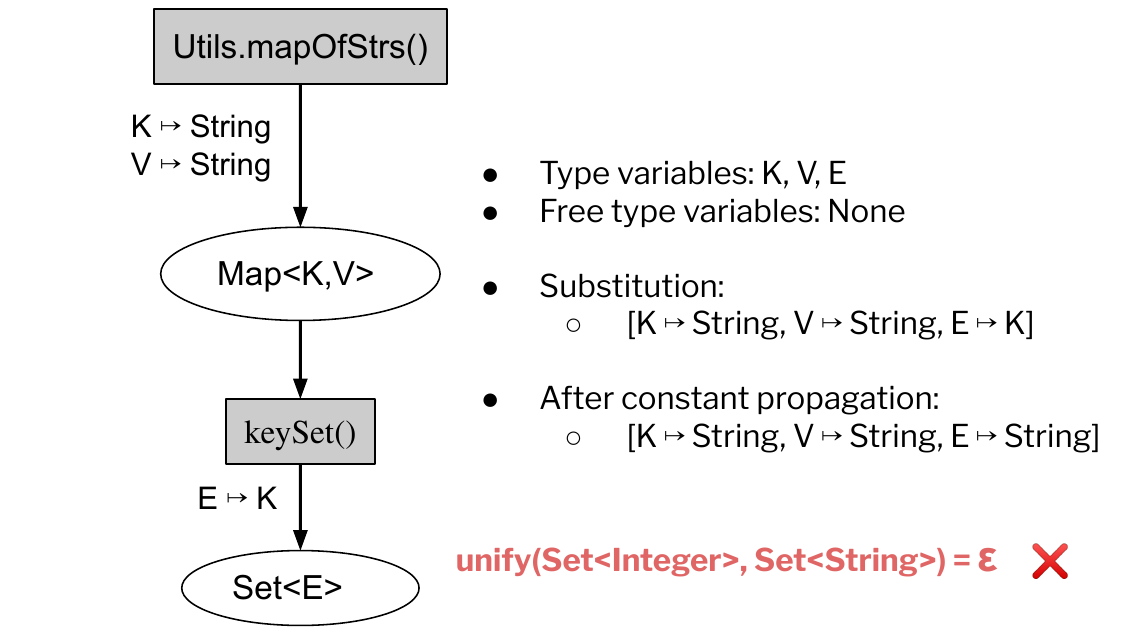}
\caption{An infeasible path.}
\label{fig:first-path}
\end{subfigure}
~
\hspace{3mm}
\begin{subfigure}{0.54\linewidth}
\includegraphics[scale=0.32]{./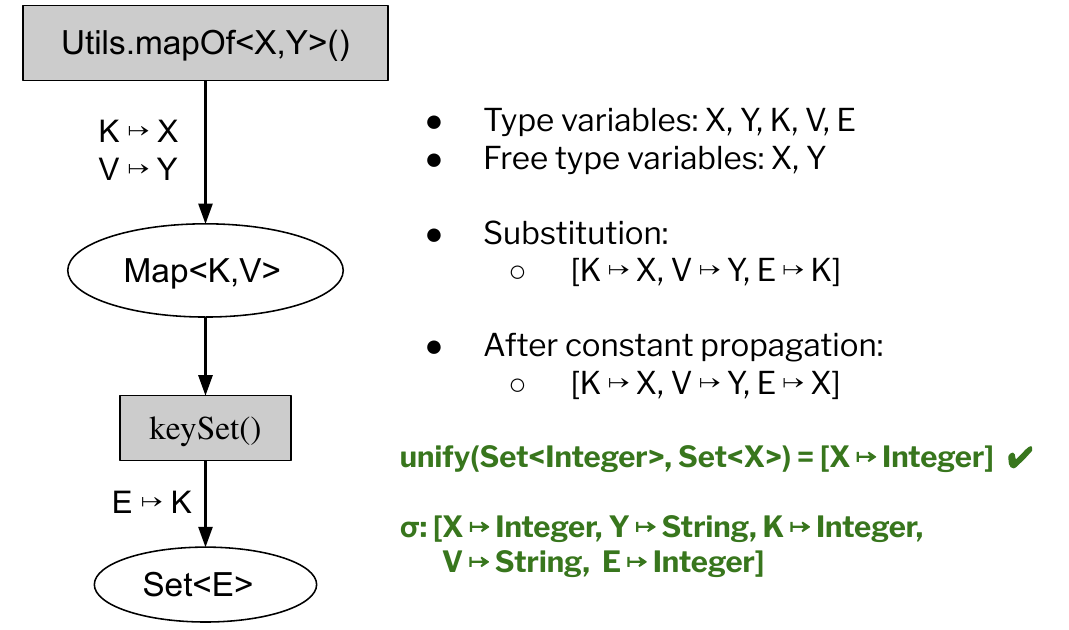}
\caption{An inhabitant of {\tt Set<Int>}.}
\label{fig:second-path}
\end{subfigure}
\vspace{-3mm}
\caption{Determining which of the paths
in the~\graph~of Figure~\ref{fig:ex-with-param}
form an inhabitant of type {\tt Set<Int>}.}
\label{fig:individ-paths}
\vspace{2mm}
\end{figure}

\point{Algorithm}
Algorithm~\ref{alg:find-paths} outlines
our method for identifying paths
that form inhabitants of a given type.
We describe the algorithm
in the context of
finding inhabitants of {\tt Set<Int>}
based on
the~\graph~shown in Figure~\ref{fig:ex-with-param}.
The algorithm takes as input
an~\graph~$G$ and a type $t$,
and returns a set of pairs,
where each pair consists of
a path and a substitution.
The algorithm starts with
decomposing the given type $t$
into a substitution $\sigma$
and another type $t'$.
Then,
it enumerates all~\emph{acyclic} paths that reach node $t'$.
In our example,
the are two paths
that reach the type constructor {\tt Set}---each of those paths is shown individually
in Figure~\ref{fig:individ-paths}.

It is now time to determine
which of the available paths
are feasible.
For every path $p$,
the algorithm examines and gathers all substitutions
(i.e., edge labels) found in $p$
(line 5).
For example,
the first path
(Figure~\ref{fig:first-path})
contains three substitutions
(e.g., $[K \mapsto {\tt String}, V \mapsto {\tt String}, E \mapsto T]$),
while the second path
(Figure~\ref{fig:second-path})
includes the following substitutions:
$[K \mapsto X, V \mapsto Y, E \mapsto K]$.
Next,
the algorithm performs constant propagation
on the collected substitutions and creates
a new substitution $\sigma'$ (line 6).
For example,
the substitution of path (1) becomes
$\sigma'=[K \mapsto {\tt String}, V \mapsto {\tt String}, E \mapsto {\tt String}]$,
while the substitution
when handling path (2) is
$[K \mapsto X, V \mapsto Y, E \mapsto X]$.
Using the new substitution $\sigma'$,
we create a new type $t_2$
that is given by $\sigma't'$.
For example,
for the path of Figure~\ref{fig:first-path}
$t_2$ is {\tt Set<String>},
while $t_2$ stands for {\tt Set<X>}
when dealing with the path of Figure~\ref{fig:second-path}.
Then,
the algorithm unifies the type $t_2$
with the target type $t$ (line 8).
Essentially,
this type unification operation identifies
assignments for the free type variables of path $p$
so that the resulting type of the path expression
is compatible with the target type $t$.
If the two types are not unifiable,
the path is considered infeasible,
and the algorithm proceeds to the next iteration
(line 9).
For example,
the type $t_2 = \texttt{Set<String>}$ of Figure~\ref{fig:first-path}
is not unifiable with the target type {\tt Set<Int>}.
This means that
path (1) is an infeasible path,
not leading to an expression of type {\tt Set<Int>}.
When the types are unifiable,
the outcome of type unification ($\sigma_1$---line 8)
instantiates some of the free type variables of $p$.
For example,
consider again Figure~\ref{fig:second-path}:
the algorithm unifies types {\tt Set<Int>}
and {\tt Set<X>}
by returning the substitution $\sigma_1 = [X \mapsto {\tt Int}]$.

In the final step,
the algorithm instantiates
any free type variables of $p$,
not instantiated by 
the previous unification operation (line 10).
Such free type variables
do~\emph{not} affect the type of the underlying expression
and therefore,
the algorithm is free to instantiate them
with any valid type
that respects their upper bounds.
For example,
$Y$ is another free type variable
that stems from the path of Figure~\ref{fig:second-path}.
The algorithm instantiates it with a randomly selected type,
e.g., {\tt String}.
The final substitution $\sigma'$ is then the union
of substitutions $\sigma_1$ and $\sigma_2$ (line 11).
The final substitution and the corresponding path $p$
are added to the set $P$ (line 12).
For example,
the path of Figure~\ref{fig:second-path}
forms an inhabitant of type {\tt Set<Int>}
if using the substitution
$[X \mapsto {\tt Int}, Y \mapsto {\tt String}, K \mapsto {\tt Int},
V \mapsto {\tt String}, E \mapsto {\tt Int}]$.

\point{Termination}
The function \textit{Paths(G, t)}
on line 4 of Algorithm~\ref{alg:find-paths}
enumerates all paths in graph $G$
that reach the target node $t$.
Since an~\graph~may contain cycles,
there might be an infinite number of paths.
To tackle this,
we consider only acyclic paths,
which are guaranteed to be finite.

\setlength{\textfloatsep}{1pt}
\begin{algorithm}[t]
\small
\SetKwProg{Pn}{fun}{=}{}
\SetKwFunction{INHA}{inhabitants}
\SetKwFunction{TOEXPR}{to\_expr}
\DontPrintSemicolon
\SetKwInOut{Input}{Input}
\SetKwInOut{Output}{Output}
\SetKwFor{Case}{case}{$\Rightarrow$}{}%
\SetKwFor{Switch}{match}{with}{}%
\SetInd{0.1em}{1em}
\Pn{\INHA{$G, t$}}{
    $\textit{inhabitants} \gets \emptyset$\;
    \For{$\langle p, \sigma \rangle \in {\tt find\_API\_paths}(G, t)$}{
        $\textit{inhabitants} \gets \textit{inhabitants} \cup \textit{2expr}(p, \sigma)$\;
    }
    \Return{$\textit{inhabitants}$}
}
\caption{Algorithm for converting paths into concrete expressions}
\label{alg:to-expr}
\end{algorithm}

\point{Converting a path into an expression}
Having computing the set of paths
that form inhabitants of a type $t$,
we employ Algorithm~\ref{alg:to-expr}
to convert every path into a concrete expression.
The algorithm employs the function \textit{2expr},
which given an~\graph~$G$,
and a substitution $\sigma$,
converts a path $p$ into 
an expression as follows:

\begin{align*}
    \textit{2expr}(G, p\boldsymbol{\cdot} \prim{fun } m\prim{<}\overline{\alpha}\prim{>}(\overline{x: p}): t, \sigma) &= \textit{2expr}(G, p, \sigma).m\prim{<}\overline{t}\prim{>}(\overline{e})  & \text{where}\\
    &\begin{aligned}[t]
    &\overline{e} = e_i \in {\tt inhabitants}(G, \sigma p_i), \forall i \in\{1\dots n\}\\
    &\overline{t} = (\sigma(\alpha_i))_{i=1}^n \end{aligned}\\
    \textit{2expr}(G, p\boldsymbol{\cdot} \prim{var } f: t, \sigma) &= \textit{2expr}(G, p, \sigma).f\\
    \textit{2expr}(G, d, \sigma) &= \textit{constant}(d) &\text{if } d \in \textit{Type} \\
    \textit{2expr}(G, p\boldsymbol{\cdot} d, \sigma) &= \textit{2expr}(G, p, \sigma) &\text{if } d \in \textit{Type} \\
    \textit{2expr}(G, [], \sigma) &= \epsilon
\end{align*}

For a path consisting
of a single type node,
the function $\emph{2expr}$ simply returns
a constant expression $\textit{constant}(t)$,
which is typically translated into a cast null expression
(e.g., {\tt ???.asInstanceOf[T]} in Scala).
Type nodes found in intermediate positions
within a path are ignored.
To illustrate this path conversion process,
consider the path of Figure~\ref{fig:second-path}.
When using the substitution
$[X \mapsto {\tt Int}, Y \mapsto {\tt String}]$,
this path results in the expression
{\tt Utils.mapOf<Int, String>().keySet()}.

\point{Realization of $\gamma$}
Based on the aforementioned definitions and algorithms,
we finally present the realization
of the concretization function $\gamma$,
previously shown in Definition~\ref{def:concrete}.
The function $\gamma$ is an abstract concept
that generates~\emph{all}~\ir\ expressions that
originate from a given typing sequence and an~\graph.
Our realization function called $\hat{\gamma}$
under-approximates the behavior of $\gamma$,
meaning that
for a given~\graph\ $G$,
a typing sequence $s_d$,
an API component $d$,
and a type substitution,
we have 
$\hat{\gamma}(G, s_d, d, \sigma) \subseteq \gamma(G, s_d, d, \sigma)$.
To do so,
the $\hat{\gamma}$ function
relies on Algorithm~\ref{alg:to-expr}
as follows:
\begin{definition}
\label{def:abstract-concrete}
Let $\hat{\gamma}: G, \textit{Def}, \textit{Type} \times \dots \times \textit{Type} \times \Sigma \longrightarrow \mathcal{P}(\textit{Expr})$ such that:
\begin{align*}
    \hat{\gamma}(G, \bot, t, \sigma) &= \{e \ |\ e \in \texttt{inhabitants}(G, t)\} \\
    \hat{\gamma}(G, \prim{var}\ f: t, \langle t, \bot\rangle, \sigma) &=
    \{e.f\ |\ e \in \texttt{inhabitants}(G, t)\} \\
    \hat{\gamma}(G, \prim{fun}\ m\prim{<}\overline{\alpha}\prim{>}(\overline{x: p}): t, \langle r, \overline{p'}\rangle, \sigma) &=
 \left\{ e_1.m\prim{<}\overline{t}\prim{>}(\overline{e_2})\ \middle| \ 
        \begin{aligned}
            & e_1 \in {\tt inhabitants}(G, r)\\
            & \overline{e_2} = e_{2_i} \in {\tt inhabitants}(G, p_i'), \\ &\forall i \in \{1\dots n\}, \overline{t} = (\sigma(\alpha_i))_{i = 1}^{n}
        \end{aligned}
\right\}\\
    \hat{\gamma}(G, d, s \boldsymbol{\cdot} t, \sigma) &= \{\prim{local var}\ x: t\ |\ e \in \hat{\gamma}(G, d, s, \sigma)\}
\end{align*}
\end{definition}

\begin{theorem}[Soundness]
Given an~\graph~$G$,
an API definition $d \in \textit{Def}$,
a typing sequence $s$,
and a substitution $\sigma$,
if $\hat{\gamma}(G, d, s, \sigma)$ returns $E$,
then $E \subseteq \gamma(G, d, s, \sigma)$.
\end{theorem}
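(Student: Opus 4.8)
The plan is to reduce the statement to a single soundness lemma about the helper \texttt{inhabitants}, namely that $\texttt{inhabitants}(G, t) \subseteq \{e \mid G \vdash e : t\}$ for every type $t$, and then obtain the theorem by structural induction on the typing sequence $s$. Granting the lemma, the induction is short: one compares Definition~\ref{def:abstract-concrete} with Definition~\ref{def:concrete} clause by clause. In the $\bot$ case $\hat{\gamma}$ returns $\texttt{inhabitants}(G,t)$, which by the lemma lies inside $\{e \mid G \vdash e : t\} = \gamma(G,\bot,\langle t\rangle,\sigma)$. In the field and method-call clauses $\hat{\gamma}$ builds exactly the same syntactic shapes as $\gamma$ (for methods, with the identical type arguments $\overline{t} = (\sigma(\alpha_i))_{i=1}^n$), but draws the receiver and each argument from the smaller sets $\texttt{inhabitants}(G,\cdot)$; the lemma therefore gives the inclusion pointwise. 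The local-variable clause $s\boldsymbol{\cdot} t$ follows from the induction hypothesis applied to $s$, since both functions wrap each inner expression with the same binder $\prim{local var}\ x{:}\ t = \cdot$. So all the content is in the lemma.

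I would prove the lemma in two layers, mirroring Algorithms~\ref{alg:find-paths} and~\ref{alg:to-expr}. \emph{First layer:} every pair $\langle p,\sigma'\rangle$ returned by $\texttt{find\_API\_paths}(G,t)$ meets the ``forms an inhabitant'' specification, i.e., writing $\textit{decompose}(t)=\langle\sigma,t'\rangle$, the path $p$ reaches node $t'$ and $\sigma'$ is a \emph{valid} substitution (Definition~\ref{def:validity-sub}) with $\textit{TypeVar}(p)\subseteq\textit{Dom}(\sigma')$ and $\sigma\sqsubseteq\sigma'$ (Definition~\ref{def:subsumption}). This is where the algorithmic details must actually be checked: the constant-propagation step on line~6 composes the edge-label substitutions along $p$ into $\sigma'$, so $\sigma'$ is consistent with every edge simultaneously; the unification on line~8 produces $\sigma_1$ making $\sigma_1 t_2$ compatible with $t$, which forces $\sigma'$ restricted to $\textit{Dom}(\sigma)$ to equal $\sigma$, hence $\sigma\sqsubseteq\sigma'$; and line~10 instantiates the remaining free variables $\textit{FV}(p)$ within their upper bounds, which together with the bound obligations already recorded in the edge labels yields validity. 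Paths that fail unification (line~9) are discarded, so no unsound pair survives.

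\emph{Second layer:} by induction on the length of the path $p$, $\textit{2expr}(G,p,\sigma')$ produces an expression $e$ with $G\vdash e : t$ whenever $\langle p,\sigma'\rangle$ forms an inhabitant of $t$. The base case is a single type node, where $\textit{2expr}$ returns $\textit{constant}(t)$ and the \textsc{constant} rule gives $G\vdash\textit{constant}(t):t$. For the step, a path ending in $\prim{var}\ f{:}\ t_f$ or $\prim{fun}\ m\prim{<}\overline{\alpha}\prim{>}(\overline{x{:}\ p}){:}\ t_m$ is assembled from the $\textit{2expr}$ of the prefix — whose type is the required receiver type, by the way the edges $t\overset{\epsilon}{\rightarrow}d$ and $d\overset{\sigma}{\rightarrow}r'$ were defined in Section~\ref{sec:api-graph} — composed via the \textsc{field} and \textsc{method call} rules; the side condition $\overline{p'}<:\sigma\overline{p}$ of \textsc{method call} holds because the argument expressions are themselves drawn from $\texttt{inhabitants}(G,\sigma' p_i)$, to which the lemma applies recursively, and the substitution $\sigma=\sigma'\cup[\overline{\alpha}\mapsto\overline{t}]$ used in the rule matches the $\sigma'$ threaded through $\textit{2expr}$ since $\overline{t}=(\sigma'(\alpha_i))_{i=1}^n$. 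Intermediate type nodes are skipped and leave the type unchanged, exactly matching the rule applications. This recursion into argument inhabitants is well-founded because, as observed under ``Termination'', \texttt{find\_API\_paths} enumerates only acyclic paths of the finite \graph\ (and the overall enumeration is depth-bounded), so the induction can be carried over the finite computation tree of \texttt{inhabitants}.

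The main obstacle is the first layer: certifying that Algorithm~\ref{alg:find-paths} actually meets the ``forms an inhabitant'' specification, since this is the only place where parametric polymorphism, the composition of edge-label substitutions, and the final unification interact nontrivially. One must be careful that constant propagation on line~6 drops no constraint (so $\sigma'$ genuinely is a single substitution satisfying all edges at once) and that the free-variable instantiation on line~10 cannot violate an upper bound that was only implicitly constrained through a chain of edge labels. Everything else — the clause-by-clause induction of the theorem and the second layer — is routine bookkeeping that piggybacks on the typing rules of~\ir.
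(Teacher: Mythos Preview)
Your proposal is correct and far more thorough than the paper's own proof, which is a single sentence: ``The theorem follows straightforwardly from lines 8--9 of Algorithm~\ref{alg:find-paths} as we consider only paths that result in a type unifiable with the target type $t$.'' The paper simply points at the unification check and declares the result immediate; your two-layer decomposition (correctness of \texttt{find\_API\_paths} against the ``forms an inhabitant'' specification, then type-correctness of \textit{2expr} by induction on path length, then clause-by-clause comparison of $\hat{\gamma}$ with $\gamma$) is exactly the argument one would need to make that sentence rigorous.

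So the core idea is the same --- the soundness hinges on lines~8--9 discarding non-unifiable paths --- but you actually carry out the surrounding induction and verify the side conditions of the \textsc{method call} rule, the consistency of the composed edge-label substitutions, and the well-foundedness of the recursion into argument inhabitants. The paper treats all of that as routine and omits it; your write-up could serve as the missing detailed proof.
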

\begin{proof}
The theorem follows straightforwardly
from lines 8--9 of Algorithm~\ref{alg:find-paths}
as we consider only paths that result in a type
unifiable with the target type $t$.
\end{proof}

\begin{theorem}[Completeness]
Given an~\graph~$G$,
an API definition $d \in \textit{Def}$,
a typing sequence $s$,
and a substitution $\sigma$,
$\hat{\gamma}(G, d, s, \sigma) \neq \emptyset$.
\end{theorem}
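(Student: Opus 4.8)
The plan is to reduce completeness to one auxiliary fact---that the set $\texttt{inhabitants}(G, t)$ is never empty, for any~\graph~$G$ and any type $t$---and then to close the argument by a routine structural induction on the typing sequence $s$, following the four defining clauses of $\hat{\gamma}$ in Definition~\ref{def:abstract-concrete}.

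First I would establish the lemma: $\texttt{inhabitants}(G, t) \neq \emptyset$ for all $G$ and $t$. This is the crux of the proof. The key observation is that Algorithm~\ref{alg:find-paths} (\texttt{find\_API\_paths}) initializes its result variable \textit{paths} on line~3 with the singleton $\{\langle\langle t\rangle, \bot\rangle\}$---the trivial single-node ``path'' that already sits on the target type---and the loop body only ever \emph{adds} pairs to \textit{paths}. Hence $\langle\langle t\rangle, \bot\rangle$ is always in the returned set, independently of the shape of $G$. Feeding this path to \textit{2expr} in Algorithm~\ref{alg:to-expr} triggers the clause $\textit{2expr}(G, d, \sigma) = \textit{constant}(d)$ for $d \in \textit{Type}$, so $\textit{constant}(t) \in \texttt{inhabitants}(G, t)$, which proves the lemma. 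The degenerate case $t = \bot$ is no different: \textit{decompose} returns $\langle\epsilon, \bot\rangle$, and the seeded path still yields $\textit{constant}(\bot)$.

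Given the lemma, the induction is immediate. In the base clause (second argument $\bot$, singleton sequence), $\hat{\gamma}(G, \bot, t, \sigma) = \texttt{inhabitants}(G, t)$, which is non-empty by the lemma. In the field clause, $\hat{\gamma}(G, \prim{var}\ f: t, \langle t, \bot\rangle, \sigma) = \{e.f \mid e \in \texttt{inhabitants}(G, t)\}$ is the image of a non-empty set under $e \mapsto e.f$. In the method clause, the set-builder ranges over $e_1 \in \texttt{inhabitants}(G, r)$ and over $e_{2_i} \in \texttt{inhabitants}(G, p_i')$ for each parameter $i$; every one of these sets is non-empty by the lemma, so their (finite) Cartesian product---and hence the set of method-call expressions---is non-empty, while the type-argument vector $(\sigma(\alpha_i))_{i=1}^{n}$ is fixed by $\sigma$ and introduces no obstruction. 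Finally, $\hat{\gamma}(G, d, s \boldsymbol{\cdot} t, \sigma) = \{\prim{local var}\ x: t = e \mid e \in \hat{\gamma}(G, d, s, \sigma)\}$ is the image of a set that is non-empty by the induction hypothesis.

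The only subtlety---and what I would flag as the \emph{main obstacle}, such as it is---lies in making the lemma airtight: one must notice that the reachability search is \emph{seeded} with the trivial path rather than starting from the empty collection, so that even a type with no reachable method-call/field-access chain in $G$ still admits the constant (cast-null) expression $\textit{constant}(t)$ as an inhabitant. Once that point is pinned down, no clause of $\hat{\gamma}$ can collapse its result to $\emptyset$, since each clause is either \texttt{inhabitants} itself, or a finite image/product of sets that are non-empty by the lemma or the induction hypothesis. Termination of the mutual recursion between $\hat{\gamma}$, \texttt{inhabitants}, and \textit{2expr} is orthogonal and is already ensured by restricting to acyclic paths, as noted for Algorithm~\ref{alg:find-paths}.
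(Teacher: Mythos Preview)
Your proposal is correct and takes essentially the same approach as the paper: the key observation in both is that line~3 of Algorithm~\ref{alg:find-paths} seeds the result with the trivial path $\langle\langle t\rangle,\bot\rangle$, which \textit{2expr} converts to $\textit{constant}(t)$, so \texttt{inhabitants} is never empty. The paper's proof stops there; you additionally spell out the routine case analysis over the four clauses of $\hat{\gamma}$, which the paper leaves implicit but which is a welcome bit of rigor.
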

\begin{proof}
This can be proven by Algorithm~\ref{alg:find-paths}.
The algorithm always returns a non-empty solution,
even if there is no path to the target type $t$.
In this case,
the algorithm returns a singleton path containing
the given type $t$ (line 3, Algorithm~\ref{alg:find-paths}),
which is in turn translated into a constant expression
according to function \textit{2expr}.
\end{proof}

\subsection{Type Erasure}
\label{sec:type-erasure}

Next,
an~\ir~expression obtained from an API typing sequence
(Section~\ref{sec:concretization})
is passed as input to the type erasure process.
The objective is to remove
types from an ~\ir~expression,
while still maintaining the type correctness of the input expression.
Type erasure helps
our approach
stress-test the implementation of
compiler type inference procedures~\cite{hephaestus}.

Type erasure aims to
construct polymorphic invocations
with no explicit type arguments.
Our method relies on
local type inference algorithms~\cite{local},
which are commonly supported
by mainstream programming languages.
In the context of local type inference,
the type arguments of polymorphic invocations
are deduced in a manner that satisfies two conditions:
(1) the method arguments must be compatible
with the formal parameter types of the method being invoked,
and (2) the type of the entire polymorphic invocation
must be compatible with a target type
derived from the surrounding context.

Specifically,
our type erasure process
leverages the following key insight:
a type argument of a polymorphic invocation
can be safely erased
if it can be inferred by
either the arguments of the invocation,
or the target type of the invocation.
At the same time,
the explicit type argument should be the same
as its inferred counterpart.
More formally:
\begin{definition}
\label{def:safe-erasure}
Consider an~\graph~$G$,
a polymorphic method $m$ with signature
$\textit{sig}(G, m) = \langle r, \overline{p}, t\rangle$
and its invocation $e = r.m\prim{<}\overline{t}\prim{>}(\overline{e'})$,
a target type $k$ for $e$,
and a substitution $\sigma$
that maps every type variable of method $m$
to their type arguments in $e$.
We say that the type argument of
a type variable of method $m$
can be~\emph{safely removed} from expression $e$,
when:
\begin{itemize}
\item $\sigma(\alpha) = \sigma'(\alpha)$, where $\sigma' = \textit{unify}(k, t)$
\item or $\sigma(\alpha) = \sigma'(\alpha)$, where $\sigma' = \bigcup_{i=1}^n \textit{unify}(p_i', p_i)$ and $G \vdash \overline{e'}: \overline{p'}$
\end{itemize}
\end{definition}

\begin{figure}[t]
\centering
\small
\begin{mathpar}
\inferrule[constant]{
}
{
    \textit{erasure}(G, \textit{constant(t)}, t') = \textit{constant(t)}
}
\hva \and
\inferrule[field access]{
}
{
    \textit{erasure}(G, e.f, t) = \textit{erasure}(G, e, \bot).f
}
\hva \and
\inferrule[method call]{
    e = e_1.m\prim{<}\overline{t}\prim{>}(\overline{e_2}) \\
    G \vdash \overline{e_2}: \overline{p} \\\\
    \forall \alpha \in \textit{TypeVar}(m). \textit{the type argument of }\alpha\ \textit{can be safely erased from e when the target type is k}
}
{
    \textit{erasure}(G, e, k) = \textit{erasure}(G, e_1, \bot).m(\textit{erasure}(G, \overline{e_2}, \overline{p})\dots)
}
\hva \and
\inferrule[local var]{
}
{
    \textit{erasure}(G, \primsem{local var } x: t = e, t') = \primsem{local var } x: t = \textit{erasure}(G, e, t)
}
\end{mathpar}
\vspace{-4mm}
\caption{Defintion of type erasure. The~\textit{erasure}
function ($G \times \textit{Expr} \times \textit{Type} \longrightarrow \textit{Expr}$)
takes an~\graph,
an expression $e$,
and a target type $t$,
and yields another expression $e'$
with erased type information.}
\label{fig:type-erasure}
\vspace{2mm}
\end{figure}

Based on the preceding property,
we define the function $\textit{erasure}$
($G \times \textit{Expr} \times \textit{Type} \longrightarrow \textit{Expr}$),
which takes an~\graph,
an input expression $e$,
and a target type $t$.
It outputs another expression
with type information removed.
The full definition of $\textit{erasure}$
is shown in Figure~\ref{fig:type-erasure}.
The function $\textit{erasure}$ is recursively applied
to any sub-expression included in $e$.
When handling a method call $e$,
the type arguments are removed by $\textit{erasure}$,
but only when~\emph{every} type argument
of the call can be safely erased
according to Definition~\ref{def:safe-erasure}
([{\sc method call}]).

\point{Example}
Consider the following code snippet:

\begin{minipage}{0.5\linewidth}
\begin{lstlisting}[language=Java]
<T> void m1(T x) {}
<X, Y> Y m2(X p1)
m1<Object>("str");
m1<String>("str");
m2<String, String>("f");
String x = m2<String, String>("str");
\end{lstlisting}
\end{minipage}
~
\begin{minipage}{0.5\linewidth}
\begin{lstlisting}[language=Java, numbers=none]
<T> void m1(T x) {}
<X, Y> Y m2(X p1)
m1<Object>("str"); // Not erased
m1("str"); // Erased
m2<String, String>("f"); // Not erased
String x = m2("str"); // Erased
\end{lstlisting}
\end{minipage}
After applying the (optional) function $\textit{erasure}$,
we get the method calls shown on the right.
\textit{erasure}
removes the type arguments from the second and
fourth polymorphic invocation (lines 4, 6),
while the first and the third call remain the same.
Specifically,
if we chose
to erase the type argument of the first call (line 3),
the inferred type of {\tt T} would become {\tt String},
which is not equivalent with the explicit type
argument {\tt Object}.
Similarly,
we do not remove the type arguments from the method call
on line 5,
because the type argument of type variable {\tt Y}
cannot be safely erased.
This is because
there is no target type
for the entire call
that helps with the inference of {\tt Y}.

\point{Relation to~\heph' type erasure approach}
Erasing types has proven useful in finding
bugs in type inference implementations,
as demonstrated by the~\heph\ tool~\cite{hephaestus}.
\heph\ comes with a mutation
that erases types from an~\emph{existing} program.
In contrast,
our approach differs
by incorporating type erasure
directly into the synthesis process.
During the generation of a method call,
we check whether the method call's type arguments
can be omitted based on the inference
rules outlined in Figure~\ref{fig:type-erasure}.
Notably,
integrating type erasure into
the synthesis process makes our approach efficient,
as we further
show in Section~\ref{sec:eval-thalia-hephaestus} of our evaluation.
One distinct difference is the overhead
in~\heph,
which is attributed to an intra-procedural analysis
that preserves the type correctness of the input program
during the type erasure mutation.

\subsection{Implementation and Discussion}
\label{sec:implementation}

Having presented the theory behind
the main components of the approach,
this section focuses on noteworthy
technical details.
We have implemented our techniques
on top of~\heph,
the modern framework
for testing compilers' type checkers~\cite{hephaestus}.
Our implementation,
which we call~\tool,
extends~\heph~using roughly~\nnum{5k} additional
lines of Python code.

The input of~\tool~is a set of JSON files
that describe a given API.
We have developed an auxiliary script
that automatically produces such JSON files
by parsing a library's API documentation web pages
(e.g., generated by {\tt javadoc}) via
the {\tt beatifulsoup4} Python package.
\tool~then examines the input JSON files
and builds the corresponding~\graph.

\point{Producing typing sequences}
API enumeration is exponential
in terms of the number of types
found in the signature of an API component.
To make API enumeration practical,
\tool~employs randomization.
When dealing with a polymorphic API component $d$,
\tool~first generates a~\emph{valid} type substitution $\sigma$
at random.
Based on the randomly generated substitution $\sigma$,
\tool~then computes the set $S_d$ containing the typing sequences of
$d$ as described in Section~\ref{sec:api-enumeration}.
\tool\ generates one test case
for every $s_d \in S_d$
by applying function $\hat{\gamma}(G, d, s_d, \sigma)$
as detailed in Section~\ref{sec:concretization}.

\point{Enumerating API paths}
To compute all simple paths
that reach a specific node,
\tool~employs Yen's algorithm~\cite{yens},
which computes the $k$-shortest loopless paths in a graph.
During the concretization of an API typing sequence,
\tool~invokes a variant of Algorithm~\ref{alg:find-paths}
presented in Section~\ref{sec:concretization}.
\tool~iterates over the paths given by Yen's algorithm
in a~\emph{random} order,
and rather than returning all feasible paths,
\tool~returns the first random path that forms a type inhabitant.


\point{\graph~size}
Intuitively,
the input API affects the size of the underlying~\graph,
and thus the scalability of our approach.
However,
as we show in our evaluation,
our tool can easily handle real-world APIs
with more than~$\empirical{25}$k edges,
and synthesize programs in milliseconds.

\point{Incomplete APIs}
Although our realization function $\hat{\gamma}$
(Section~\ref{sec:concretization})
always returns a non-empty set of expressions,
we may be unable to exercise a specific API component
due to the presence of recursive bounds and
the use of an API with missing type information.
To illustrate this,
consider:
\begin{lstlisting}[numbers=none,language=Java]
class A<T extends A<T>> {  int getSize();  }
\end{lstlisting}
We are unable to produce a typing sequence
that invokes {\tt getSize}
as we cannot construct a proper receiver type 
derived from the type constructor {\tt A}.
This is because,
there is no valid instantiation of type variable {\tt T}
that is compatible with the bound {\tt A<T>}.
This is unavoidable
because the preceding API does not contain subtypes of {\tt A}
(e.g., a subclass {\tt class B extends A<B>}).
To tackle this,
our enumeration skips all API entities for
which we cannot produce well-typed typing sequences.

\point{Generalizability}
\label{general}
\tool~currently produces programs written
in three popular languages: Scala, Kotlin, and Groovy.
Adding a new language requires
(1) the collection
of its APIs,
(2) the implementation of
a parser that transforms
the string representation of a type
(as it appears in the input JSON)
into its in-memory counterpart supported by~\ir,
and (3) a translator
to convert~\ir~programs into source files
written in the target language.

Although
our approach enables increased feature coverage
(see Section~\ref{sec:eval-test-case-chars}),
we might encounter an API
that exhibits type system-related features
that are not currently understood by~\ir.
By default,
we skip exercising API entities
that involve such unsupported features.
As a result,
our current implementation
might not be as effective for languages,
such as Rust, OCaml,
or TypeScript
because the type system of API-IR
does not currently support many of their core type system features,
including currying, type aliases, structural types,
associated types or more expressive bound constraints.
However,
the fundamental concepts of our approach
(e.g., API enumeration) are still applicable to any modern
language,
because APIs are ubiquitous.
For example,
one has the option to enhance our~\ir~language
with new type-related features.
We have already done so
to accommodate Scala's higher-kinded types,
and Kotlin's nullable types.

\section{Evaluation}
\label{sec:evaluation}

Our evaluation answers the following research questions:
\begin{enumerate}[label={\bf RQ\arabic*}, leftmargin=2.3\parindent]
    \item Is~\tool~effective in finding new compiler typing bugs?
        (Section~\ref{sec:eval-bug-finding})
        \point{Short answer}
        During a five-month period 
        of developing~\tool~and testing industrial-strength compilers, 
        a total of~\ttotal~bugs were detected by~\tool. 
        Out of these,~\treal~bugs have been confirmed or fixed.
    \item What are the characteristics of the test cases generated by~\tool?
        (Section~\ref{sec:eval-test-case-chars})
        \point{Short answer}
        \tool's test cases are concise, 
        averaging only 11--13 lines of code 
        across various languages. 
        To reveal bugs,
        these compact test cases
        effectively combine key features,
        such as parametric polymorphism,
        overloading, and higher-order functions.
        Rather than solely relying on a generative process,
        these features originate from~\emph{existing} API definitions.
    \item What is the impact of libraries
        and synthesis modes
        on the effectiveness of~\tool? 
        (Section~\ref{sec:eval-rq3})
        \point{Short answer}
        Using diverse libraries as input seeds 
        increases code coverage 
        across all tested compilers. 
        Different synthesis settings further boost
        (1) the bug-finding capability
        by uncovering~\empirical{36} additional bugs,
        and (2) the line coverage
        by up to $\empirical{5}$\%--$\empirical{11.9}$\%
        across all compilers.
        \tool~scales well
        even with libraries containing large APIs, 
        with an average synthesis time (per program)
        of $\empirical{161}$ to $\empirical{256}$ milliseconds,
        depending on the target language.
    \item How does~\tool~compare to the state-of-the-art,
        namely~\heph?
        (Section~\ref{sec:eval-thalia-hephaestus})
        \point{Short answer}
        \tool~and~\heph~complement each other effectively.
        \tool~has detected~at least~\empirical{42} bugs missed by prior work.
        \heph~and~\tool~achieve similar code coverage.
        However, 
        combining the results of both tools 
        significantly increases code coverage
        (up to~\empirical{4.4}\%),
        demonstrating \tool's ability to test
        previously unexplored components.
\end{enumerate}

\subsection{Experimental Setup}
\label{sec:eval-setup}

\point{Hardware and compiler versions}
We performed all experiments on commodity servers (32
cores and 64 GB of RAM per machine)
running Ubuntu 22.04 (x86\_64).
Our testing efforts focused on
the compilers of Groovy, Kotlin, and Scala.
Although~\tool~can produce
Java programs,
we did not consider Java in our evaluation.
The reason is that
the issue tracker of OpenJDK
is not open to the public,
and therefore we were unable to directly interact
with its compiler's developers.
For each compiler,
we tested (1) its latest development version
by regularly building the compiler on
its latest commit,
or (2) its most recent stable version.

\point{API collection}
We collected APIs from two sources.
Initially,
we examined the APIs
found in the standard library
of the corresponding language,
such as collections API,
I/O API.
Beyond standard libraries,
we also considered APIs from third-party libraries
found in the Maven central repository.
Specifically,
we examined the Maven repository
and Scala index~\cite{scaladex}
to gather the group ID,
artifact ID,
and the latest versions
of the most popular Kotlin, Java,
and Scala libraries.
Rather than selecting libraries solely based on their popularity,
we could establish a feature coverage criterion.
This criterion would (1) exclude libraries that exhibit
the same features with previously explored libraries,
or (2) prioritize libraries with new and complex typing features.
This is not a straightforward task,
so we leave it as future work.

For each library,
we searched in the Maven central repository
to fetch:
(1) the API documentation of the library,
(2) the JAR files of the library and
its dependencies.
The documentation was converted into
JSON files given as input to~\tool.
Then,
we built a classpath that pointed to the retrieved JAR files
so that the compiler under test could
locate the external symbols
defined in external libraries.
In total,
we selected~\empirical{95} libraries
whose characteristics are shown in 
Table~\ref{tab:lib-chars}.
The entry ``other'' shows the average metrics
of the libraries not shown in the table.

We ran~\tool~on each selected API
under four different synthesis modes.
In particular,
on each API,
we used~\tool~to synthesize
(1) well-typed programs
with and without type erasure,
and (2) wrongly-typed programs
with and without type erasure.

\begin{table}[t]
\caption{Characteristics of ten selected libraries.
Each table entry indicates the number of nodes and edges
of the~\graph,
the number of methods (including the polymorphic ones),
the number of fields,
the number of constructors,
the number of types (including type constructors),
the average size of the inheritance chain,
and the average size of API signatures.}
\label{tab:lib-chars}
\centering
\resizebox{\linewidth}{!}{\begin{tabular}{lrrrrrrrr}
\hline
{\bf Library} & {\bf Nodes} & {\bf Edges}  & {\bf Poly. M/Methods}    & {\bf Fields} & {\bf Constr} & {\bf Type Con/Types} & {\bf Inhr size} & {\bf Sig size} \bigstrut\\
\hline
\libaname & \libanodes & \libaedges & \libapolym /\libamethods & \libafields & \libacons & \libatypecon /\libatypes & \libainh & \libasig \\
\libbname & \libbnodes & \libbedges & \libbpolym /\libbmethods & \libbfields & \libbcons & \libbtypecon /\libbtypes & \libbinh & \libbsig \\
\libcname & \libcnodes & \libcedges & \libcpolym /\libcmethods & \libcfields & \libccons & \libctypecon /\libctypes & \libcinh & \libcsig \\
\libdname & \libdnodes & \libdedges & \libdpolym /\libdmethods & \libdfields & \libdcons & \libdtypecon /\libdtypes & \libdinh & \libdsig \\
\libename & \libenodes & \libeedges & \libepolym /\libemethods & \libefields & \libecons & \libetypecon /\libetypes & \libeinh & \libesig \\
\libfname & \libfnodes & \libfedges & \libfpolym /\libfmethods & \libffields & \libfcons & \libftypecon /\libftypes & \libfinh & \libfsig \\
\libgname & \libgnodes & \libgedges & \libgpolym /\libgmethods & \libgfields & \libgcons & \libgtypecon /\libgtypes & \libginh & \libgsig \\
\libhname & \libhnodes & \libhedges & \libhpolym /\libhmethods & \libhfields & \libhcons & \libhtypecon /\libhtypes & \libhinh & \libhsig \\
\libiname & \libinodes & \libiedges & \libipolym /\libimethods & \libifields & \libicons & \libitypecon /\libitypes & \libiinh & \libisig \\
\libjname & \libjnodes & \libjedges & \libjpolym /\libjmethods & \libjfields & \libjcons & \libjtypecon /\libjtypes & \libjinh & \libjsig \\
\libkname & \libknodes & \libkedges & \libkpolym /\libkmethods & \libkfields & \libkcons & \libktypecon /\libktypes & \libkinh & \libksig \\
\hline
{\bf \avgname} & {\bf \avgnodes} & {\bf \avgedges} & {\bf \avgpolym /\avgmethods} & {\bf \avgfields} & {\bf \avgcons} & {\bf \avgtypecon /\avgtypes} & {\bf \avginh} & {\bf \avgsig} \bigstrut \\
\hline
\end{tabular}}
\end{table}

\point{Configuration}
\tool~comes with a set of constants
that affect the number of the synthesized programs
(Section~\ref{sec:implementation}).
For enumerating ill-typed typing sequences,
we configured~\tool~so that
every type set contains at most five
incompatible types selected randomly.
Regarding type inhabitant selection,
we configured Yen's algorithm
to return the shortest path
between two nodes.
This is the equivalent to calling
Dijkstra's algorithm for the shortest path calculation.
Finally,
the maximum depth of the generated programs is two.
Based on our exploratory experiments,
these configurations
provide a good balance between bug-finding capability
and performance.

\subsection{RQ1: Bug-Finding Results}
\label{sec:eval-bug-finding}

\begin{table*}[t]
\caption{
(a) Status of the reported bugs in {\tt groovyc}, {\tt kotlinc}, and Dotty,
(b) number of bugs with
unexpected compile-time error (UCTE),
unexpected runtime behavior (URB),
crash,
or compilation performance (CPI) symptom,
(c) bugs revealed by
well-typed programs,
well-typed programs with erased types
(TE),
or ill-typed ones.
}
\label{tab:bugs}
\begin{subtable}[T]{.32\linewidth}
\centering
\resizebox{\linewidth}{!}{\begin{tabular}{lrrr|r}
\hline
{\bf Status} & {\bf groovyc} & {\bf kotlinc}  & {\bf Dotty}    & {\bf Total} \bigstrut\\
\hline
Confirmed    & \gconfirmed     & \kconfirmed   & \sconfirmed   & \tconfirmed   \\
Fixed        & \gfix           & \kfix         & \sfix         & \tfix         \\
Duplicate    & \gdupl          & \kdupl        & \sdupl        & \tdupl        \\
Won't fix    & \gwont          & \kwont        & \swont        & \twont        \\
\hline
Total        & {\bf \gtotal}   & {\bf \ktotal} & {\bf \stotal} & {\bf \ttotal} \bigstrut\\
\hline
\end{tabular}}
\subcaption{\label{tab:reported}}
\end{subtable}  
\hfill
\begin{subtable}[T]{.32\linewidth}
\centering
\resizebox{\linewidth}{!}{\begin{tabular}{lrrr|r}
\hline
{\bf Symptom} & {\bf groovyc} & {\bf kotlinc}  & {\bf Dotty}    & {\bf Total} \bigstrut\\
\hline
UCTE          & \gucte       & \kucte        & \sucte        & \tucte        \\
URB           & \gurb        & \kurb         & \surb         & \turb         \\
Crash         & \gcrash      & \kcrash       & \scrash       & \tcrash       \\
CPI           & \gcpi        & \kcpi         & \scpi         & \tcpi         \\
\hline
\end{tabular}}
\subcaption{\label{tab:symtpoms}}
\end{subtable}  
\hfill
\begin{subtable}[T]{.32\linewidth}
\centering
\resizebox{\linewidth}{!}{\begin{tabular}{lrrr|r}
\hline
{\bf Program type} & {\bf groovyc} & {\bf kotlinc}  & {\bf Dotty}    & {\bf Total} \bigstrut \\
\hline
Well-typed    & \ggenerator  & \kgenerator   & \sgenerator   & \tgenerator   \\
Well-typed (TE)  & \ginference  & \kinference   & \sinference   & \tinference   \\
Ill-typed     & \gsoundness  & \ksoundness   & \ssoundness   & \tsoundness   \\
\hline
\end{tabular}}
\subcaption{\label{tab:components}}
\end{subtable}  
\vspace{-2.5mm}
\end{table*}

Table~\ref{tab:reported} summarizes
the bug-finding results of~\tool.
Overall,
within five months of concurrent
development and testing,
we reported~\ttotal bugs to developers,
\treal~of which are either
confirmed or fixed.
To prevent duplicate bug reports, 
we conducted a comprehensive search 
in the issue tracker 
prior to submission, 
to identify potential related bugs.
After reporting a small number of bugs
to each development team,
we waited a couple of weeks
for the developers to fix or triage them.
This helped us avoid
overwhelming developers
with an extreme number of unfixed/untriaged bugs.
In the end,
\empirical{four} of the reported bugs
were marked as ``won't fix''.
One ``won't fix'' issue
is a compiler crash in Dotty.
Scala developers think that it is not worth
fixing it,
although the issue is a regression
introduced in Scala 3.
Another one is a known limitation issue
of {\tt groovyc}.

\point{Symptoms of compiler typing bugs}
The symptoms of the discovered bugs
are shown in Table~\ref{tab:symtpoms}.
The majority
(\tucte/\ttotal)
lies in unexpected compile-time errors (UTCE):
a compiler~\emph{unintentionally} rejects
a well-typed program.
UTCE is followed by compiler crash
(\tcrash)
and unexpected runtime behavior
(\turb).
An unexpected runtime behavior (URB)\footnote{
We follow the same terminology
used in the study of~\citet{typing-study}.
\tool\ detects URB bugs at compile-time,
and not at runtime,
based on our test oracle for soundness bugs,
that is,
erroneous acceptance of ill-typed programs.}
is the symptom of a compiler bug
that becomes evident at runtime.
For example,
soundness bugs typically lead to URB symptoms,
e.g., runtime errors that should have been caught at compile-time.
Two discovered bugs
trigger severe compilation performance issues (CPI)
in the type checkers of {\tt groovyc} and {\tt kotlinc}.

\point{Bug-finding under different synthesis modes}
Table~\ref{tab:components} shows
the bug-finding results of
each synthesis mode.
Most of the bugs
(\tgenerator/\ttotal)
are triggered in the base mode,
where~\tool~produces
well-typed programs
that contain full type information
(i.e., type erasure is disabled).
This suggests that
\tool~is effective in detecting
compiler typing bugs,
even when generating simple client code 
(at least at first blush).
Using the type erasure process,
\tool~has identified~\tinference additional bugs.
All of these bugs are related to issues
found in the type inference procedures
of the compilers.
Finally,
\tool~has discovered~\tsoundness bugs that
originate from ill-typed code.
The low number of bugs triggered by ill-typed code
in relation to the number of bugs that arise from well-typed programs
is consistent with the findings
of~\citet{typing-study,hephaestus}.

\point{Intricacy of typing bugs}
As highlighted by the study of~\citet{typing-study},
fixing compiler typing bugs
often becomes particularly challenging.
This insight is indeed
confirmed by most of the compiler development teams
we have interacted with:
many reported bugs require much expertise
in the areas of overload resolution and type inference.

Our interaction with developers has revealed
another noteworthy observation:
typing bugs take a fair amount of effort 
to understand their root causes,
particularly
identifying which specific components
in type checking procedure break down.
Therefore,
additional tooling for root cause analysis would be
beneficial for compiler developers.
Furthermore,
\tool~has helped us reveal interesting bugs 
that stem from issues
in the language design.
These bugs are exceptionally challenging
to fix,
as their fix requires pervasive changes
in the compiler code base.
These changes are susceptible to
breaking other language features.

\subsection{RQ2: Test Case Characteristics}
\label{sec:eval-test-case-chars}

\begin{table*}[t]
\caption{
The language features that appear in
the minimized test cases of our reported bugs.
A check mark indicates whether
the corresponding feature is supported by
the state-of-the-art tool~\heph.
}
\label{tab:features}
\begin{subtable}[T]{.56\linewidth}
\centering
\resizebox{\linewidth}{!}{\begin{tabular}{llrr}
\hline
& & & {\bf Supported by} \\
{\bf Feature type} & {\bf Feature} & {\bf Frequency}  & \heph \bigstrut\\
\hline
\multirow{12}{*}{Declaration}          & Polymorphic class                                          & \parameterizedclass /\ttotal            & \cmark     \\
                     & Polymorphic function                                       & \parameterizedfunction /\ttotal         & \cmark     \\
                     & Single Abstract Method (SAM)                                 & \sam /\ttotal                           & \cmark     \\
                     & Overloading                                                  & \overloading /\ttotal                   & \xmark     \\
                     & Inheritance/Implementation of multiple interfaces          & \inheritance /\ttotal                   & \xmark     \\
                     & Variable argument                                            & \variableargument /\ttotal              & \cmark     \\
                     & Access modifier                                              & \accessmodifier /\ttotal                & \xmark     \\
                     & Inner class                                                  & \innerclass /\ttotal                    & \xmark     \\
                     & Bridge method                                                & \bridgemethod /\ttotal                  & \xmark     \\
                     & Default method                                               & \defaultmethod /\ttotal                 & \xmark     \\
                     & Static method                                                & \staticmethod /\ttotal                  & \xmark     \\
                     & Operator                                                     & \operator /\ttotal                      & \xmark     \\
\hline
\multirow{2}{*}{Type inference} & Type argument inference                                      & \typeargumentinference /\ttotal         & \cmark     \\
                     & Variable type inference                                      & \variabletypeinference /\ttotal         & \cmark     \\
\hline
\end{tabular}}
\end{subtable}  
\hfill
\begin{subtable}[T]{.4\linewidth}
\centering
\resizebox{\linewidth}{!}{\begin{tabular}{llrr}
\hline
& & & {\bf Supported by} \\
{\bf Feature type} & {\bf Feature} & {\bf Frequency}  & \heph \bigstrut\\
\hline
\multirow{8}{*}{Type} & Polymorphic type                                           & \parameterizedtype /\ttotal             & \cmark     \\
                     & Wildcard type                                                & \wildcardtype /\ttotal                  & \cmark     \\
                      & Bounded type parameter                                       & \boundedtypeparameter /\ttotal          & \cmark     \\
                      & Array type                                                   & \arraytype /\ttotal                     & \cmark     \\
                      & Subtyping                                                    & \subtyping /\ttotal                     & \cmark     \\
                      & Recursive upper bound                                        & \recursiveupperbound /\ttotal           & \xmark     \\
                      & Primitive type                                               & \primitivetype /\ttotal                 & \cmark     \\
                      & Nullable type                                                & \nullabletype /\ttotal                  & \xmark     \\
\hline
\multirow{3}{*}{Expression} & Function reference                                           & \functionreference /\ttotal             & \cmark     \\
                      & Lambda                                                       & \llambda /\ttotal                       & \cmark     \\
                      & Conditionals                                                 & \conditionals /\ttotal                  & \cmark     \\
\hline
\end{tabular}}
\end{subtable}  
\vspace{2.5mm}
\end{table*}

We manually identified what language features
appear in every minimized test case
that accompanies our bug reports.
Table~\ref{tab:features} shows the
frequency of the identified features.
Features related to parametric polymorphism
(e.g., polymorphic function, wildcard type, etc.)
are the top features in terms of
bug-finding capability.
Roughly~\empirical{85\%} (\empirical{70}/\ttotal)
of the bug-triggering test cases
contain at least one feature related to parametric polymorphism.
This finding is in line with the work
of~\citet{typing-study,hephaestus}.
Our finding is further supported by compiler developers
who have acknowledged that working
with generics is a highly demanding task
that requires a high level of expertise.
Parametric polymorphism is followed by
functional programming features
(e.g., lambdas, function references),
type inference features,
and overloading.
Figure~\ref{fig:venn} shows the overlap
of parametric polymorphism,
functional programming,
and overloading.
Overall,
these features are found in all but~\empirical{eight} test cases.
Finally,
Table~\ref{tab:features} confirms
that~\tool~increases feature coverage,
especially for definition-related features.
This means that~\tool~can exercise
the implementation of these features in an agnostic way,
as long as they they are part of the input APIs.
\begin{wrapfigure}{r}{0.37\textwidth}
\centering
\includegraphics[scale=0.23]{./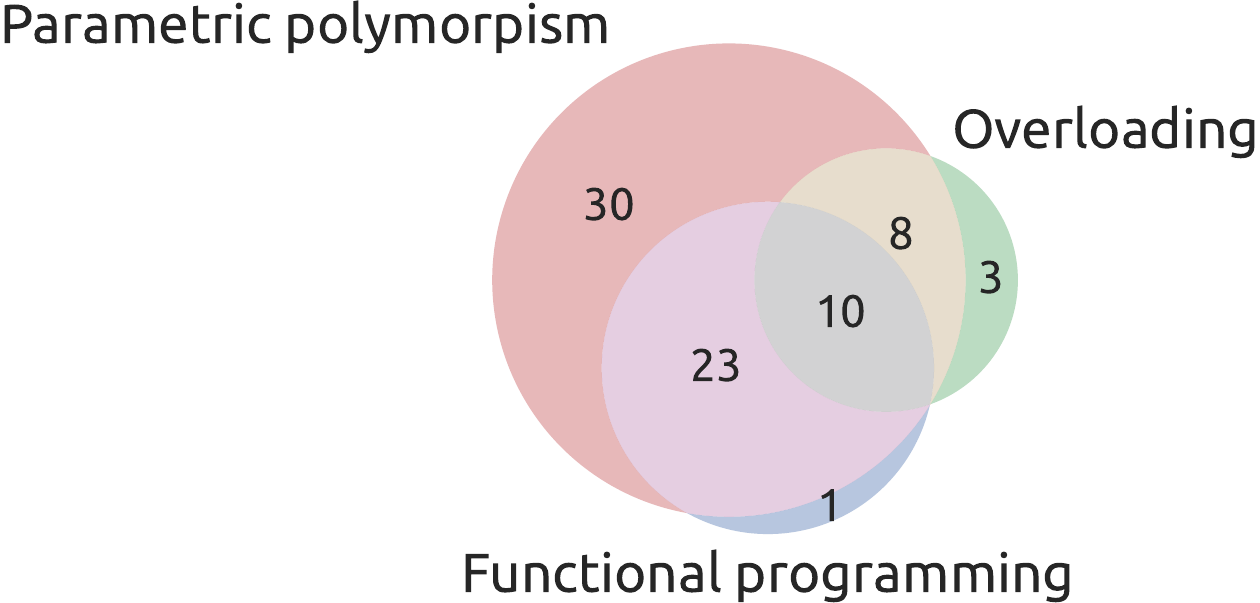}
\vspace{-2mm}
\caption{The Venn diagram of features related to
parametric polymorphism, functional programming,
and overloading.}
\label{fig:venn}
\end{wrapfigure}

\point{Size of test cases}:
The test cases given by~\tool~are
consistently small across the examined languages.
On average,
a Groovy test case measures~$\groovysizeavg$kB
and contains~$\groovylocavg$~lines of code (LoC),
while a Scala test case measures~$\scalasizeavg$kB
and consists of~$\scalalocavg$~LoC.
Similarly,
Kotlin test cases have an average size of~$\kotlinsizeavg$kB
and an average of~$\kotlinlocavg$~LoC. 
The compact size of these test cases
facilitates efficient testing with higher throughput
and simplifies the generation of minimal test-cases.

Overall,
the above results indicate
that although the test cases produced by~\tool~are small in size,
they are able to discover bugs
triggered by a combination of complicated languages features.

\subsection{RQ3: Impact of Library Selection and Synthesis Modes}
\label{sec:eval-rq3}

\begin{figure*}[t]
    \centering
    \begin{subfigure}{0.33\linewidth}
        \centering
        \includegraphics[width=\linewidth]{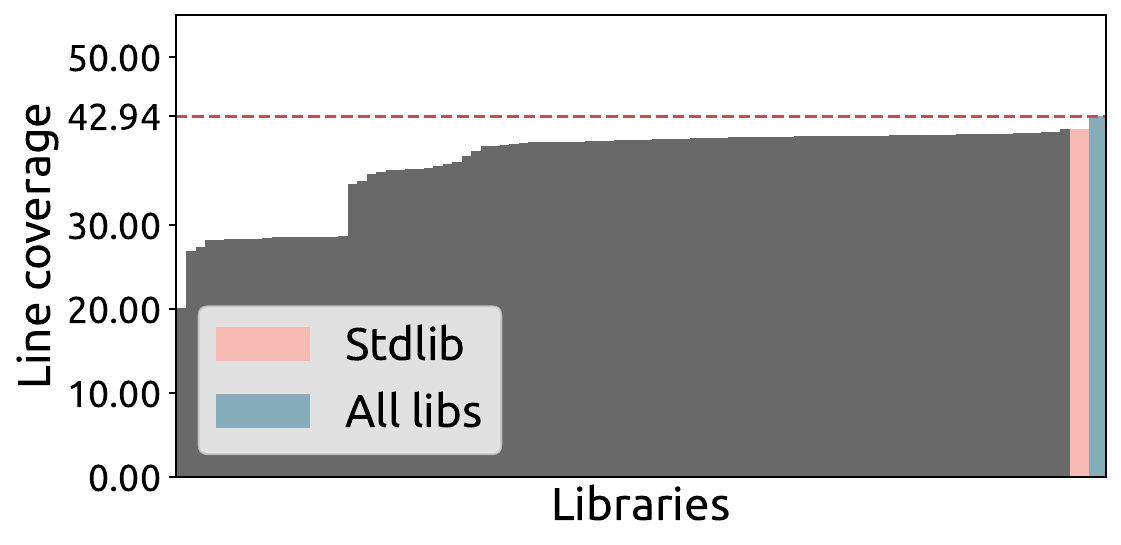}
        \caption{groovyc}
        \label{fig:lib-coverage-groovyc}
    \end{subfigure}%
    \hfill
    \begin{subfigure}{0.33\linewidth}
        \centering
        \includegraphics[width=\linewidth]{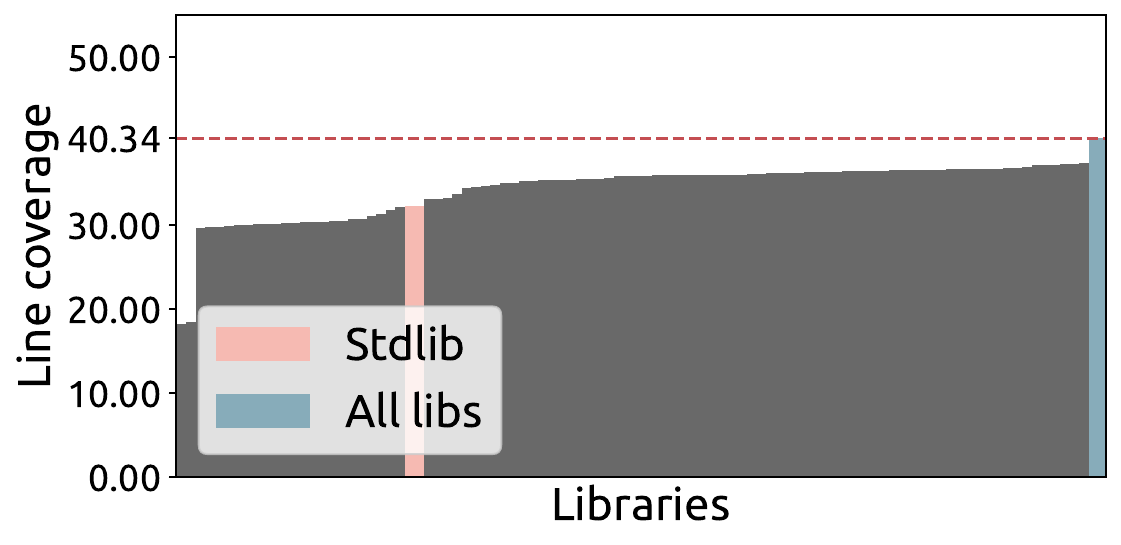}
        \caption{Dotty}
        \label{fig:lib-coverage-dotty}
    \end{subfigure}%
    \hfill
    \begin{subfigure}{0.33\linewidth}
        \centering
        \includegraphics[width=\linewidth]{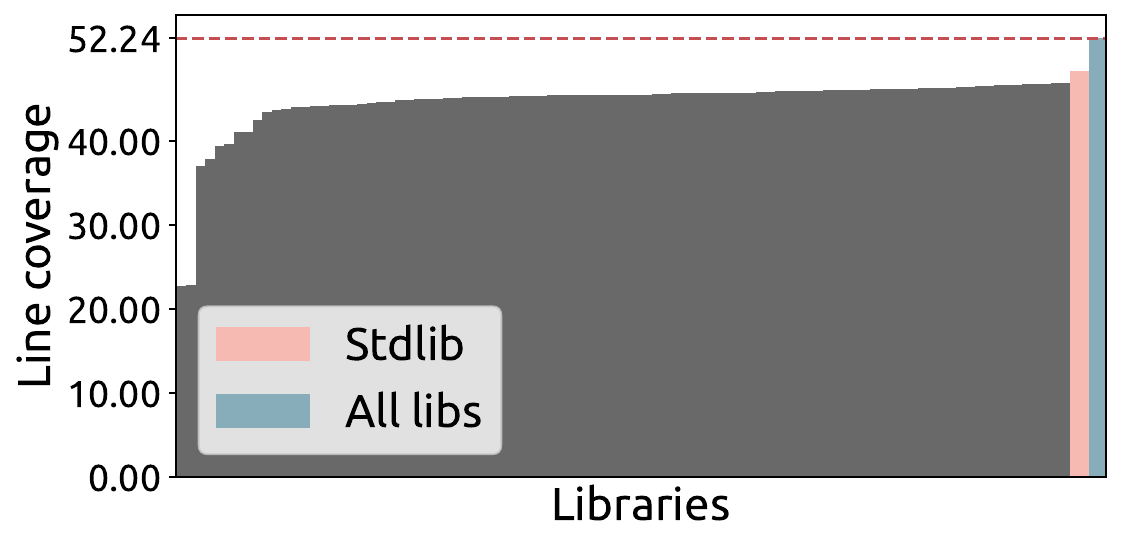}
        \caption{kotlinc}
        \label{fig:lib-coverage-kotlinc}
    \end{subfigure}
    \vspace{-2mm}
    \caption{
        Compiler line coverage when using top 95 Maven libraries.
        Notably,
        when combining the results of all libraries (highlighted in blue),
        there is an increase in line coverage, ranging between
        $\empirical{1.6}$\%--$\empirical{4}$\% when
        compared to the best individual library of each language.
    }
    \label{fig:lib-coverage}
    \vspace{3mm}
\end{figure*}

To evaluate the impact of library selection
and investigate the effectiveness of different synthesis modes, 
we conducted a comprehensive evaluation 
using \empirical{95} top libraries found in the Maven repository, 
including the standard library of each language.

Figure~\ref{fig:test-cases-libs} shows
how many well-typed and ill-typed test cases (typing sequences)
have been produced by our API enumeration
technique.
In total,
the examined~\empirical{95} libraries
yielded roughly~\empirical{1.6} million programs
for each language.
Based on these programs,
we performed an analysis on
the time spent synthesizing them,
as well as the code coverage achieved by
each library and mode.

\point{Code coverage analysis}
We used the JaCoCo library~\cite{jacoco}
to measure the code coverage in each compiler.
Figure~\ref{fig:lib-coverage} shows
the line coverage achieved by
programs derived from each library
in all synthesis modes.
We observe that
the input library and its corresponding API
significantly affects the resulting line coverage.
Libraries with limited methods
and polymorphic components tend
to have lower line coverage,
indicating that the extent of code coverage
depends on the richness of the input API.
When combining the programs that arise
from all the input libraries (see blue bars),
there is a noteworthy code coverage increase
in all the examined compilers.
For example, 
in {\tt groovyc}, 
the combination of libraries
surpasses the line coverage achieved by the best individual library 
by~\empirical{$1.6$\%}. 
This percentage translates to hundreds of additional covered lines.
Similarly, 
in Dotty and {\tt kotlinc}, 
the line coverage improvement ranges between 
$\empirical{3\%}$ and $\empirical{4\%}$. 
Similar patterns have been observed
when considering branch and function coverage. 
Based on the above,
using APIs with distinct characteristics avoids saturation
and exercises unexplored code.

Table~\ref{tab:coverage-base-statistics}
illustrates the impact of each mode 
on line coverage
in comparison to the base mode,
where~\tool~produces well-typed programs
with type erasure disabled.
\tool's different modes
contribute to a line coverage increase up to
$\empirical{5}\%$--$\empirical{11.9}\%$
across all compilers.
While the goal is not to explore the entire code base,
but rather to exercise specific compiler parts,
such as type inference,
employing multiple modes
is crucial in maximizing the effectiveness of~\tool.
This is further supported by
the number of bugs (\empirical{36}/\ttotal)
that were discovered when using a mode
other than the base mode (see Table~\ref{tab:components}).
Many of them concern
important type inference and soundness issues.
\begin{figure}[t]
  \begin{minipage}{0.42\linewidth}
    \centering
    \includegraphics[width=0.8\linewidth]{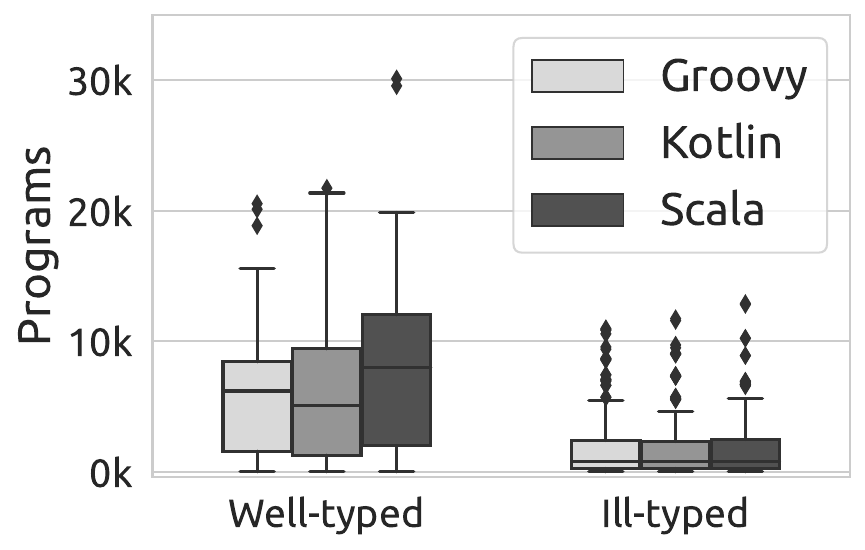}
    \captionof{figure}{Number of test cases produced by \tool~categorized by their type.}
    \label{fig:test-cases-libs}
  \end{minipage} \hfill
  \begin{minipage}{0.57\linewidth}
    \centering
    \captionof{table}{Statistics regarding the increase in line coverage
        achieved by each synthesis mode. The baseline is when \tool\ synthesizes
        well-typed programs with complete type information.
        Numbers outside parentheses indicate absolute change,
        while percentages in parentheses show relative change.
        The type erasure mode (TE) exhibits the highest line coverage increase,
followed by producing ill-typed programs with missing type information (Both).}
    \vspace{-2mm}
    \label{tab:coverage-base-statistics}
    \resizebox{\linewidth}{!}{%
      \begin{tabular}{lclcccccc}
        \hline
        \multirow{2}{*}{Compiler} & \multirow{2}{*}{Mode} & \multicolumn{4}{c}{Statistics} \\ \cline{3-7}
        & & Min & Max & Median & Mean & Std \\ \hline
        \multirow{3}{*}{{\tt groovyc}}
        & TE         & +0 (0\%) & +\nnum{4306} (11.9\%) & +166 (0.5\%) & +265 (0.7\%) & 1.6\% \\
        & Ill-typed  & +0 (0\%) & +\nnum{2952} (8.1\%)  & +45 (0.1\%)  & +118 (0.3\%) & 1.1\% \\
        & Both       & +0 (0\%) & +\nnum{3087} (8.5\%)  & +163 (0.5\%) & +195 (0.5\%) & 0.9\% \\ \cdashline{0-7}
        \multirow{3}{*}{Dotty}
        & TE        & +0 (0\%)       & +\nnum{5548} (7.1\%) & +934 (1.2\%) & +1087 (1.4\%) & 1.1\% \\
        & Ill-typed & +7 (0.01\%)    & +\nnum{3595} (4.6\%) & +475 (0.6\%) & +615 (0.8\%) & 0.7\% \\
        & Both      & +117 (0.2\%)   & +\nnum{4192} (5.3\%) & +852 (1.1\%) & +975 (1.24\%) & 0.8\% \\ \cdashline{0-7}
        \multirow{3}{*}{{\tt kotlinc}}
        & TE        & +3 (0.01\%) & +\nnum{2144} (3.7\%)  & +523 (0.9\%) & +566 (1\%) & 0.5\% \\
        & Ill-typed & +1 (0\%)    & +\nnum{3030} (5.2\%)  & +175 (0.3\%) & +291 (0.5\%) & 0.7\% \\
        & Both      & +45 (0.1\%) & +\nnum{2175} (3.7\%)  & +347 (0.6\%) & +420 (0.7\%) & 0.6\% \\ \hline
      \end{tabular}%
    }
  \end{minipage}
\end{figure}

\point{Synthesis time}
Figure~\ref{fig:time} shows
how the size of the underlying~\graph~influences
the time spent on the synthesis of
a~\emph{single} test case.
The correlation between the size of the~\graph~and
the synthesis time is almost linear
for all the compilers.
\tool~takes less than~\empirical{400} milliseconds
to synthesize one program
coming from large libraries
with more than~\empirical{25k} edges
in the corresponding~\graph.
The average synthesis time
is $\empirical{256}$,
$\empirical{161}$,
and~$\empirical{178}$
milliseconds for Groovy, Scala,
and Kotlin programs respectively.
This demonstrates the practicality of~\tool~
in synthesizing client programs
in milliseconds,
even if they come from large libraries
and complex APIs.

\begin{figure*}[t!]
    \centering
    \begin{subfigure}{0.33\linewidth}
        \centering
        \includegraphics[width=\linewidth]{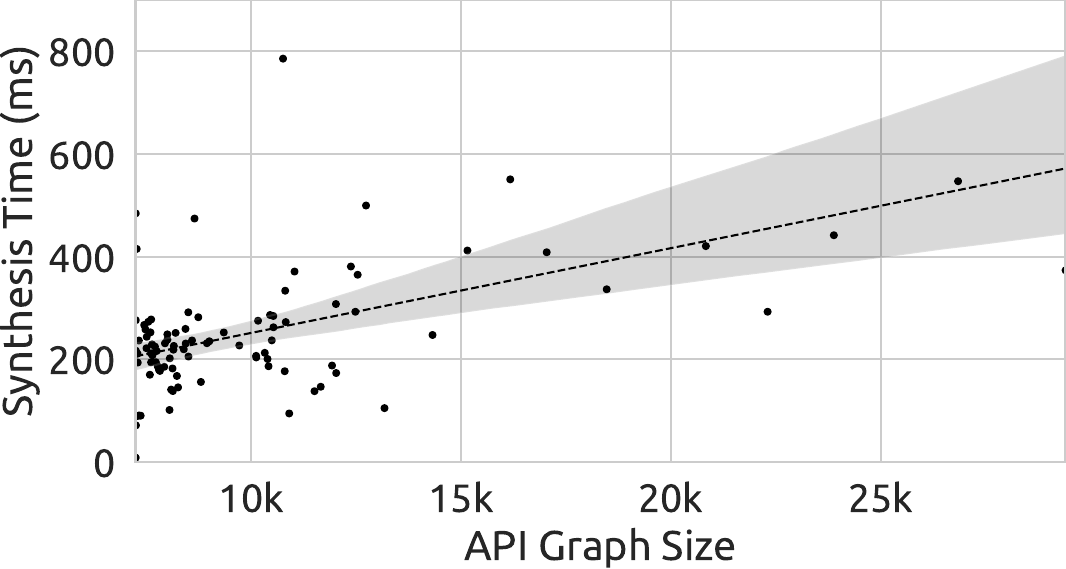}
        \caption{groovyc}
    \end{subfigure}%
    \hfill
    \begin{subfigure}{0.33\linewidth}
        \centering
        \includegraphics[width=\linewidth]{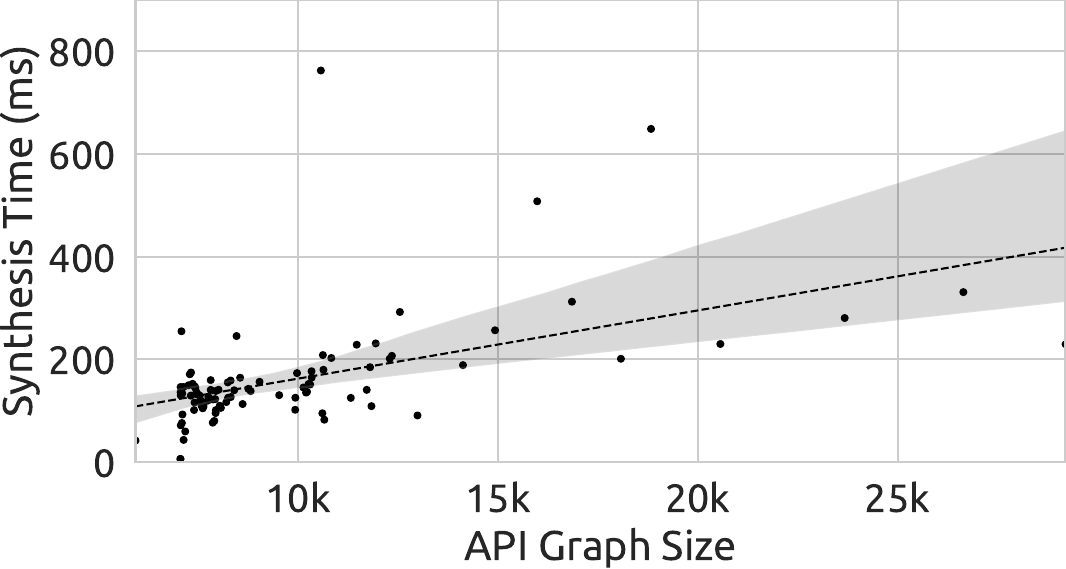}
        \caption{Dotty}
    \end{subfigure}%
    \hfill
    \begin{subfigure}{0.33\linewidth}
        \centering
        \includegraphics[width=\linewidth]{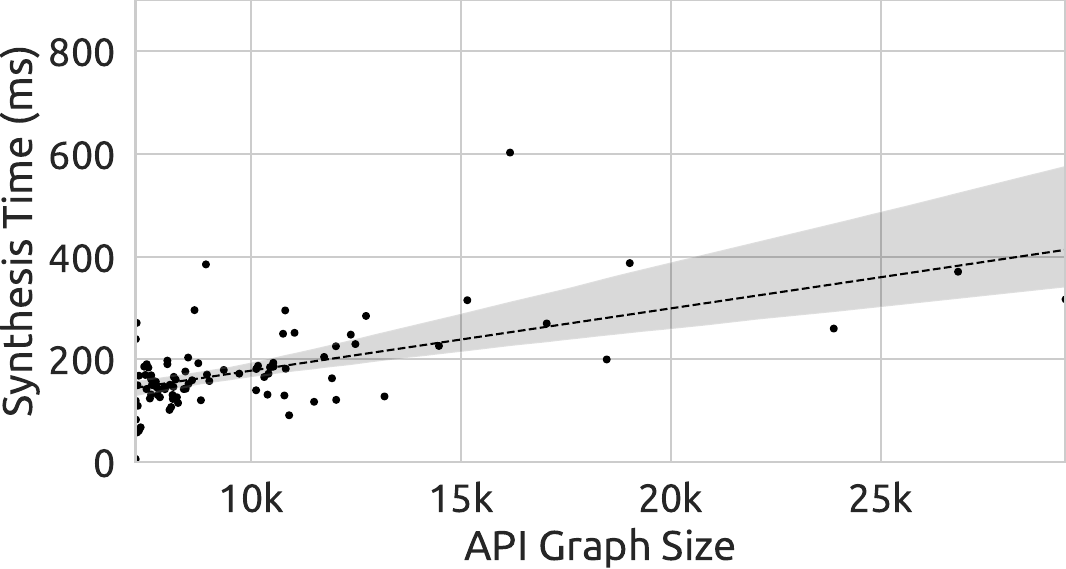}
        \caption{kotlinc}
    \end{subfigure}
    \vspace{-2mm}
    \caption{
        The synthesis time per program (in milliseconds)
        as a function of the size of the~\graph.
    }
    \label{fig:time}
    \vspace{2mm}
\end{figure*}
There are some exceptional
cases where the synthesis time
is relatively high
(see data points above the regression lines).
This is explained by
the high number of polymorphic definitions
with bounded type parameters
included in the corresponding APIs.
For example,
the~\href{https://github.com/assertj/assertj}{org.assertj:assertj-core}
library has a larger number of type constructors
and polymorphic methods that
define type parameters with recursive upper bounds
(e.g., {\tt T extends A<T>})
In such cases,
\tool~spends significant time
seeking appropriate type arguments 
to instantiate these polymorphic definitions.

\subsection{RQ4: Comparison of Thalia vs. Hephaestus}
\label{sec:eval-thalia-hephaestus}

\begin{figure*}[t!]
    \centering
    \begin{subfigure}{0.33\linewidth}
        \centering
        \includegraphics[width=\linewidth]{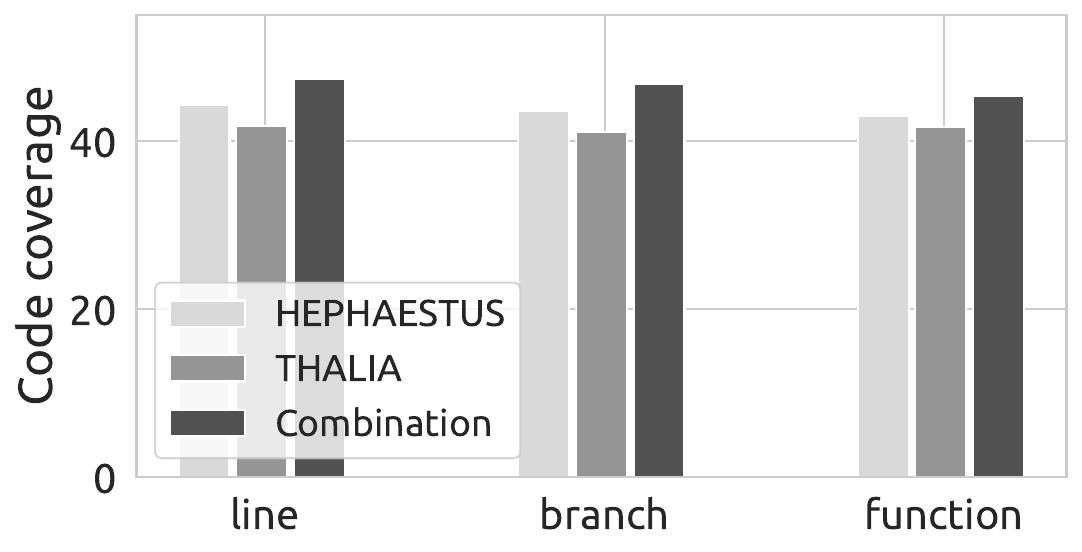}
        \caption{groovyc}
    \end{subfigure}%
    \hfill
    \begin{subfigure}{0.33\linewidth}
        \centering
        \includegraphics[width=\linewidth]{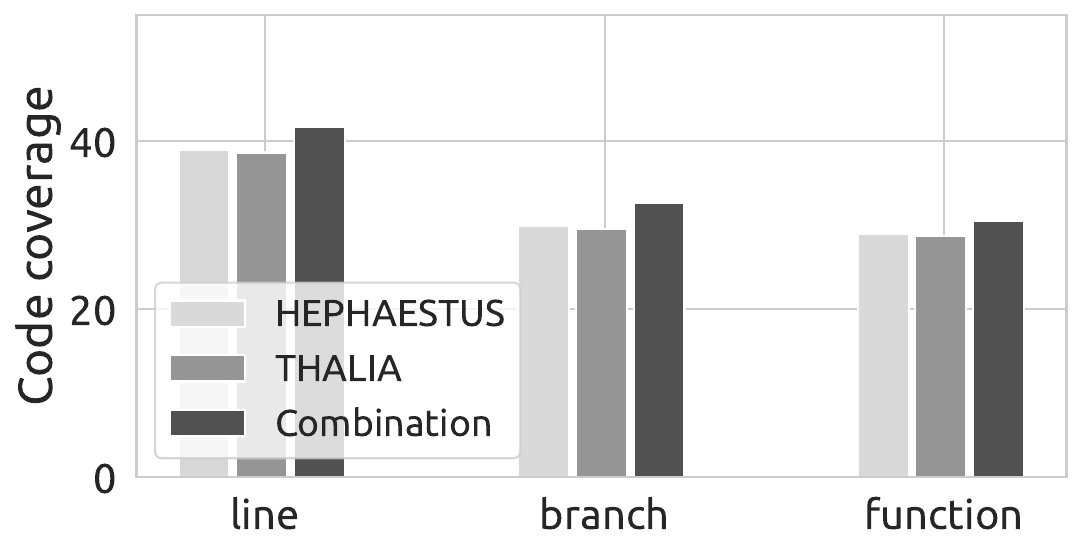}
        \caption{Dotty}
    \end{subfigure}%
    \hfill
    \begin{subfigure}{0.33\linewidth}
        \centering
        \includegraphics[width=\linewidth]{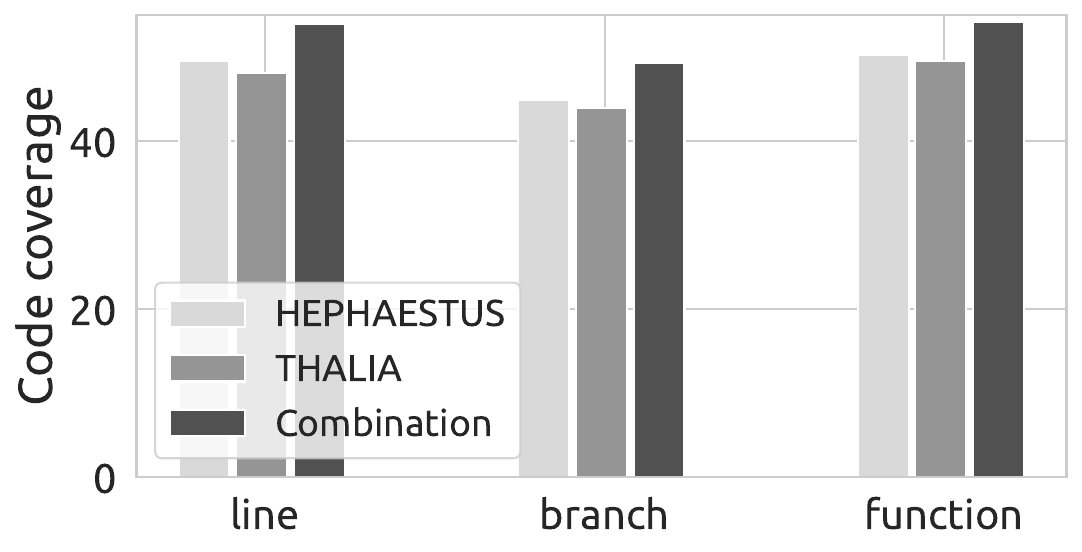}
        \caption{kotlinc}
    \end{subfigure}
    \vspace{-3mm}
    \caption{
        Comparison between code coverage of~\heph~and~\tool.
        }
    \label{fig:coverage-comparison}
    \vspace{1mm}
\end{figure*}

We compared our work with~\heph~\cite{hephaestus},
which is the state-of-the-art framework
for validating static typing implementations.
Originally,
\heph~had support for three languages,
two of which were also included in our evaluation
(Kotlin, Groovy).
To make~\heph~also generate Scala programs,
we followed the guidelines in the work~\citet{hephaestus}
and implemented a translator
that allows the conversion of programs
written in the~\heph 's IR into concrete Scala programs.

\point{Bug-finding capability}
We manually analyzed the bug-revealing test cases of~\tool~to
identify how many bugs are missed by~\heph.
Looking at Table~\ref{tab:features},
it is clear that the bug-revealing test cases exercise
a plethora of characteristics,
such as overloading and the use of inner classes,
which fall outside the capabilities of~\heph.
Specifically,
out of the~\ttotal bugs identified by~\tool,
\heph~fails to detect~\empirical{42} of them,
accounting for a~\empirical{51\%} miss rate.
Interestingly,
these~\empirical{42} bugs were found~\emph{effortlessly} by~\tool,
as it leverages existing compiled code
instead of explicitly supporting the generation
of intricate definitions.

For the remaining detected bugs, 
we also examined 
whether these bugs existed 
in the compiler versions 
used in the testing campaign by~\citet{hephaestus}. 
Remarkably, 
out of these bugs, 
\empirical{$9$} were found 
to exist in those versions, 
and~\heph~failed to detect them despite the thorough efforts.
This is explained by the expressiveness of
real-world APIs
that made~\tool~produce code
featuring complex combinations of typing features.
For example,
\href{https://issues.apache.org/jira/browse/GROOVY-11020}{GROOVY-11020} is triggered
by passing a function reference to a polymorphic method
that defines a type parameter bounded with a SAM type.
Although
\heph~supports polymorphic methods and SAM types,
there is an extremely low probability of generating
code with such a programming idiom.

Nevertheless,
there are certain bugs triggered by~\heph~that
cannot be detected by~\tool.
Most of these bugs have to do with
implementation flaws that are related to
the semantic validation of definitions.
Since \tool~employs pre-compiled API definitions,
it fails to detect these types of compiler errors.
For example,
\href{https://youtrack.jetbrains.com/issue/KT-49583}{KT-49583}
is a bug,
previously-reported by the team of~\heph.
The accompanying test case
contains an anonymous function
that encompasses another nested function declaration.
This issue is out of~\tool's reach,
because rather than producing new declarations
(e.g., a function in a lambda),
\tool~re-uses existing code.

\point{Code coverage analysis}
We conducted a $12$-hour run for both tools in every language, 
using all available modes 
and generating as many test cases as possible 
within that time frame. 
We ran~\heph~using the same settings
as described in the work of~\citet{hephaestus}.
We assessed the code coverage in each compiler 
using the programs generated by each tool. 

Figure~\ref{fig:coverage-comparison} illustrates the code coverage 
attained by~\heph~and~\tool.
\heph~slightly outperforms~\tool~on all the examined compilers. 
This outcome aligns with our expectations, 
as~\heph~produces larger programs 
and heavily emphasizes the utilization of declarations. 
However,
~\tool~offers the advantage of easily testing arbitrary features
(Section~\ref{sec:eval-test-case-chars}). 
Consequently, 
when combining the two approaches, 
we observe a percentage increase in \heph' line coverage of
$\empirical{7\%}$ for {\tt groovyc} and Dotty,
and~$\empirical{9\%}$ for {\tt kotlinc}.
This translates to $\nnum{\groovylinemissed}$--$\nnum{\kotlinlinemissed}$
additionally covered lines of code.
For branch coverage,
the combination of~\tool\ and~\heph\
leads to a~$\empirical{7\%}$ increase for {\tt groovyc},
$\empirical{9}\%$ for Dotty,
and~$\empirical{10\%}$ for {\tt kotlinc}.
This translates to $\nnum{\groovybranchmissed}$--$\nnum{\scalabranchmissed}$
additionally covered branches.
Finally,
when considering function coverage,
\tool\ contributes to
calling~$\empirical{5\%}$ more functions in {\tt groovyc} and Dotty,
and~$\empirical{8\%}$ more functions in {\tt kotlinc}.
This translates to $\nnum{\groovyfunctionmissed}$--$\nnum{\scalafunctionmissed}$
functions that were previously unexplored by~\heph.
We looked further into some packages and classes
\begin{wraptable}{r}{0.45\textwidth}
  \centering
  \vspace{-3mm}
  \caption{
      Average time of compiling and generating
      ~\heph~and~\tool~programs.
      Generation time is per program,
      whereas compilation time is per batch of 45 programs.
  }
  \vspace{-2mm}
  \label{tab:compilation-generation}
  \resizebox{1.0\linewidth}{!}{%
    \begin{tabular}{lrrrr}
    \toprule
    \multirow{2}{*}{Compiler} & \multicolumn{2}{c}{Compilation time (sec)} & \multicolumn{2}{c}{Generation time (sec)} \bigstrut \\
    & \tool & \heph & \tool & \heph \\
    \hline
    {\tt groovyc} & 3.1 & 6.6 & 0.3 & 0.7 \\
    Dotty         & 5.4 & 7.9 & 0.3 & 1 \\
    {\tt kotlinc} & 6.5 & 26.5 & 0.2 & 1.3 \\
    \bottomrule
    \end{tabular}
}
\end{wraptable}
to get an insight about that newly-explored code.
Some noteworthy examples include:
(1) the \texttt{org.jetbrains.kotlin.resolve.*} package,
which implements the name resolution procedure in {\tt kotlinc},
(2) a couple of classes  in Dotty
(e.g., {\tt dotty.tools.dotc.core.TypeComparer})
that handle polymorphic types and type comparisons.
As expected,
\tool~missed a lot of code associated with
declarations and types of expressions
(e.g., binary operators)
supported only by~\heph.
Overall,
our findings
indicate that~\tool~helps explore new code
in the compilers' code base,
addressing the current limitations of~\heph.


\point{Code size}
Within the 12-hour time period,
the average size of~\tool's programs
written in Groovy, Scala, and Kotlin 
is $\empirical{15}$, 
$\empirical{12}$, 
and $\empirical{11}$ LoC,
respectively.
\heph~generated programs
that are one order of magnitude larger,
with an average size of 
$\empirical{304}$, 
$\empirical{262}$,
and $\empirical{257}$ LoC for Groovy, Scala, and Kotlin respectively. 
The smaller size of~\tool's programs
simplifies the generation of minimal test-cases,
which are invaluable for compiler writers~\cite{c-reduce}.

\point{Generation and compilation time}
Table \ref{tab:compilation-generation} 
presents the average time spent by each tool
on generating and compiling programs. 
\heph~features higher times compared to~\tool, 
with varying performance across different languages. 
For example, 
{\tt groovyc} spends an average time of~$\empirical{6.6}$
seconds for compiling~45 \heph' programs,
while it needs only~$\empirical{3.1}$ seconds,
on average,
to compile 45 programs given by~\tool.
At the same time,
\tool~synthesizes a Groovy test case in 0.3 seconds,
while~\heph~requires~0.7 seconds to generate
a single Groovy program.
These findings highlight that~\tool~
can achieve greater throughput than the state-of-the-art work.

In short,
our comparative analysis
highlights four distinct
differences between
\tool\ and \heph.
(1) \tool's programs are considerably smaller
than those of~\heph\ (code size).
(2) \tool's programs involve
faster synthesis and compilation times
(generation and compilation time).
(3) Despite the reduced size of its test cases,
\tool\ still explores compiler regions
that~\heph\ does not reach
(code coverage analysis).
(4) Relying on existing compiled libraries makes
\tool\ effortlessly uncover~\empirical{51} bugs
missed by~\heph\ (bug-finding capability).

\subsection{Bug-Revealing Test Samples}
\label{eval:test-cases}

\begin{figure*}[t]
\centering 
\begin{subfigure}{0.48\textwidth}
\begin{lstlisting}[language=scala, basicstyle=\tiny\ttfamily]
def test() {
  val x: String = "strVal"
  Predef.identify[Function1[? >: Int, String]](x.substring)
}
// Definition of the API
class String {
  def substring(begin: Int): String
  def substring(begin: Int, end: Int): String
}
object Predef {
  def identity[A](x: A): A
}
\end{lstlisting}
\caption{\href{https://github.com/lampepfl/dotty/issues/17310}{DOTTY-17310}:
Fail to resolve the reference to method {\tt String.substring}.}
\label{fig:dotty-method-ref}
\end{subfigure}\hfil 
\vspace{1mm}
\begin{subfigure}{0.48\textwidth}
\begin{lstlisting}[language=kotlin, basicstyle=\tiny\ttfamily]
fun test() {
    val x: Iterable<HashSet<Number>> = TODO()
    val y: HashSet<Number> = TODO()
    // Type mismatch: inferred type is Number
    val res: List<HashSet<Number>> = x.minus(y)
}
// Definition of the API
package kotlin.collections;

operator fun <T> Iterable<T>.minus(element: T): List<T>
operator fun <T> Iterable<T>.minus(elements: Iterable<T>): List<T>
\end{lstlisting}
\caption{\href{https://youtrack.jetbrains.com/issue/KT-57596}{KT-57596}:
Resolving wrong method in the presence of operator overloading.}
\label{fig:kotlin-operator}
\end{subfigure}\hfil 
\begin{subfigure}{0.48\textwidth}
\begin{lstlisting}[language=Java, basicstyle=\tiny\ttfamily]
import com.google.common.collect.HashBasedTable;
 void test() {
    Number x = HashBasedTable.<Number, Number, Number>create().get(null, null);
  }
}
// Definition of the API
package com.google.common.collect;
interface Table<R,C,V> {
  default V get(Object x, Object y);
}
class HashBasedTable<R,C,V> implements Table<R,C,V> {
  static <R,C,V> HashBasedTable<R,C,V> create();
}
\end{lstlisting}
\caption{\href{https://issues.apache.org/jira/browse/GROOVY-11012}{GROOVY-11012}:
A bug in the treatment of default methods leads to an UTCE.}
\label{fig:groovy-bug-default}
\end{subfigure}
\hspace{1mm}
\begin{subfigure}{0.48\textwidth}
\begin{lstlisting}[language=scala, basicstyle=\tiny\ttfamily]
import java.util.Map
import org.apache.commons.lang3.Validate.notEmpty;
def test(): Unit = {
  val x: Map[String, String] = ???
  val res: Map[String, String] = notEmpty[Map[String, String]](x, "foo");
}
// Definition of the API
package org.apache.commons.lang3
object Validate {
  def notEmpty[T](x: Array[T]): T
  def notEmpty[T <: CharSequence](chars: T): T
  def notEmpty[T <: Map[?, ?]](map: T): T
}
\end{lstlisting}
\caption{\href{https://github.com/lampepfl/dotty/issues/17412}{DOTTY-17412}:
Dotty is unable to call an overloaded method with bounded type parameters.}
\label{fig:dotty-bounded-overload}
\end{subfigure}
\vspace{-2.0mm}
\caption{Sample test programs that trigger typing bugs.}
\vspace{2mm}
\end{figure*}


\point{Figure~\ref{fig:dotty-method-ref}}
This is a regression introduced in Scala 3.
The code creates a reference to 
an overloaded method of class {\tt String}
named {\tt substring}
and passes it as an argument
to the higher-order function {\tt identity}
which expects something of type {\tt Function1[? >: Int, String]}
(line 3).
Although the intention is
to make a reference to
the overloaded method
defined on line 7,
Dotty rejects the program with an error of the form:
\textit{``None of the overloaded alternatives of method substring match
expected type''}.
Interestingly,
when the expected type is {\tt Function1[Int, String]},
Dotty resolves the correct method as expected.

\point{Figure~\ref{fig:kotlin-operator}}
The code demonstrates an issue
in the design of Kotlin.
As with C++ and other languages,
Kotlin supports operator overloading.
The standard library of Kotlin
defines two overloaded methods for
the operator {\tt -} (minus).
When calling method {\tt minus},
{\tt kotlinc} decides to resolve the method
defined on line 11.
However,
this decision leads to a type mismatch,
as the inferred type of the method call becomes {\tt List<Number>},
while the expected type is {\tt List<HashSet<Number>{>}}.
The compiler should have resolved
the overloaded variant on line 10,
where the type variable {\tt T} is instantiated by the
type {\tt HashSet<Number>}.
This issue prevents the removal of polymorphic items
from mutable collections
(e.g., lists, sets)
via the corresponding operator,
i.e., {\tt x - y}.

\point{Figure~\ref{fig:groovy-bug-default}}
In this example,
the compiler erroneously rejects a well-typed program.
The root cause of this failure
lies in the way Groovy
treats interfaces with default methods
(see lines 8--10).
In Groovy,
an interface with a default method is implemented
as a trait,
a structural construct that allows
composition of behaviors
and implementation of interfaces.
{\tt groovyc} fails to
propagate the given type arguments
when calling {\tt create}
(line 3)
to the trait that implements the
interface {\tt Table} (lines 8, 12).
Consequently,
while checking the method call on line 3,
{\tt groovyc} infers the
return type of the method {\tt get} as {\tt Object}.
This leads to an UCTE,
as the expected type is {\tt Number}.

\point{Figure~\ref{fig:dotty-bounded-overload}}
This program calls an overloaded method
named {\tt notEmpty} that comes from
the {\tt org.apache.commons:commons-lang3} library (line 5).
The library contains a few polymorphic variants
of method {\tt notEmpty};
each of them defines a type parameter
with a unique upper bound
(lines 10--12).
The intention is to call the third overloaded alternative
(line 12),
as the explicit type argument of the method call on line 5
comes in line with the bound
of the underlying type parameter 
(i.e., {\tt Map[?,?]}).
Nevertheless,
an issue in Dotty's overload resolution procedure makes
the compiler reject the program by
reporting an ambiguous method call.
This issue stems from
the design choice of Dotty developers
to check bounds only after type checking
(to avoid cycles due to recursive upper bounds).
Therefore,
bounds do not influence overload resolution.
In this context,
it is impossible to properly call
the intended method from the given library.
Surprisingly,
when the definition on line 10 is removed,
the compiler behaves as expected.

\section{Related Work}
\label{sec:related}
\vspace{-2mm}

\point{Generative compiler testing}
{\it Generative compiler testing}
is an umbrella term for
methods that construct test programs
completely from scratch.
Csmith~\cite{csmith}
is the most influential program generator
which produces well-typed C programs while avoiding undefined behaviour.
Many subsequent generators have leveraged the power
of Csmith to
test other compilers (e.g., OpenCL)~\cite{many-core}
and compiler components~\cite{link-time},
or enhance the diversity of the generated programs~\cite{csmithedge,csmithedge2}.
A more recent program generator for C/C++
programs, YARPGen~\cite{yarpgen},
produces programs with 
complex arithmetic expressions
that are more likely to trigger optimization bugs.
Its re-implementation~\cite{yarpgen2} focuses on
validating loop optimizers.

A key challenge
associated with generative compiler testing
is the construction of code generators for syntactically-
and semantically-valid programs
that help test
beyond the front-end of a compiler.
This typically involves a large amount of
engineering effort,
while the corresponding implementation
is usually crafted
for a single target language.
Exploiting well-formed definitions
taken from existing software libraries
allows us to synthesize programs from scratch,
but at the same time,
easily port our program generator to multiple languages.

\point{Mutation-based compiler testing}
Mutation-based compiler testing
produces new test programs
by modifying existing ones.
The most effective compiler testing method
that lies in this category is
{\it equivalence modulo input (EMI)}~\cite{emi,deep-emi,live-emi}.
EMI profiles the execution of a seed program
and applies a set of transformations
so that the resulting programs
have the same output as the original one.
Applying semantics-preserving transformations
to existing programs
has been shown highly effective 
in the graphics driver domain~\cite{shader,spirv-fuzz}.

A primary limitation of mutation-based testing
is that its effectiveness is limited
to the quality of the available seed programs.
Our work tackles this limitation,
as mainstream languages come with a rich library ecosystem.
Library APIs expose a wide variety
of advanced features
that are more likely to trigger typing bugs~\cite{typing-study}.

\point{Program enumeration for compiler testing}
A number of {\it enumeration techniques}
have been developed for detecting issues
in deep compiler phases,
such as optimizations and code generation.
{\it Skeletal program enumeration (SPE)}~\cite{skeletal}
takes a program skeleton
and enumerates all program variants
that involve unique variable usage combinations.
SPE aims at exhibiting
diverse control- and data-dependence.
{\it Type-centric enumeration (TCE)}~\cite{kotlin-test}
finds crashes in the Kotlin compiler
by first generating a
reference program of a certain structure
and then replacing all program expressions
with other compatible expressions
of the same type.
In contrast to our work,
the outcome of TCE is not guaranteed to be type correct.

In this work,
we propose a different enumeration technique:
API enumeration focuses on
exposing diverse typing patterns
of API invocations to exercise 
interesting compiler behaviours
with regards to subtyping
and compile-time name resolution.

\point{Finding compiler typing bugs}
Most of the aforementioned techniques
target optimizing compilers.
The CLP-based ({\it Constraint Logic Programming})
program generator proposed by~\citet{rust}
and~\heph~\cite{hephaestus}
represent the work most closely related to our approach.
The CLP-based method works by
encoding the semantics rules of the language
under test into a set of logical constraints.
Querying a CLP engine under the given constraints
enables the generation of well-typed or ill-typed programs.
Despite its application to the Rust compiler,
CLP-based program generation comes
with many caveats that limit
its generality and practicality,
e.g., poor performances,
and even non-termination.

\heph\ employs {\it type graphs} with the primary
goal of capturing (1) the dependencies
between type variables
(i.e., determining whether a type variable can be inferred by another),
and (2) the inferred or declaration type of local variables.
A type graph is constructed using an intra-procedural analysis
on an existing program.
\heph\ consults type graphs
to maintain type correctness in its type erasure mutation
(Section~\ref{sec:type-erasure}).
In contrast,
our API graph captures
(1) all API definitions (methods, fields) found in an API,
(2) what types are needed to perform an application or a field access,
and (3) what is the type of each application or field access.
An API graph is constructed by traversing a given API specification,
rather than a program.
The main purpose of API graphs is to
identify type inhabitants.

Section~\ref{sec:eval-thalia-hephaestus}
thoroughly compares our work with~\heph.
In short,
\tool's underlying process,
which relies on API enumeration rather than
randomized program generation,
yields programs with distinct characteristics
(smaller code size)
and enables exercising compiler regions
that have been previously unexplored by \heph.

\point{Component-based program synthesis}
Our work is also related to
program synthesis techniques
that generate small code fragments
using components of existing libraries.
The purpose of these techniques
is not to test compilers,
but rather to assist developers
in programming tasks via library code reuse.
In this context,
a developer provides
an incomplete expression~\cite{partial-expressions,insynth}
or a method signature~\cite{rest-synthesis,petri-net,type-refinement}.
A program synthesis tool then produces
a ranked list of implementation sketches
that better match the developer's intent.

The idea of labelling~\graph s
with type substitutions shares
similarities with a graph data structure
called {\it equality-constrained tree automata (ECTA)}~\cite{ecta}.
In ECTA,
nodes are annotated with equality constraints.
While our~\graph\ constrains
type variable instantiations,
ECTA goes further by constraining arbitrary types,
such as matching an argument type with its corresponding
formal parameter type.
ECTA is more expressive than~\graph s,
because it solves the
more general program synthesis problem,
where the relevancy of the proposed solutions matters.
In contrast,
the primary use of~\graph s is much simpler:
we aim to identify chains of method calls/field accesses
of a certain type.
This allows for
(1) a compact representation,
especially the treatment of polymorphic types,
(2) an efficient enumeration done
by~\emph{standard} graph reachability algorithms
(e.g., Yen's algorithm),
and (3) identification of variable-length chains of method calls/field accesses,
in contrast to ECTA's fixed size approach.

\graph\ is inspired by {\sc prospector}~\cite{jungloid}.
{\sc prospector} introduces the {\it signature graph},
which treats every API component as a unary function
that takes an input type (receiver type),
and produces an output type (return type).
Contrary to our~\graph,
{\sc prospector} does not handle parametric polymorphism
which is a key language feature for revealing
typing bugs~\cite{typing-study}.
To handle parametric polymorphism,
we enrich {\sc prospector}
with Algorithm~\ref{alg:find-paths}
and type substitution labels.
A generalization of {\sc prospector}
is {\sc SyPet}~\cite{petri-net},
which models the structure of an API
using a Petri-net.
\citet{type-refinement} extend {\sc SyPet}
by introducing the {\it type-guided abstraction refinement}
that allows program synthesis over polymorphic types
using an SMT encoding.
Finally,
InSynth~\cite{insynth} assigns weights to both API definitions and types,
modeling program synthesis as an optimization problem.
InSyth leverages these weights
to guide its search for type inhabitants,
selecting expressions that minimize a specific weight function.

The goal of these techniques is different
from ours.
They strive to synthesize the most optimal solution
e.g., in terms of code size,
number of API method invocations,
type distance, or
users' intent.
Also many of these tools
lack support for parametric polymorphism,
or exhibit high running times.
Such challenges can make
these tools less suitable
for our compiler testing scenarios.

SyRust~\cite{rust-lib} is a semantic-aware program
synthesis technique for testing Rust libraries.
Using a manually-generated code template and inputs,
it encodes the typing rules of Rust 
(w.r.t. ownership, variable lifetime)
into a satisfiability problem
and synthesizes test cases of increasing size.
Contrary to our work,
SyRust is semi-automatic and faces scalability issues,
because it is capable of handling
15 API definitions (e.g., methods) per library.
Also,
SyRust's focus lies in identifying
memory-safety bugs in library implementations,
rather than producing test cases
that showcase diverse typing patterns
(e.g., type inference reasoning)
to test compilers.

\section{Conclusion}
\label{sec:conclusion}

We have presented
an API-driven program synthesis approach
for testing the implementation of compilers'
static typing procedures.
Our method harnesses the ubiquity and complexity
of APIs extracted from established software libraries,
and synthesizes concise client programs
that employ API entities
(e.g., types, functions, fields)
through different typing patterns.
This helps us exercise a broad spectrum of
type-related compiler functionalities,
without the need to generate
complex API definitions ourselves.
Our evaluation on the compilers
of Scala, Kotlin, and Groovy has shown the
effectiveness of our approach.
Indeed, our implementation has uncovered $\ttotal$ bugs
($\treal$ bugs are either confirmed or fixed),
$\empirical{51}$ of which could not have been detected by prior work.

Our API-driven program synthesis approach
opens up further opportunities
for combining synthetic and real-world code
to discover bugs in compiler implementations.
For example,
in addition to generating client programs,
we can extend our approach to create intricate
definitions (e.g., classes)
that build upon existing ones found in an API
(e.g., through inheritance constructs).
We believe that
this extension will enable
the identification of new compiler bugs.


\bibliographystyle{ACM-Reference-Format}
\bibliography{main}

\end{document}